\newcommand{\dist}{{\sf dist}}
\newtheorem{lemma}{Lemma}
\numberwithin{lemma}{section}
\newtheorem{theorem}[lemma]{Theorem}
\newtheorem{theorem*}[lemma]{Theorem*}
\newtheorem{corollary}[lemma]{Corollary}
\newtheorem{definition}[lemma]{Definition}
\newtheorem{observation}[lemma]{Observation}
\title{\LARGE Efficient Approximation for Subgraph-Hitting Problems in Sparse Graphs and Geometric Intersection Graphs}
\author{Zden{\v{e}}k Dvo{\v{r}}{\'a}k\footnote{Charles University, Prague, Czech Republic. Email: \texttt{rakdver@iuuk.mff.cuni.cz}} \and Daniel Lokshtanov\footnote{University of California, Santa Barbara, USA. Email: \texttt{daniello@ucsb.edu}} \and Fahad Panolan\footnote{IIT Hyderabad, India. Email: \texttt{fahad@cse.iith.ac.in}} \and Saket Saurabh\footnote{Institute of Mathematical Sciences, Chennai, India. Email: \texttt{saket@imsc.res.in}} \and Jie Xue\footnote{NYU Shanghai, China. Email: \texttt{jiexue@nyu.edu}} \and Meirav Zehavi\footnote{Ben-Gurion University, Israel. Email: \texttt{meiravze@bgu.ac.il}}}
\date{}
\begin{document}


\maketitle

\begin{abstract}
We investigate a fundamental vertex-deletion problem called \textsc{(Induced) Subgraph Hitting}: given a graph $G$ and a set $\mathcal{F}$ of forbidden graphs, the goal is to compute a minimum-sized set $S$ of vertices of $G$ such that $G-S$ does not contain any graph in $\mathcal{F}$ as an (induced) subgraph. This is a generic problem that encompasses many well-known problems that were extensively studied on their own, particularly (but not only) from the perspectives of both approximation and parameterization.

In this paper, we study the approximability of the problem on a large variety of graph classes.
Our first result is a \textit{linear-time} $(1+\varepsilon)$-approximation reduction from \textsc{(Induced) Subgraph Hitting} on any graph class $\mathcal{G}$ of bounded expansion to the same problem on \textit{bounded degree} graphs within $\mathcal{G}$.
This directly yields linear-size $(1+\varepsilon)$-approximation lossy kernels for the problems on any bounded-expansion graph classes.
Our second result is a \textit{linear-time} approximation scheme for \textsc{(Induced) Subgraph Hitting} on any graph class $\mathcal{G}$ of polynomial expansion, based on the local-search framework of Har-Peled and Quanrud [SICOMP 2017].
This approximation scheme can be applied to a more general family of problems that aim to hit all subgraphs satisfying a certain property $\pi$ that is efficiently testable and has bounded diameter.
Both of our results have applications to \textsc{Subgraph Hitting} (not induced) on wide classes of geometric intersection graphs, resulting in linear-size lossy kernels and (near-)linear time approximation schemes for the problem.
\end{abstract}




\section{Introduction}

The generic definition of any vertex-deletion problem  is: given an (undirected) $n$-vertex graph $G$ (taken from some graph class), find a minimum-sized vertex subset $S \subseteq V(G)$ of $G$ such that $G-S$ satisfies some property, where $G-S$ is the graph obtained from $G$ by deleting all vertices in $S$. 
Often, the desired property is closed under taking (possibly induced) subgraphs. Whenever this is the case, the vertex-deletion problem can be formulated as hitting certain forbidden (induced) subgraphs of $G$ using fewest vertices.
This motivates the following generic vertex-deletion problem,  called \textsc{(Induced) Subgraph Hitting}.

\medskip
\noindent
\fbox{
    \parbox{0.97\textwidth}{
    \textsc{(Induced) Subgraph Hitting} \\
    \textbf{Input:} A graph $G$ and a set $\mathcal{F}$ of graphs called \textit{forbidden patterns}.\\
    \textbf{Goal:} Compute a subset $S \subseteq V(G)$ of minimum size such that $G-S$ does not contain any forbidden pattern $F \in \mathcal{F}$ as a (induced) subgraph.}
}
\medskip

In particular, by setting the forbidden set $\mathcal{F}$ to be various fixed (finite) sets, it is seen that \textsc{(Induced) Subgraph Hitting} generalizes many fundamental vertex-deletion problems, which were previously extensively studied on their own in the literature, particularly (but not only) from the perspectives of approximation and parameterization.
To name a few, these include \textsc{Vertex Cover}, \textsc{(Induced) $P_k$-Hitting} where $P_k$ is the path of length $k$~\cite{brevsar2013vertex,gupta2019losing,lee2017partitioning} (which subsumes {\sc Cluster Vertex Deletion}~\cite{aprile2021tight,fiorini2020improved,you2017approximate}), \textsc{Triangle Hitting} \cite{lokshtanov2022subexponential,lokshtanov2023framework} or more generally \textsc{(Induced) $C_k$-Hitting} where $C_k$ is the cycle of length $k$~\cite{groshaus2009cycle,pilipczuk2011problems}, \textsc{$K_k$-Hitting} where $K_k$ is the clique of size $k$~\cite{fiorini2018approximability}, \textsc{(Induced) Biclique Hitting}~\cite{goldmann2021parameterized}, \textsc{Component Order Connectivity}~\cite{drange2016computational,gross2013survey,kumar20172lk}, \textsc{Degree Modulator}~\cite{betzler2012bounded,ganian2021structural,gupta2019losing}, and \textsc{Treedepth Modulator}~\cite{bougeret2019much,eiben2022lossy,gajarsky2017kernelization}. We elaborate on related works, particularly on these specific problems,  in Appendix~\ref{sec-related}. 

Our main contribution is a \textit{linear-time} approximation-preserving reduction from \textsc{(Induced) Subgraph Hitting} on any graph class $\mathcal{G}$ of bounded expansion to the same problem on bounded degree graphs within $\mathcal{G}$. This yields a novel algorithmic technique to design (efficient) approximation schemes on very broad graph classes, well beyond the state-of-the-art, on which we elaborate below. First, we start with some relevant background for context and motivation.



\medskip
\noindent{\bf The (arguably grim) starting point of our research.} The seminal result of Lund and Yannakakis \cite{lund1993approximation} shows that \textsc{(Induced) Subgraph Hitting}, for any fixed nontrivial\footnote{Here ``nontrivial'' means that $\mathcal{F}$ does not contain the single-vertex graph.} $\mathcal{F}$, is APX-hard, while the problem admits a polynomial-time constant-approximation algorithm for any fixed $\mathcal{F}$.
As such, when the input $G$ is a general graph, one cannot expect the existence of polynomial-time approximation schemes (PTASes)---that is, polynomial-time $(1+\varepsilon)$-approximation algorithms---even with a fixed $\mathcal{F}$.
Now, there is a natural question to be asked: on which graph classes \textsc{(Induced) Subgraph Hitting} admits efficient approximation schemes?
The classical work of Baker \cite{Baker94} gives some answers to this question.
Specifically, Baker's approach yields an approximation scheme for \textsc{(Induced) Subgraph Hitting} on planar graphs with running time $f(\varepsilon,\mathcal{F}) \cdot n^{O(1)}$ for some function $f$, which can be extended to minor-free graphs \cite{dawar2006approximation,dvovrak2020baker}.
In addition, Baker's approach (together with some preprocessing) also results in an approximation scheme with the same running time for \textsc{Subgraph Hitting} (not induced) on unit-disk graphs \cite{FominLS18}.
However, beyond these results, little was known about the approximability of the problem (even for very special cases of it). 

Our purpose is to study \textsc{(Induced) Subgraph Hitting} (and obtain positive results) on substantially broader graph classes.
Specifically, we consider graph classes of \textit{bounded expansion} and other important graph classes 
related to them.
Graph classes of bounded expansion were introduced by Ne{\v{s}}et{\v{r}}il and De Mendez \cite{nevsetvril2008grad,nevsetvril2008grad2} as a general model of structurally sparse graph classes, generalizing many well-studied graph classes, such as graphs excluded a (topological) minor, graphs of bounded degree, graphs with bounded stack or queue number, many graph classes defined geometrically, etc.
There has been an extensive study on graphs of bounded expansion from both combinatorial perspective \cite{DorakN2016,nevsetvril2008grad,nevsetvril2011nowhere,nevsetvril2012characterisations,reidl2019characterising,zhu2009colouring} and algorithmic perspective \cite{dvovrak2013constant,dvovrak2021approximation,dvovrak2013testing,dvovrak2021approximation2,grohe2017deciding,har2017approximation,nevsetvril2008grad2}.
We refer the interested reader to the book of Ne{\v{s}}et{\v{r}}il and De Mendez \cite{nevsetvril2012sparsity} for a deeper introduction. Unfortunately, we cannot expect \textsc{(Induced) Subgraph Hitting} to admit efficient approximation schemes on a (general) graph class of bounded expansion, because the simplest special case, \textsc{Vertex Cover}, is already APX-hard even on bounded-degree graphs~\cite{dinur2005hardness}.

Particularly interesting subclass of bounded-expansion graphs are graph classes of \textit{polynomial expansion}.
Some well-studied examples include minor-free graphs, graphs drawn in the plane (or on a fixed surface) with a bounded number of crossings on each edge~\cite{nevsetvril2012characterisations}, $k$-nearest neighbor graphs of a point set in $\mathbb{R}^d$ for fixed $k$ and $d$~\cite{miller1997separators}, greedy Euclidean spanners~\cite{le2022greedy}, and many geometric intersection graphs without dense structures (discussed below), etc.
One feature that makes these graph classes especially important is their strong connection to graph separators.
It was shown~\cite{DorakN2016} that graph classes of polynomial expansion are exactly those having \textit{strongly sublinear separators}, which serve as a key ingredient needed in many efficient graph algorithms.

An important open question in the topic of graph sparsity is whether every optimization problem expressible in the first-order logic admits a PTAS on graph classes of polynomial expansion \cite{dvovrak2020baker,dvovrak2021approximation,dvovrak2021approximation2}.
For maximization problems, considerable progress has been made recently \cite{dvovrak2021approximation,dvovrak2021approximation2,galby2023polynomial,mezei2023ptas,romero2021treewidth}.
Notably, it was shown by Dvo{\v{r}}{\'a}k~\cite{dvovrak2021approximation} that every monotone maximization problem expressible in the first-order logic admits a QPTAS on polynomial-expansion graph classes (and admits a PTAS in many special cases).
However, the Baker-like approaches used in the literature \cite{dvovrak2021approximation,dvovrak2021approximation2,galby2023polynomial} do not work for minimization problems.
Therefore, the question for minimization problems in this setting is more challenging and their approximability is less understood.

\subsection{Our results}

Surprisingly, we show that the inapproximability of the problem on any graph class of bounded expansion comes \textit{exactly} from the bounded-degree graphs in that class! In other words, if the bounded-degree instances in the class can be approximated efficiently, so do all instances in the class.
Formally, we prove the following theorem, which is the first main contribution of this paper (here ``hereditary'' means closed under taking induced subgraphs).

\begin{restatable}{theorem}{mainthm} \label{thm-main}
Let $\mathcal{G}$ be any hereditary graph class of bounded expansion.
If \textsc{(Induced) Subgraph Hitting} on $\mathcal{G}$ admits an approximation scheme with running time $f_0(\varepsilon,\mathcal{F},\Delta) \cdot n^c$ for some function $f_0$ and some constant $c \geq 1$, then the same problem also admits an approximation scheme with running time $f(\varepsilon,\mathcal{F}) \cdot n^c$ for some function $f$, where $n$ is the number of vertices in the input graph $G \in \mathcal{G}$ and $\Delta$ is the maximum degree of $G$.
\end{restatable}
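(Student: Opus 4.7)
I would prove the theorem by constructing, in linear time, a set $S_0 \subseteq V(G)$ of ``unavoidable'' vertices such that (i) $|S_0| \leq \varepsilon\cdot\mathrm{OPT}(G)$ and (ii) the induced subgraph $G' := G[V(G) \setminus S_0]$ has maximum degree bounded by some $\Delta = \Delta(\varepsilon, \mathcal{F}, \mathcal{G})$. Since $\mathcal{G}$ is hereditary, $G' \in \mathcal{G}$, so running the hypothesized bounded-degree scheme on $G'$ with accuracy $\varepsilon$ returns $S'$ with $|S'| \leq (1+\varepsilon)\cdot\mathrm{OPT}(G') \leq (1+\varepsilon)\cdot\mathrm{OPT}(G)$, and $S_0 \cup S'$ is then a feasible solution of size at most $(1+O(\varepsilon))\cdot\mathrm{OPT}(G)$; rescaling $\varepsilon$ gives the claimed approximation factor within the claimed running time $f(\varepsilon,\mathcal{F}) \cdot n^c$.

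\textbf{Building $S_0$.} The combinatorial heart of the argument is showing that a vertex $v$ whose degree in the current graph exceeds the threshold $\Delta$ can always be charged to the optimum. My plan is to process vertices in an order derived from a low-treedepth coloring (available with boundedly many colors in any bounded-expansion class), and for each high-degree $v$ examine the family $\mathcal{C}_v$ of copies of some $F \in \mathcal{F}$ that contain $v$. By the standard bounded-expansion subgraph-count bound, $\sum_v |\mathcal{C}_v| = O_{\mathcal{F},\mathcal{G}}(n)$, so only few vertices have large $\mathcal{C}_v$. When $|\mathcal{C}_v|$ is small, the ``excess'' neighbors of $v$ are irrelevant to $F$-hitting and can be pruned locally without affecting the problem; when $|\mathcal{C}_v|$ is large, I would extract from $\mathcal{C}_v$ a packing of $\Omega(|\mathcal{C}_v|)$ copies that are disjoint outside $\{v\}$, add $v$ to $S_0$, and charge $v$ against these witnesses.

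\textbf{Main obstacle.} The crux is the global amortization: ensuring $|S_0| \leq \varepsilon \cdot \mathrm{OPT}(G)$ via the lower bound $\mathrm{OPT}(G) \geq \nu$, where $\nu$ is the maximum number of vertex-disjoint forbidden copies in $G$. The technically heaviest step will be a \emph{local packing lemma} roughly stating that any vertex $v$ of degree at least $\Delta(\varepsilon,\mathcal{F},\mathcal{G})$ that lies in some $F$-copy in fact lies in $\Omega(1/\varepsilon)$ such copies that are pairwise disjoint outside $\{v\}$. I expect this to require a sunflower-style argument fused with the fact that shallow minors of graphs in $\mathcal{G}$ have bounded density. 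Making these witness copies extractable in linear amortized time (using bounded-outdegree orientations and bucketing data structures from the bounded-expansion toolkit) and stitching the per-iteration packings into a single vertex-disjoint packing inside $G$ are the principal further hurdles I foresee.
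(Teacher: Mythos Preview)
Your high-level plan --- peel off a small set $S_0$ so that the residual induced subgraph has bounded degree --- is indeed the right shape, and it matches what the paper does. But the concrete mechanism you propose for building $S_0$ has a genuine gap: the ``local packing lemma'' is simply false on $G$ as stated, and the fallback ``prune excess neighbors'' is not implementable when only induced subgraphs are allowed.

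For a concrete obstruction, take $\mathcal{F}=\{C_3\}$ and let $G$ (planar, hence in a bounded-expansion class) consist of a vertex $v$ with neighbors $u_1,\dots,u_\Delta$, a single edge $u_1u_2$, and for each $i\ge 3$ a pendant triangle $u_iw_ix_i$. Then $v$ has degree $\Delta$ and lies in exactly one triangle, so your packing lemma fails. You cannot delete the neighbors $u_3,\dots,u_\Delta$ (each lies in a triangle), and you cannot delete the edges $(v,u_i)$ either: the resulting graph is not an induced subgraph and need not lie in $\mathcal{G}$ (and for \textsc{Induced Subgraph Hitting} edge deletions can even create new forbidden copies). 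So the dichotomy ``large $|\mathcal{C}_v|$ vs.\ prune'' gets stuck on $v$. Note also that in your large-$|\mathcal{C}_v|$ branch, a sunflower among the copies through $v$ may have core strictly larger than $\{v\}$, so you do not get $\Omega(|\mathcal{C}_v|)$ petals disjoint outside $\{v\}$ for free.

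What the paper adds is precisely the preprocessing that makes a statement of your packing-lemma flavor \emph{become true}. It first defines \emph{heavy sets} $X$ (cores of $\delta\gamma^{|X|}$-sunflowers in $\mathcal{V}$, with a size-dependent threshold) and iteratively deletes ``redundant'' vertices --- those all of whose copies contain some minimal heavy set --- while preserving heaviness, yielding $G_1$. Only after this are high-degree vertices removed. The crucial structural lemma is that in $G_1$ every high-degree vertex $\gamma$-\emph{witnesses} (not ``contains''!) many pairwise disjoint copies; witnessing is via weak reachability, so in the $C_3$ example above $v$ now witnesses the disjoint triangles $u_iw_ix_i$ even though none of them contains $v$. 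The charging that you flag as a ``hurdle'' is then carried out against an optimal solution through weak $2\gamma$-reachability: each optimum vertex is charged at most $\mathsf{wcol}_{2\gamma}(G)$ times, which is exactly what turns the per-vertex witnesses into the global bound $|V^\ast|\le \tfrac{\varepsilon}{4}\cdot\mathsf{opt}$. The size-dependent threshold $\delta\gamma^{|X|}$ is not cosmetic: it is what forces every vertex to hit only $O_{\delta,\gamma}(1)$ minimal heavy sets, which in turn drives the representative-set argument bounding the ``dangerous'' neighbors in $G_1$. None of these ingredients has an analogue in your plan, and the $C_3$ example shows they cannot be bypassed by looking only at $\mathcal{C}_v$.
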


The reduction in Theorem~\ref{thm-main} consists of two steps.
Both steps are $(1+\varepsilon)$-approximation reductions and can be done in linear time.
The first step reduces an instance $(G,\mathcal{F})$ of \textsc{(Induced) Subgraph Hitting} on $\mathcal{G}$ to $r = O_\mathcal{F}(1)$ instances $(G,\mathcal{F}_1),\dots,(G,\mathcal{F}_r)$, where $\mathcal{F}_1,\dots,\mathcal{F}_r$ only depend on $\mathcal{F}$ and only consist of connected graphs.
Then the second step reduces each instance $(G,\mathcal{F}_i)$ to an instance $(G_i,\mathcal{F}_i)$ where $G_i$ is an induced subgraph of $G$ of degree $O_{\varepsilon,\mathcal{F}}(1)$.
The same reduction can also result in some variants of Theorem~\ref{thm-main}.
For example, it can be used to prove that an approximation scheme with running time $n^{f_0(\varepsilon,\mathcal{F},\Delta)}$ implies an approximation scheme with running time $n^{f(\varepsilon,\mathcal{F})}$, an approximation scheme with running time $f_0(\varepsilon,\mathcal{F},\Delta) \cdot n^{g(\varepsilon)}$ implies an approximation scheme with running time $f(\varepsilon,\mathcal{F}) \cdot n^{g(\varepsilon)}$, an approximation scheme with running time $f_0(\varepsilon,\mathcal{F},\Delta) \cdot n^{g(\mathcal{F})}$ implies an approximation scheme with running time $f(\varepsilon,\mathcal{F}) \cdot n^{g(\mathcal{F})}$, etc.
Although we shall only use Theorem~\ref{thm-main} in this paper (as it results in the best bounds), we believe that the other variants can also find their applications in the future.

A direct application of (the second step of) our reduction in Theorem~\ref{thm-main} is a $(1+\varepsilon)$-approximate lossy kernel of \textit{linear} size for the problem with a fixed $\mathcal{F}$ that only contain connected graphs. We now take a short detour and define the notion of lossy kernels.  Kernelization is a subfield of Parameterized Complexity, that provides a mathematical framework to analyze polynomial time preprocessing~\cite{CyganFKLMPPS15,fomin2019kernelization}.   Let $g : \mathbb{N} \rightarrow \mathbb{N}$ be a function. A {\em kernel of size $g(k)$} for a parameterized problem $\Pi$ is a polynomial time algorithm that takes as input an instance $(I,k)$ and outputs another instance $(I',k')$ such that $(I,k) \in \Pi$ if and only if $(I',k') \in \Pi$ and $|I'|+k' \leq g(k)$. If $g(k)$ is a linear, quadratic or polynomial function of $k$, we say that this is a linear, quadratic or polynomial kernel, respectively. Lokshtanov et al.~\cite{DBLP:conf/stoc/LokshtanovPRS17} defined the notion of lossy kernels that combines well with approximation algorithms. Informally speaking, an $(\alpha)$-lossy kernel of size $g(k)$ is a polynomial time algorithm that, given an instance $(I,k)$, outputs an instance $(I',k')$ such that $|I'|+ k' \leq g(k)$ and any $c$-approximate solution $s'$ to the instance $(I',k')$ can be turned in polynomial time into a $(c \cdot \alpha)$-approximate solution $s$ to the original instance $(I,k)$. Our result regarding lossy kernels is as follows.


\begin{restatable}{theorem}{kernel} \label{thm-kernel}
Let $\mathcal{G}$ be any graph class of bounded expansion.
For any fixed (finite) $\mathcal{F}$ consisting of connected graphs, \textsc{(Induced) Subgraph Hitting} on $\mathcal{G}$ with forbidden set $\mathcal{F}$ admits a $(1+\varepsilon)$-approximation lossy kernel of size $f(\varepsilon) \cdot k$ for some function $f$.
The kernelization algorithm runs in $g(\varepsilon) \cdot n$ time for some function $g$.
\end{restatable}

The linear-size lossy kernels in Theorem~\ref{thm-kernel} (together with the first step of our reduction in Theorem~\ref{thm-main}) already give approximation schemes with running time $f(\varepsilon,\mathcal{F}) \cdot n^{O(1)}$ for \textsc{(Induced) Subgraph Hitting} on any graph class of polynomial expansion.
Indeed, as long as a graph admits (strongly) sublinear separators, we can apply the standard Lipton-Tarjan approach to compute a subset $S \subseteq V(G)$ of size $\varepsilon n$ such that each connected component of $G - S$ has size $f(\varepsilon)$.
We then include $S$ in our solution together with an optimal solution for every component of $G - S$, which can be computed independently.
As the lossy kernels in Theorem~\ref{thm-kernel} are of linear size, we have $\mathsf{opt} = \Omega(n)$ for the kernel instance, and therefore the solution we compute is a $(1+O(\varepsilon))$-approximation of the optimal one.

A downside of the approximation schemes obtained from Theorem~\ref{thm-kernel} is that their running time is super-linear in $n$, since computing separators on a general graph class of polynomial expansion is time-consuming.
Furthermore, it seems difficult to generalize the results of Theorems~\ref{thm-main} and~\ref{thm-kernel} to other related first-order optimization problems on polynomial-expansion graph classes.
Therefore, in our second main result, we directly give general approximation schemes, based on the local-search framework of Har-Peled and Quanrud \cite{har2017approximation}, that can be implemented linear time and apply to a large family of first-order optimization problems on graph class of polynomial expansion, including \textsc{(Induced) Subgraph Hitting}. Note that the Theorem~\ref{thm-kernel} applies on all graphs classes of bounded expansion, however, the approximation schemes that we propose only apply to classes of polynomial expansion. As observed earlier, this restriction on graph classes is unavoidable  because the simplest special case of our problem, \textsc{Vertex Cover}, is already APX-hard even on bounded-degree graphs~\cite{dinur2005hardness}.



In order to present the most general version of our results, we first define a general family of hitting set problems on graphs that subsumes  \textsc{(Induced) Subgraph Hitting}. 
Let $\pi$ be a property of pairs $(G,Z)$, where $G$ is a graph and $Z\subseteq V(G)$ is non-empty, i.e., $\pi$ is a class of such pairs closed under isomorphisms. More precisely, given isomorphic pairs $(G_1,Z_1)$ and  $(G_2,Z_2)$ (that is, $G_1$ is isomorphic to $G_2$ by a function $f$ such that $f(Z_1)=Z_2$), either  both pairs belong to $\pi$ or both don’t. 
The \emph{diameter} of the property $\pi$ is the maximum diameter of $G[Z]$ over all pairs $(G,Z)\in \pi$.
A set $S\subseteq V(G)$ is a \emph{$\pi$-hitting set} if $S\cap Z\neq\emptyset$ for all sets $Z\subseteq V(G)$ such that $(G,Z)\in \pi$.
The $\pi$-\textsc{Hitting} problem asks us to find a minimum-size $\pi$-hitting set for a given graph $G$.
For example:
\begin{itemize}
\item For a finite set $\mathcal{F}$ of connected graphs, if $\pi$ is the property of pairs $(G,Z)$ such that $G[Z]$ contains a graph in $\mathcal{F}$ as a subgraph, and $Z$ is minimal (that is, for any $z\in Z$, $G[Z\setminus \{z\}]$ does not contain any graph in $\mathcal{F}$ as a subgraph), then  
we obtain the \textsc{Subgraph Hitting} problem with forbidden set $\mathcal{F}$, and the diameter of $\pi$ is
the maximum diameter of a graph in $\mathcal{F}$.
Similarly, if $\pi$ is the property that $G[Z]$ is isomorphic to a graph in $\mathcal{F}$,
we obtain \textsc{Induced Subgraph Hitting}. 
\item Let $r$ be a positive integer, and let $\pi$ be satisfied by $(G,Z)$ if and only if  there exists a vertex $z\in V(G)$ such that $Z$ consists
exactly of vertices at distance at most $r$ from $z$ in $G$.  Then $S$ is a $\pi$-hitting set if and only if it is an \emph{$r$-dominating set},
i.e., every vertex of $G$ is at distance at most $r$ from $S$.  The diameter of this property $\pi$ is $2r$.
\end{itemize}
Thus,  $\pi$-\textsc{Hitting} problem is a common generalization of both covering and packing problems. 
Although, we were not able to extend Theorem~\ref{thm-main} to the $\pi$-\textsc{Hitting} problem for properties $\pi$ that are of bounded diameter (which generalizes the problems  captured by Theorem~\ref{thm-main}), some of the ideas we developed in the proof apply in this setting also. These ideas allow us to show that the local-search approach of Har-Peled and Quanrud~\cite{har2017approximation} yields approximation schemes for the problem.
For a positive integer $c$, the \emph{$c$-local search} heuristic starts with $S=V(G)$ and repeats the following local improvement as long as possible:
If there exists a set $Y\subseteq V(G)$ of size at most $c$ such that $|S\triangle Y|<|S|$ and $S \triangle Y$ is a $\pi$-hitting set,
then replace $S$ by $S\triangle Y$.
While this na{\"i}ve algorithm of course fails to get a reasonable approximation in general,
as our second main result, we show that it succeeds for bounded diameter properties on any graph class of polynomial expansion.
In addition, if the property $\pi$ can be defined in first-order logic, we can implement the local-search procedure in linear time.
\begin{restatable}{theorem}{lsearchhitting}\label{thm-lsearch-hitting}
Let $\mathcal{G}$ be a graph class of polynomial expansion and $\pi$ be a property of finite diameter.
Then for every $\varepsilon>0$, there exists a constant $c > 0$ such that $c$-local search gives a $(1+\varepsilon)$-approximation solution for \textsc{$\pi$-Hitting} on $\mathcal{G}$.
Furthermore, if $\pi$ is first-order definable, the $c$-local-search algorithm can be implemented in $f(c) \cdot n$ time for some function $f$.
\end{restatable}

\noindent
When $\pi$ is not first-order definable but is efficiently testable in $\mathcal{G}$ (i.e., there exists a polynomial-time algorithm that decides, for a graph $G\in\mathcal{G}$ and a set $S\subseteq V(G)$, whether $S$ is a $\pi$-hitting set), we can implement $c$-local search in $n^{O(c)}$ time, which gives us a PTAS for \textsc{$\pi$-Hitting} on $\mathcal{G}$.

The property $\pi$ for \textsc{(Induced) Subgraph Hitting} (with a fixed forbidden set $\mathcal{F}$) is first-order definable.
Therefore, Theorem~\ref{thm-lsearch-hitting} directly gives us a \textit{linear-time} approximation scheme for the problem when $\mathcal{F}$ consists of connected graphs.
\begin{restatable}{corollary}{lsearchsub}\label{cor-lsearchsub}
\textsc{(Induced) Subgraph Hitting} on any graph class $\mathcal{G}$ of polynomial expansion admits an approximation scheme with running time $f(\varepsilon,\mathcal{F}) \cdot n$, where $n$ is the number of vertices in the input graph $G \in \mathcal{G}$.
\end{restatable}

Interestingly, the local-search approach also applies to the dual notion of packing. More precisely, 
a set $\mathcal{P}$ of disjoint subsets of vertices of $G$ is a \emph{$\pi$-packing}
if $(G,P)$ has the property $\pi$ for every $P\in\mathcal{P}$; it is an \emph{induced $\pi$-packing} if additionally no edge of $G$
has ends in distinct elements of $\mathcal{P}$.
The \textsc{$\pi$-Packing} problem asks us to find a maximum-size $\pi$-packing in a given graph $G$.
For example:
\begin{itemize}
\item For a graph $F$, if $\pi$ is the property that $G[P]$ is isomorphic to $F$,
we obtain the $F$-\textsc{matching} (and \textsc{induced} $F$-\textsc{matching}) problem
of finding the maximum number of disjoint (and non-adjacent) copies of $F$ in the input graph. When $F$ is a single vertex then the \textsc{induced} $F$-\textsc{matching} corresponds to the classical {\sc Independent Set} problem. 
\item Let $r$ be a positive integer, and let $\pi$ be satisfied by $(G,Z)$ if and only if there exists $z\in V(G)$ such that $Z$ consists
exactly of vertices at distance at most $r$ from $z$ in $G$.  Then a $\pi$-packing corresponds to a $2r$-independent set (a set
of vertices is \emph{$t$-independent} if the distance between any two of them is greater than $t$), and an induced $\pi$-packing
corresponds to a $(2r+1)$-independent set.
\end{itemize}
For the \textsc{(Induced) $\pi$-Packing} problem, we can run the following maximization variant of the $c$-local search algorithm:
Start with $\mathcal{P}=\emptyset$, and repeat the following local improvement step as long as possible:
If there exists $Y\subseteq \mathcal{P}$ of size less than $c$ such that
$\mathcal{P}\setminus Y$ can be extended to an (induced) $\pi$-packing $\mathcal{P}'$ of size greater than $|\mathcal{P}|$,
then replace $\mathcal{P}$ by $\mathcal{P}'$.
We can show that the local-search approach also results in approximation schemes for \textsc{(Induced) $\pi$-Packing} on graph classes of polynomial expansion, as long as $\pi$ is of finite diameter.
\begin{restatable}{theorem}{lsearchpacking}\label{thm-lsearch-packing}
Let $\mathcal{G}$ be a graph class of polynomial expansion and $\pi$ be a property of finite diameter.
Then for every $\varepsilon>0$, there exists a constant $c > 0$ such that $c$-local search gives $(1-\varepsilon)$-approximation solution for \textsc{(Induced) $\pi$-Packing} on $\mathcal{G}$,
Furthermore, if $\pi$ is first-order definable, the $c$-local-search algorithm can be implemented in $f(c) \cdot n$ time for some function $f$.
\end{restatable}

\noindent
Let us remark that the proof of Theorem~\ref{thm-lsearch-packing} is less involved than that of Theorem~\ref{thm-lsearch-hitting}, and is essentially implicit in~\cite{har2017approximation}.
So we view Theorem~\ref{thm-lsearch-packing} as a byproduct of the paper.

\subsection{Applications to geometric intersection graphs} \label{sec-app}
Geometric intersection graphs are graphs that depict the intersection patterns of geometric objects in a Euclidean space $\mathbb{R}^d$.
In a geometric intersection graph, each vertex corresponds to a geometric object in $\mathbb{R}^d$ and two vertices are connected by an edge if the two corresponding objects intersect.
In general, geometric intersection graphs are not sparse (as the geometric objects can stack in one place to form a large clique) and thus not of polynomial expansion.
However, most interesting classes of geometric intersection graphs are closely related to polynomial-expansion graphs in the sense that either the graph contains a large clique or it has strongly sublinear separators (i.e., belongs to a subclass of polynomial expansion).
Formally, we say a graph class $\mathcal{G}$ is of \textit{clique-dependent polynomial expansion} if any subclass of $\mathcal{G}$ with bounded clique size is of polynomial expansion.
Note that for \textsc{Subgraph Hitting} (not induced), large cliques are great for approximation, since we need to delete almost all vertices in the clique (so we might as well delete the entire clique).
We design a divide-and-conquer algorithm that can partition a graph into several $k$-cliques and a part without $k$-cliques (called a \textit{$k$-clique decomposition}) for any given integer $k \geq 1$.
Our algorithm runs in $f(k) \cdot n \log n + O(m)$ time when applying on a graph class of clique-dependent bounded expansion.
Using this algorithm, we can extend Theorem~\ref{thm-kernel} and Corollary~\ref{cor-lsearchsub} to any class of geometric intersection graphs that is clique-dependent polynomial expansion. Examples of such geometric intersection graphs include the following. 

\begin{itemize}
\item All classes of intersection graphs of any fat objects (i.e., convex objects whose diameter-width ratio is bounded). These classes are of clique-dependent polynomial expansion~\cite{de2020framework,dvovrak2021approximation,ErlebachJS05}, and include, for example, intersection graphs of balls, hypercubes, and similar convex objects.

\vspace{-0.5em}
\item The class of intersection graphs of pseudo-disks (i.e., topological disks in the plane satisfying that the boundaries of every pair of them are either disjoint or intersect twice).
We show that this class is also of clique-dependent polynomial expansion.
\end{itemize}
Both fat-object graphs and pseudo-disk graphs generalize the well-studied classes of disk graphs and unit-disk graphs.
Because these classes are of clique-dependent polynomial expansion, the aforementioned approximation scheme and lossy kernels on polynomial-expansion graph classes
 can also be applied to them.
Thus, applying Theorem~\ref{thm-kernel} and Corollary~\ref{cor-lsearchsub} yields the following. 

\begin{theorem}
\label{thm-geo}
    Let $\mathcal{G}$ be a class of fat-object graphs or the class of pseudo-disk graphs.
   \begin{enumerate}
   \setlength{\itemsep}{-2pt}
   \item  For any fixed (finite) $\mathcal{F}$ of connected graphs, \textsc{Subgraph Hitting} on $\mathcal{G}$ with forbidden list $\mathcal{F}$ admits a $(1+\varepsilon)$-approximation lossy kernel of size $f(\varepsilon) \cdot k$ for some function $f$.
    The kernelization algorithm runs in $g(\varepsilon) \cdot n \log n + O(m)$ time for some function $g$.
    \item \textsc{Subgraph Hitting} on $\mathcal{G}$ admits an approximation scheme with running time $f(\varepsilon,\mathcal{F}) \cdot n \log n + O(m)$ for some function $f$.
\end{enumerate}
\end{theorem}

Prior to this work, only special cases of the \textsc{Subgraph Hitting} problem have been considered on geometric intersection graphs beyond unit-disk graphs.
It was known that \textsc{Vertex Cover} admits PTASes on fat-object graphs \cite{ErlebachJS05,Leeuwen06,zhang2014minimum}.
Zhang et al. \cite{zhang2017ptas} showed that the \textsc{$P_k$-hitting} problem (also known as \textsc{Path Transversal}) admits a PTAS on unit-ball graphs.
Very recently, Lokshtanov et al.~\cite{lokshtanov2023framework} gave a general approach that leads to EPTASes for various cases of \textsc{Subgraph Hitting} on disk graphs when the forbidden list $\mathcal{F}$ contains a ``triangle-bundle'' and satisfies the so-called ``clone-closed'' property.
The approach of Lokshtanov et al.~relies heavily on these properties of the forbidden patterns as well as the plane geometry of disk graphs, and is thus unlikely to work for the general \textsc{Subgraph Hitting} problem or be generalized to geometric intersection graphs in higher dimensions (such as ball graphs).
In fact, Lokshtanov et al.~\cite{lokshtanov2023framework} asked explicitly as an open question whether one can obtain EPTASes for hitting any fixed forbidden pattern in disk graphs or more general classes of geometric intersection graphs.
Our result affirmatively answers this question, and improves the running time of their EPTASes (for \textsc{Subgraph Hitting} related problems) to near-linear.

\medskip
\noindent{\bf\em Application to string graphs.}
String graphs form the widest class of geometric intersection graphs in $\mathbb{R}^2$, which are defined by arbitrary connected geometric objects in $\mathbb{R}^2$.
This includes in particular intersection graphs of rectangles and segments (which are not covered by the aforementioned classes of geometric intersection graphs).
Unfortunately, string graphs are not of clique-dependent polynomial expansion.
But they are of \textit{biclique-dependent} polynomial expansion: any subclass with bounded biclique size is of polynomial expansion~\cite{fox2010separator}.
The approximation scheme on polynomial-expansion graph classes with running time $f(\varepsilon,\mathcal{F}) \cdot n$ also works on string graphs.
Exploiting this property, we can use Theorem~\ref{thm-kernel} and Corollary~\ref{cor-lsearchsub} to obtain $(2+\varepsilon)$-approximation lossy kernels and algorithms for \textsc{Bipartite Subgraph Hitting}, which is the special case of \textsc{Subgraph Hitting} where $\mathcal{F}$ contains at least one bipartite graph.

\begin{restatable}{theorem}{corstring} \label{thm-string}
    Let $\mathcal{G}$ be the class of string graphs.
       \begin{enumerate}
   \setlength{\itemsep}{-2pt}
    
   \item For any fixed (finite) set $\mathcal{F}$ of connected graphs among which at least one is bipartite, \textsc{Subgraph Hitting} on $\mathcal{G}$ with forbidden list $\mathcal{F}$ admits a $(2+\varepsilon)$-approximation lossy kernel of size $f(\varepsilon) \cdot k$ for some function $f$.
    The kernelization algorithm runs in $g(\varepsilon) \cdot n^2$ time for some function $g$.
    
   \item \textsc{Bipartite Subgraph Hitting} on $\mathcal{G}$ admits a $(2+\varepsilon)$-approximation algorithm with running time $f(\varepsilon,\mathcal{F}) \cdot n^2$ for some function $f$.

\end{enumerate}
\end{restatable}

\textsc{Bipartite Subgraph Hitting}, though more restrictive compared to the general \textsc{Subgraph Hitting} problem, already covers many interesting vertex-deletion problems, such as \textsc{Vertex Cover}, \textsc{$P_k$-Hitting}, \textsc{Component Order Connectivity}, \textsc{Degree Modulator}, and more.
While $(2+\varepsilon)$-approximation algorithms are less satisfying than approximation schemes, obtaining approximation schemes on string graphs is very difficult.
Indeed, we are not aware of any known approximation schemes for NP-hard problems on string graphs (in fact, even on the subclass of rectangle graphs or segment graphs).
Even worse, prior to this work, the only problems whose approximation ratio on string graphs is better than that on general graphs are \textsc{Independent Set} and \textsc{Chromatic Number} \cite{fox2011computing}, to the best of our knowledge.
Our result in Theorem~\ref{thm-string} adds many new problems to this small club.

\subsection{Subgraph isomorphism} \label{sec-addition}
\textsc{Subgraph Isomorphism} is a special case of \textsc{Subgraph Hitting}, which aims to find a subgraph of the input graph $G$ that is isomorphic to a given graph $H$.
When the number of vertices in $H$ is bounded by $k$, the problem is also called \textsc{$k$-Subgraph Isomorphism}.
Our aforementioned efficient $k$-clique decomposition algorithm straightforwardly gives us a byproduct, which is a \textsc{$k$-Subgraph Isomorphism} algorithm on any graph classes of clique-dependent polynomial expansion (and thus on the classes of fat-object graphs and pseudo-disk graphs) with running time $f(k) \cdot n \log n + O(m)$ (see Theorem~\ref{thm-kiso}). 
The \textsc{$k$-Subgraph Isomorphism} problem on fat-object graphs was considered very recently by Chan \cite{chan2023finding}, who gave an algorithm with running time $f(k) \cdot n \log n$ time, which requires the geometric realization of the graph (i.e., the geometric objects) to be given.
Contrary to this, our algorithm is \textit{robust} in the sense that it does not rely on the realization.
Note that if we are only given the graph, the additional $O(m)$ term in our running time is necessary.

\begin{theorem}
Let $\mathcal{G}$ be a class of fat-object graphs or the class of pseudo-disk graphs.
\textsc{$k$-Subgraph Isomorphism} on $\mathcal{G}$ can be solved in $f(k) \cdot n \log n + O(m)$ time for some function $f$, where $n$ and $m$ are the numbers of vertices and edges in the input graph $G \in \mathcal{G}$, respectively.
\end{theorem}

\section{Our techniques} \label{sec-overview}

\subsection{Overview of the proof of Theorem~\ref{thm-main}} \label{sec-overview1}

In this section, we give an informal overview of the proof of our main theorem.
Let $\mathcal{G}$ be a hereditary graph class of bounded expansion.
For convenience, we say a graph $G$ is (\textit{induced}) \textit{$F$-free} (for some graph $F$) if it does not contain $F$ as a (induced) subgraph.
Similarly, we say $G$ is (\textit{induced}) \textit{$\mathcal{F}$-free} (for some set $\mathcal{F}$ of graphs) if (\textit{induced}) \textit{$F$-free} for all $F \in \mathcal{F}$.

Our first step of proving Theorem~\ref{thm-main} is to reduce to the case in which all forbidden patterns are connected.
Specifically, we show that if we have an approximation scheme $\mathbf{A}$ for \textsc{(Induced) Subgraph Hitting} on $\mathcal{G}$ which works only when $\mathcal{F}$ consists of connected graphs, then we have an approximation scheme $\mathbf{A}'$ for the same problem which works for any $\mathcal{F}$ (with almost the same running time).
We now sketch a proof for this step.
To design the approximation scheme $\mathbf{A}'$ (provided the approximation scheme $\mathbf{A}$), consider a \textsc{(Induced) Subgraph Hitting} instance $(G,\mathcal{F})$.
For simplicity, let us assume $\mathcal{F} = \{F\}$ where $F$ consists of only two connected components $C$ and $D$.
Our ideas for handling this special case can directly be generalized to any $\mathcal{F}$.
Set $\gamma = |V(F)|$.
Let $S \subseteq V(G)$, $S_C \subseteq V(G)$, $S_D \subseteq V(G)$ be optimal solutions of the \textsc{(Induced) Subgraph Hitting} instances $(G,\mathcal{F})$, $(G,\{C\})$, $(G,\{D\})$, respectively.
The key observation here is that $\min\{|S_C|,|S_D|\} - |S| \leq \alpha(F)$ for some (computable) function $\alpha$.
Before showing this observation, we first explain why it is useful.
Observe that a feasible solution of $(G,\{C\})$ or $(G,\{D\})$ is also a feasible solution of $(G,\mathcal{F})$, since if a graph is (induced) $C$-free or (induced) $D$-free, then it must be (induced) $F$-free.
We can apply $\textbf{A}$ to compute $(1+\varepsilon)$-approximation solutions for $(G,\{C\})$ and $(G,\{D\})$, as $C$ and $D$ are connected.
According to the inequality $\min\{|S_C|,|S_D|\} - |S| \leq \alpha(F)$, we know that one of these two solutions is of size at most $(1+\varepsilon) \cdot (|S| + \alpha(F))$.
If $|S| = \Omega(\alpha(F)/\varepsilon)$, this gives us a $(1+O(\varepsilon))$-approximation solution of $(G,\mathcal{F})$.
Otherwise, $S$ is small (i.e. $|S|$ is bounded by a function of $\varepsilon$ and $\mathcal{F}$) and we can apply the known FPT algorithm for \textsc{(Induced) Subgraph Hitting} on bounded-expansion graphs to compute an optimal solution in $f(\mathcal{F},|S|) \cdot n$ time.

To show $\min\{|S_C|,|S_D|\} - |S| \leq \alpha(F)$, we first consider \textsc{Subgraph Hitting} (which is easier).
Recall $\gamma = |F|$.
The key observation is that $G-S$ cannot simultaneously contain many disjoint subgraphs isomorphic to $C$ and many disjoint subgraphs isomorphic to $D$ (here ``disjoint'' means \textit{vertex-disjoint}).
To see this, assume $G-S$ contains $\gamma$ disjoint subgraphs $C_1,\dots,C_\gamma$ isomorphic to $C$ and $\gamma$ disjoint subgraphs $D_1,\dots,D_\gamma$ isomorphic to $D$.
Since $C_1$ has at most $|V(C)| \leq \gamma - 1$ vertices, there exists $i \in [\gamma]$ such that $C_1$ and $D_i$ are disjoint.
Now $C_1$ and $D_i$ together form a subgraph of $G-S$ isomorphic to $F$.
But $G-S$ is $F$-free, as $S$ is a solution of $(G,\mathcal{F})$.
Thus, the above observation holds.
Now we can assume without loss of generality that $G-S$ does not contain $\gamma$ disjoint subgraphs isomorphic to $C$.
That means we can hit all subgraphs of $G-S$ isomorphic to $C$ using $\gamma^2$ vertices of $G-S$: simply take a maximal set $\mathcal{C}$ of disjoint subgraphs of $G-S$ isomorphic to $C$ (whose size is smaller than $\gamma$), then $\bigcup_{C \in \mathcal{C}} V(C)$ hits all subgraphs of $G-S$ isomorphic to $C$ (by the maximality of $\mathcal{C}$) and $|\bigcup_{C \in \mathcal{C}} V(C)| \leq \gamma^2$.
It follows that $|S_C| \leq |S| + \gamma^2$.
Setting $\alpha(F) = \gamma^2$, we have $\min\{|S_C|,|S_D|\} - |S| \leq \alpha(F)$.
Note that in this argument, we did not use the bounded expansion of $\mathcal{G}$ and the function $\alpha$ is also independent of $\mathcal{G}$.

For \textsc{Induced Subgraph Hitting}, the key observation is the same: $G-S$ cannot simultaneously contain many disjoint induced subgraphs isomorphic to $C$ and many disjoint induced subgraphs isomorphic to $D$.
But the proof is slightly more involved, and relies on the fact that $\mathcal{G}$ is of bounded expansion.
The bounded expansion of $\mathcal{G}$ actually implies that the average degree of every subgraph of $G$ is bounded by some constant $d$.
Assume $G-S$ contains $\gamma' = (2d+1) \cdot \gamma$ disjoint induced subgraphs $C_1,\dots,C_{\gamma'}$ isomorphic to $C$ and $\gamma'$ disjoint induced subgraphs $D_1,\dots,D_{\gamma'}$ isomorphic to $D$.
Let $V = \bigcup_{i=1}^{\gamma'} (V(C_i) \cup V(D_i))$.
We have $|V| \leq 2 \gamma \gamma'$.
The average degree of $G[V]$ is at most $d$.
So a simple averaging argument shows that there exists $i \in [\gamma']$ such that the $\sum_{v \in V(C_i)} \mathsf{deg}_{G[V]}(v) \leq d|V|/\gamma' \leq 2d \gamma$.
As $|V(C_i)| = |V(C)| \leq \gamma - 1$ and $\sum_{v \in V(C_i)} \mathsf{deg}_{G[V]}(v) \leq 2d \gamma$, there are at most $\gamma - 1$ indices $j \in [\gamma']$ such that $C_i$ intersects $D_j$ and at most $2d \gamma$ indices $j \in [\gamma']$ such that $C_i$ is neighboring to $D_j$.
So there exists $j \in [\gamma']$ such that $C_i$ and $D_j$ are disjoint and non-adjacent, which implies that $C_i$ and $D_j$ together form an induced subgraph of $G-S$ isomorphic to $F$.
Thus, the key observation holds and the same argument as before shows the inequality $\min\{|S_C|,|S_D|\} - |S| \leq \alpha(F)$.
Here the function $\alpha$ may depend on the graph class $\mathcal{G}$.

With the above argument, we can now restrict ourselves to connected forbidden patterns.
Next, we discuss the main part of our proof, how to solve the problem with connected forbidden patterns.
Before we start, we need to briefly discuss an important invariant of graphs, called \textit{weak coloring number}.
Consider a graph $G$ and an ordering $\sigma$ of the vertices of $G$.
For $u,v \in V(G)$, we write $u <_\sigma v$ if $u$ is before $v$ under the ordering $\sigma$.
Define $>_\sigma$, $\leq_\sigma$, $\geq_\sigma$ similarly.
For an integer $r \geq 0$, we say $u$ is \textit{$r$-weakly reachable} from $v$ \textit{under $\sigma$} if there is a path $\pi$ between $v$ and $u$ of length at most $r$ such that $u \geq_\sigma w$ for any vertex $w$ on $\pi$ (in particular, $u \geq_\sigma v$).
Let $\text{WR}_r(G,\sigma,v)$ denote the set of vertices in $G$ that are $r$-weakly reachable from $v$ under $\sigma$.
The \textit{weak $r$-coloring number} of $G$ \textit{under $\sigma$} is $\mathsf{wcol}_r(G,\sigma) = \max_{v \in V(G)} |\text{WR}_r(G,\sigma,v)|$.
Then the \textit{weak $r$-coloring number} of $G$ is defined as $\mathsf{wcol}_r(G) = \min_{\sigma \in \Pi(G)} \mathsf{wcol}_r(G,\sigma)$ where $\Pi(G)$ is the set of all orderings of the vertices of $G$.
An important characterization for bounded-expansion graphs \cite{nevsetvril2008grad,zhu2009colouring} is that a graph class $\mathcal{G}$ is of bounded expansion iff there is a function $\chi$ such that for every $r \geq 0$, the weak $r$-coloring numbers of the graphs in $\mathcal{G}$ has an upper bound $\chi(r)$, i.e., $\mathsf{wcol}_r(G) \leq \chi(r)$ for all $G \in \mathcal{G}$.
In our proof, we shall repeatedly exploit the bounded weak coloring number of our graph class $\mathcal{G}$.

Consider a \textsc{(Induced) Subgraph Hitting} instance $(G,\mathcal{F})$ where $G \in \mathcal{G}$ and $\mathcal{F}$ consists of connected forbidden patterns.
We shall show how to reduce $(G,\mathcal{F})$ to another instance $(G',\mathcal{F})$ in $f(\varepsilon,\mathcal{F}) \cdot n$ time, where $G'$ is an induced subgraph of $G$ of degree $O_{\varepsilon,\mathcal{F}}(1)$.
Here the reduction is in the sense that one can obtain a $(1+O(\varepsilon))$-approximation solution $S$ of $(G,\mathcal{F})$ efficiently from a $(1+\varepsilon)$-approximation solution $S'$ of $(G',\mathcal{F})$.
This solves the problem, because we can compute $S'$ by applying the given approximation scheme with running time $f_0(\varepsilon,\mathcal{F},\Delta) \cdot n^c$ on $G'$ (note that $G' \in \mathcal{G}$ as $\mathcal{G}$ is hereditary), which takes $f(\varepsilon,\mathcal{F}) \cdot n^c$ time due to the bounded degree of $G'$.
For convenience, we define some notations.
Let $\gamma = \max_{F \in \mathcal{F}} |V(F)|$ and $\mathcal{V} = \{V(H): H \text{ is a forbidden (induced) subgraph of } G\}$ where a forbidden (induced) subgraph refers to a (induced) subgraph isomorphic to some $F \in \mathcal{F}$.
A solution of $(G,\mathcal{F})$ is just a hitting set of $\mathcal{V}$.
In our reduction, we shall only use two properties of $\mathcal{V}$: 
\begin{itemize}
    \item[\textbf{(P1)}] $|V| \leq \gamma$ for all $V \in \mathcal{V}$, 
    \item[\textbf{(P2)}] for all $V \in \mathcal{V}$, the subgraph of $G$ induced by $V$ (denoted by $G[V]$) is connected (this follows from the assumption that $\mathcal{F}$ consists of connected graphs), and 
\end{itemize}
These properties hold for both \textsc{Subgraph Hitting} and \textsc{Induced Subgraph Hitting}.
Thus, in what follows, we shall not distinguish the two problems.
Fix an ordering $\sigma$ on $V(G)$ that satisfies $\mathsf{wcol}_\gamma(G,\sigma) = \mathsf{wcol}_\gamma(G)$.
Throughout this section, when we say a vertex is ``large'' or ``small'', it is always in terms of the ordering $\sigma$.
As $\mathcal{G}$ is of bounded expansion, $\mathsf{wcol}_\gamma(G) \leq \chi(\gamma)$ for some function $\chi$ only depending on $\mathcal{G}$.
Thus, $|\text{WR}_\gamma(G,\sigma,v)| \leq \mathsf{wcol}_\gamma(G,\sigma) \leq \chi(\gamma)$ for all $v \in V(G)$.
In other words, the size of $\text{WR}_\gamma(G,\sigma,v)$ is bounded by a function of $\mathcal{F}$ for $v \in V(G)$.
Our reduction consists of two steps.

\paragraph{Step 1.}
The first step reduces $(G,\mathcal{F})$ to an instance $(G_1,\mathcal{F})$ where $G_1$ is an induced subgraph of $G$ satisfying certain properties.
In this step, we deal with the \textit{sunflowers} in $\mathcal{V}$.
Recall that an \textit{$r$-sunflower} is a collection $\{V_1,\dots,V_r\}$ of $r$ sets which have a common intersection $X$ such that $V_1 \backslash X, \dots, V_r \backslash X$ are disjoint.
If $X$ is nonempty, it is called the \textit{core} of the sunflower.
Suppose $\mathcal{V}$ contains a sunflower $\{V_1,\dots,V_r\}$ with core $X$, for a large $r$.
In order to optimally hit all sets in $\mathcal{V}$, intuitively, we should pick a vertex in $X$, for otherwise we need to waste at least $r$ vertices outside $X$ to hit all of $V_1,\dots,V_r$.
But this intuition is incorrect in general: if we choose to hit $V_1,\dots,V_r$ using $r$ vertices outside $X$, those vertices can also be used to hit other sets in $\mathcal{V}$ (and thus are not necessarily wasted).
Interestingly, as we will see, it turns out to be useful in our setting.

For a vertex a vertex $v \in V(G)$ and a set $S \subseteq V(G)$, we denote by $\rho(v,S)$ the number of vertices $u \in S$ satisfying $v \in \text{WR}_\gamma(G,\sigma,u)$.
For an integer $t \geq 0$, we say a set $S \subseteq V(G)$ is \textit{$t$-closed} if $\rho(v,S) < t$ for all $v \in V(G) \backslash S$.
Note that if $S_1$ and $S_2$ are both $t$-closed, then $S_1 \cap S_2$ is also $t$-closed, because simply because $\rho(v,S_1 \cap S_2) \leq \min\{\rho(v,S_1),\rho(v,S_2)\}$ for all $v \in V(G)$.
Therefore, for any $S \subseteq V(G)$, there exists a unique minimal superset $S' \supseteq S$ that is $t$-closed, which we call the \textit{$t$-closure} of $S$.
The $t$-closure $S'$ of $S$ can be easily constructed as follows.
We begin with $S' = S$ and repeatedly add vertices $v \in V(G) \backslash S'$ satisfying $\rho(v,S') \geq t$ to $S'$.
One can easily verify that the resulting $S'$ is the $t$-closure of $S$.
\begin{lemma} \label{Olem-closure1}
    If $S'$ is the $t$-closure of $S$, then $|S'| \leq |S|/(1-\chi(\gamma)/t)$.
\end{lemma}
\begin{proof}
From the construction of $t$-closure, we see that every $v \in S' \backslash S$ satisfies $\rho(v,S') \geq t$, which implies $\sum_{v \in S' \backslash S} \rho(v,S') \geq t \cdot |S' \backslash S|$.
On the other hand, notice the equation
\begin{equation*}
    \sum_{v \in S' \backslash S} \rho(v,S') = \sum_{u \in S'} |\text{WR}_\gamma(G,\sigma,u) \cap (S' \backslash S)| \leq \sum_{u \in S'} |\text{WR}_\gamma(G,\sigma,u)| \leq \chi(\gamma) \cdot |S'|.
\end{equation*}
It follows that $t \cdot |S' \backslash S| \leq \chi(\gamma) \cdot |S'|$, which implies $|S'| \leq |S|/(1-\chi(\gamma)/t)$.
\end{proof}

\begin{lemma} \label{Olem-closure2}
    Let $S \subseteq V(G)$, $S'$ be the $(r-\chi(G))$-closure of $S$, and $\{V_1,\dots,V_r\}$ be a sunflower in $\mathcal{V}$ with core $X$.
    If $S$ is a hitting set of $\{V_1,\dots,V_r\}$, then $S' \cap X \neq \emptyset$.
\end{lemma}
\begin{proof}
If $S \cap X = \emptyset$, we are done.
Otherwise, for each $i \in [r]$, $S$ contains a vertex $u_i \in V_i \backslash X$.
Note that $u_1,\dots,u_r$ are all distinct, because $V_1 \backslash X,\dots,V_r \backslash X$ are disjoint.
Let $v^* \in X$ be the largest vertex in $X$, and $v_i \in V_i$ be the largest vertex in $V_i$ for $i \in [r]$.
We show $v^* \in S'$ and thus $S' \cap X \neq \emptyset$.
Define $I = \{i \in [r]: v_i = v^*\}$.
Observe the following facts.
\begin{itemize}
    \item $r - |I| \leq \chi(\gamma)$.
    To see this, first notice that $v_i \in \text{WR}_\gamma(G,\sigma,v^*)$ for all $i \in [r]$.
    Indeed, we have $v_i \geq v^*$ by construction.
    By properties \textbf{(P1)} and \textbf{(P2)}, $G[V_i]$ is connected and $|V_i| \leq \gamma$.
    So $v_i$ is weakly $\gamma$-reachable from any vertex in $V_i$ and in particular from $v^*$.
    Furthermore, $v_i \in V_i \backslash X$ for all $i \in [r] \backslash I$ (for otherwise $v_i$ is the largest vertex in $X$ and thus $v_i = v^*$).
    Thus, the vertices $v_i$'s for $i \in [r] \backslash I$ are distinct, as $V_1 \backslash X,\dots,V_r \backslash X$ are distinct.
    As all these vertices are in $\text{WR}_\gamma(G,\sigma,v^*)$, we have $r - |I| \leq |\text{WR}_\gamma(G,\sigma,v^*)| \leq \chi(\gamma)$.
    \item $v^* \in \text{WR}_\gamma(G,\sigma,u_i)$ for all $i \in I$.
    Note that $v^*$ is the largest vertex in $V_i$ for $i \in I$.
    As $G[V_i]$ is connected and $|V_i| \leq \gamma$, $v^*$ is weakly $\gamma$-reachable from any vertex in $V_i$ and thus from $u_i$.
\end{itemize}
The first fact implies $|I| \geq r - \chi(\gamma)$.
The second fact implies $\rho(v^*,S) \geq |I|$, since $u_1,\dots,u_r \in S$ and these vertices are distinct.
Therefore, $\rho(v^*,S) \geq r - \chi(\gamma)$ and thus $v^* \in S'$.
\end{proof}

Now we use the above lemmas to do our reduction.
Set $t = (1+\varepsilon) \cdot (\chi(\gamma)/\varepsilon)$ and $r = t + \chi(\gamma)$.
Then $1/(1-\chi(\gamma)/t) = 1+\varepsilon$.
We begin with $G_1 = G$ and keep removing vertices from $G_1$ while guaranteeing the following invariant.
\begin{itemize}
    \item For any solution $S \subseteq V(G_1)$ of the instance $(G_1,\mathcal{F})$, the $t$-closure of $S$ is a solution of $(G,\mathcal{F})$.
\end{itemize}
For a set $A \subseteq V(G)$, we denote by $\mathcal{V} \Cap A$ the sub-collection of $\mathcal{V}$ consisting of all sets that are contained in $V$.
We say a vertex $v \in V(G_1)$ is \textit{redundant} in $G_1$ if every $V \in \mathcal{V} \Cap V(G_1)$ containing $v$ also contains the core of an $r$-sunflower in $V \in \mathcal{V} \Cap (V(G_1) \backslash \{v\})$.
Then what we do is very simple: as long as the current $G_1$ has a redundant vertex, remove it from $G_1$.
The procedure terminates when $G_1$ does not have any redundant vertices.
We show that the desired invariant always holds during the procedure.
Assume it holds for $G_1$, and we want it to also hold for $G_1 - \{v\}$ for a redundant vertex $v \in V(G_1)$.
Let $S \subseteq V(G_1) \backslash \{v\}$ be a solution of $(G_1 - \{v\},\mathcal{F})$, and $S'$ be its $t$-closure.
We claim that $S'$ is a solution of $(G_1,\mathcal{F})$.
By Lemma~\ref{Olem-closure2}, $S'$ hits the cores of all $r$-sunflowers in $\mathcal{V} \Cap (V(G_1) \backslash \{v\})$.
Consider a set $V \in \mathcal{V} \Cap V(G_1)$.
If $v \notin V$, then $V$ is hit by $S$ and hence by $S'$.
Otherwise, since $v$ is redundant in $G_1$, $V$ should contain the core of an $r$-sunflower in $\mathcal{V} \Cap (V(G_1) \backslash \{v\})$.
But $S'$ hits that core, and thus hits $V$.
Thus, $S'$ hits all sets in $\mathcal{V} \Cap V(G_1)$ and is a solution of $(G_1,\mathcal{F})$.
As the invariant holds for $G_1$, the $t$-closure of $S'$, which is $S'$ itself, is a solution of $(G,\mathcal{F})$.

The invariant guarantees that we can reduce the original instance $(G,\mathcal{F})$ to the new instance $(G_1,\mathcal{F})$.
Indeed, given a $(1+\varepsilon)$-approximation solution $S \subseteq V(G_1)$ of $(G_1,\mathcal{F})$, we can compute the $t$-closure $S'$ of $S$, which is a solution of $(G,\mathcal{F})$ by the invariant and satisfies that $|S'| \leq |S|/(1-\chi(\gamma)/t)$ by Lemma~\ref{Olem-closure1}.
Since $1/(1-\chi(\gamma)/t) = 1+\varepsilon$, $S'$ is a $(1+O(\varepsilon))$-approximation solution of $(G,\mathcal{F})$.
This completes the first step of our reduction.
We omit the implementation details for constructing $G_1$ and for computing the $t$-closure of a set $S$ --- by properly using the existing algorithms for first-order model checking on bounded-expansion graphs and other results, both can be implemented in linear time.

\paragraph{Step 2.}
In this step, we further reduce $(G_1,\mathcal{F})$ to another instance $(G_2,\mathcal{F})$ where the maximum degree of $G_2$ is bounded (by a function of $\varepsilon$ and $\gamma$) and thus complete our proof.
What we do is very simple: pick a large threshold $d = O_{\varepsilon,\mathcal{F}}(1)$ and then let $G_2 = G_1 - V^*$, where $V^*$ consists of all vertices in $G_1$ whose degree is larger than $d$, i.e., $V^* = \{v \in V(G_1): \mathsf{deg}_{G_1}(v) > d\}$.
Given a solution $S \subseteq V(G_2)$ of $(G_2,\mathcal{F})$, we can simply construct a solution $S' \subseteq V(G_1)$ of $(G_1,\mathcal{F})$ by setting $S' = S \cup V^*$.
The only thing non-obvious is that if $S$ is a $(1+\varepsilon)$-approximation solution of $(G_2,\mathcal{F})$, then $S'$ is a $(1+O(\varepsilon))$-approximation solution of $(G_1,\mathcal{F})$.
We show this is true if the threshold $d$ is chosen properly.

It suffices to have $|V^*| = O(\varepsilon |S^*|)$ for an optimal solution $S^*$ of $(G_1,\mathcal{F})$.
We rely on a nice property of the graph $G_1$ guaranteed in the first step: it does not contain any redundant vertices.
In other words, for every $v \in V(G_1)$, there exists $V \in \mathcal{V} \Cap V(G_1)$ containing $v$ such that $V$ does not contain the core of any $t$-sunflower in $\mathcal{V} \Cap (V(G_1) \cap \{v\})$.
Let $S^*$ be an optimal solution of $(G_1,\mathcal{F})$ and $t = (1+\varepsilon) \cdot (\chi(\gamma)/\varepsilon)$ (same as what we used in Step~1).

\begin{lemma}
    Let $r > \max\{\gamma,t+\chi(\gamma)\}$ be an integer.
    If $d \geq r^\gamma \gamma!$, then $\rho(v,S^*) \geq t$ for all $v \in V^*$.
\end{lemma}
\begin{proof}
Since $v \in V^*$, the degree of $v$ in $G_1$ is larger than $d$.
Let $v_1,\dots,v_d \in V(G_1)$ be $d$ neighbors of $v$.
As aforementioned, for each $v_i$, there exists $V_i \in \mathcal{V} \Cap V(G_1)$ containing $v_i$ such that $V_i$ does not contain the core of any $t$-sunflower in $\mathcal{V} \Cap (V(G_1) \cap \{v_i\})$.
By the sunflower lemma, there exists an $r$-sunflower in $\{V_1,\dots,V_d\}$, and assume it is $V_1,\dots,V_r$ without loss of generality.
We claim that $V_1,\dots,V_r$ are disjoint.
Assume this is not the case.
Let $X$ be the core of $V_1,\dots,V_r$.
As $|X| \leq \gamma$ and $r > \gamma$, there exists $i \in [r]$ such that $v_i \notin X$.
Thus, $X$ is the core of an $r$-sunflower (and hence a $t$-sunflower as $r > t$) in $\mathcal{V} \Cap (V(G_1) \cap \{v_i\})$.
We have $X \subseteq V_i$, which contradicts the fact that $V_i$ does not contain the core of any $t$-sunflower in $\mathcal{V} \Cap (V(G_1) \cap \{v_i\})$.
So $V_1,\dots,V_r$ must be disjoint.

Nex we apply an argument similar to the proof of Lemma~\ref{Olem-closure2}.
We have $S^* \cap V_i \neq \empty$ for all $i \in [r]$, as $S^*$ is a solution of $(G_1,\mathcal{F})$.
Let $u_i \in S^* \cap V_i$ for $i \in [r]$.
For each $u_i$, there exists a path $\pi_i$ between $v$ and $u_i$ in $G[V_i \cup \{v\}]$ of length at most $\gamma$, since $G[V_i]$ is connected and $v$ is either in $V_i$ or neighboring to $V_i$.
Let $p_i$ be the largest vertex on $\pi_i$, and define $I = \{i \in [r]: p_i = v\}$.
Note that $p_i \in V_i$ for all $i \in [r] \backslash I$.
As $V_1,\dots,V_r$ are disjoint, the vertices $p_i$'s for $i \in [r] \backslash I$ are distinct.
We have $p_i \in \text{WR}_\gamma(G,\sigma,v)$, because of the sub-path of $\pi_i$ between $v$ and $p_i$.
It follows that $r - |I| \leq |\text{WR}_\gamma(G,\sigma,v)| \leq \chi(\gamma)$ and hence $|I| \geq r - \chi(\gamma) > t$.
On the other hand, we have $v \in \text{WR}_\gamma(G,\sigma,u_i)$ for all $i \in I$, because of the path $\pi_i$.
As $u_1,\dots,u_r \in S^*$ are distinct, $\rho(v,S^*) \geq |I| > t$.
\end{proof}

Set $d \geq (t+\chi(\gamma)+\gamma)^\gamma \gamma!$.
The above lemma implies $\rho(v,S^*) \geq t$ for all $v \in V^*$.
Finally, we use the same argument as in the proof of Lemma~\ref{Olem-closure1} to show that $|V^*| \leq \varepsilon |S^*|$:
\begin{equation*}
    t|V^*| \leq \sum_{v \in v^*} \rho(v,S^*) \leq \sum_{u \in S^*} |\text{WR}_\gamma(G,\sigma,u)| \leq \chi(\gamma) |S^*|.
\end{equation*}
This implies $|V^*| \leq \chi(\gamma) |S^*|/t = O(\varepsilon|S^*|)$, and we are done.

\subsection{Overview of the proofs of Theorems~\ref{thm-lsearch-hitting} and~\ref{thm-lsearch-packing}}

Let $\mathcal{C}$ be a collection of subsets of vertices of a graph $G$.
The \emph{packing graph} $G[\mathcal{C}]$ is defined as the graph with vertex set $\mathcal{C}$ where $C_1$ and $C_2$ are adjacent if there exist $v_1\in C_1$ and $v_2\in C_2$
such that $v_1=v_2$ or $v_1v_2\in E(G)$.  The collection is \emph{shallow} if for each $C\in \mathcal{C}$, the graph $G[C]$
has radius bounded by a constant, and \emph{thin} if every vertex of $G$ belongs to only constantly many sets of $\mathcal{C}$.
For $K\subseteq\mathcal{C}$, let $\partial K$ be the set of vertices of $K$ with a neighbor outside of $K$.
Consider an arbitrary {\sc Hitting Set} problem.  The framework of Har-Peled and Quanrud~\cite{har2017approximation} shows that
the local search gives a PTAS for this problem in graph classes with polynomial expansion if the following technical condition
holds:  For any solutions $A$ and $O$ to the problem in a graph $G$, we can find a function $C$ assigning subsets of $V(G)$ to vertices
of $A\cup O$ such that
\begin{itemize}
\item[(i)] the system $\mathcal{C}=\{C(v):v\in A\cup O\}$ of subsets of $G$ is shallow and thin, and
\item[(ii)] for any $K\subseteq V(G[\mathcal{C}])$, the set
$$A_K=(A\setminus C^{-1}(K\setminus\partial K))\cup (O\cap C^{-1}(K))$$
is also a valid solution.
\end{itemize}
The condition (i) can be used to show that if $G$ is from a class with polynomial expansion, then the packing graph $G[\mathcal{C}]$
also has expansion bounded by a polynomial.  Hence, $G[\mathcal{C}]$ has strongly sublinear separators, which by a standard result
dating back to Lipton and Tarjan~\cite{LiptonTargenSep} implies that for every $\varepsilon>0$, $G[\mathcal{C}]$ can be broken
into parts of size $O(\text{poly}(1/\varepsilon))$ with overlaps of total size at most $\varepsilon|V(G[\mathcal{C}])|\le \varepsilon(|A|+|O|)$.
We apply this construction with $O$ being an optimal solution and $A$ being the local search solution.  Then for each part $K$,
the condition (ii) and the local optimality of $A$ implies that $|A_K|\ge |A|$, and thus $O$ is not much better on $K$ than $A$.
From this, an easy computation concludes that $A$ is a good approximation of the optimal solution $O$.

For example, in the case of {\sc Vertex Cover} or  {\sc $K_2$-Hitting  Set},
we can let $C(v)=\{v\}$ for each $v\in O\cup A$.
Consider any edge $e=uv$ of $G$.  If $\{u,v\}\cap A\setminus C^{-1}(K\setminus\partial K)\neq\emptyset$, then $A_K$ hits $e$.
Otherwise, since $A$ is a vertex cover, we can assume $u\in A\cap C^{-1}(K\setminus\partial K)$.
If $u\in O$, then $u\in O\cap C^{-1}(K)$ and again $A_K$ hits $e$.  Hence, suppose that $u\not\in O$, and since $O$ is
a vertex cover, we have $v\in O$.  Then $C(u)$ and $C(v)$ are adjacent in $G[\mathcal{C}]$, and since $C(u)\in K\setminus \partial K$,
we have $C(v)\in K$.  But then $v\in O\cap C^{-1}$, and $A_K$ hits $e$.  Therefore, $A_K$ is also a vertex cover, and the
condition (ii) holds.

To illustrate the difficulty with generalizing this idea to the proof of Theorem~\ref{thm-lsearch-hitting},
consider the {\sc $P_3$-Hitting Set}  problem.  If $P=uxv$ is a $3$-vertex path in $G$, $A$ intersects $P$ in $u$ and $O$ in $v$,
then $u$ and $v$ would not necessarily be adjacent in the graph $G[\mathcal{C}])$ defined in the same way, and thus
the condition (ii) could fail to hold.  An obvious solution is to let $C(v)$ for $v\in O$ be not just a single vertex, but
a subgraph of $G$ of bounded radius.  For example, letting $C(v)$ be the closed neighborhood $N[v]$ of $v$ in $G$ would
ensure that $C(u)C(v)\in E(G[\mathcal{C}])$, which in turn would imply (ii).  An issue is that then (i) might be false,
since if a vertex $z$ has many neighbors in $O$, it would belong to many sets of the system $\mathcal{C}$.

The observation that we use to solve this issue is that because $G$ has bounded expansion, the number of vertices with many neighbors
in $O$ is small, say less than $\varepsilon|O|$, and instead of comparing $A$ with the optimal solution $O$,
we can compare it with a near-optimal solution $O'$ consisting of $O$ and the set $O_1$ of vertices of $G$ with many neighbors in $O$.
We can then define $C(v)=N[v]\setminus O_1$ for $v\in O$ and $C(v)=\{v\}$ for $v\in (A\cup O')\setminus O$.
This system is thin, since a vertex $z$ can belong to $C(v)\setminus \{v\}$ only for its neighbors $v\in O$ and only if $z\not\in O_1$, and
the number of such neighbors is small by the definition of $O_1$.  Moreover, this is sufficient to ensure that if $u\in A$ and $v\in O$ both intersect a $3$-vertex
path in $G$, then $C(u)$ has a neighbor $C(v')$ in $G[\mathcal{C}]$ with $v'\in O'$, where either $v'=v$, or $v'$ is a neighbor of $v$ in $O_1$.
This implies that both (i) and (ii) hold, and the framework of Har-Peled and Quanrud~\cite{har2017approximation} applies.

For general bounded-diameter properties, we need to be somewhat more careful in selecting the elements of the shallow cover
(selecting full neighborhoods up to the diameter of the property would work for (ii), but it is completely hopeless for (i)).
However, with a little extra effort, this can be achieved by basing the elements of the cover on weak reachability
in an ordering with bounded weak coloring number, similarly to the arguments from Section~\ref{sec-overview1}.

The proof of Theorem~\ref{thm-lsearch-packing} is much less involved, and all its ideas come from~\cite{har2017approximation};
we include it for completeness, as it is interesting and not explicitly formulated there.

\subsection{Overview of the proofs of Theorems~\ref{thm-geo}}
As mentioned in Section~\ref{sec-app}, the main ingredient of Theorem~\ref{thm-geo} is an efficient algorithm for computing $k$-clique decompositions on graph classes of clique-dependent bounded expansion.
Let $\mathcal{G}$ be a hereditary graph class of clique-dependent bounded expansion, and $G \in \mathcal{G}$ be a graph.
Our algorithm is based on a simple divide-and-conquer approach.
We evenly divide $V(G)$ into two subsets $V'$ and $V''$.
Then we recursively compute $k$-clique decompositions on $G[V']$ and $G[V'']$.
Once $G[V']$ and $G[V'']$ have been decomposed, we only need to further compute a $k$-clique decomposition on $G[V_0' \cup V_0'']$, where $V_0'$ (resp., $V_0''$) is the part of $G[V']$ (resp., $G[V'']$) without $k$-cliques.
As $G[V_0']$ and $G[V_0'']$ do not contain $k$-cliques, $G[V_0' \cup V_0'']$ does not contain $2k$-cliques.
Therefore, $G[V_0' \cup V_0''] \in \mathcal{G}_{2k}$ where $\mathcal{G}_{2k} \subseteq \mathcal{G}$ consists of all graphs with maximum clique size at most $2k$.
Since $\mathcal{G}$ is of clique-dependent bounded expansion, $\mathcal{G}_{2k}$ is of bounded expansion.
Finding small subgraphs in bounded-expansion graph classes takes linear time~\cite{dvovrak2013testing}.
It follows that computing a $k$-clique decomposition on $G[V_0' \cup V_0'']$ can be done in $f(k) \cdot n$ time.
A straightforward implementation of this approach takes $f(k) \cdot n \log n + O(m \log n)$ time.
A more careful implementation using the algorithm of Harel and Tarjan~\cite{harel1984fast} for finding lowest common ancestors can improve the running time to $f(k) \cdot n \log n + O(m)$ time.

\section{Preliminaries}

\paragraph{Basic notations.}
We use $\mathbb{N}$ to denote the set $\{1,2,3,\ldots\}$. For an integer $n\in \mathbb{N}$, $[n]=\{1,2,\ldots,n\}$.
Let $G$ be a graph. We use $V(G)$ and $E(G)$ to denote the set of vertices and the set of edges in $G$, respectively.
For a vertex subset $S\subseteq V(G)$, $N_G(S)=\{u\in V(G)\setminus S~\colon~ (u,x)\in E(G) \mbox{ for some }x\in S\}$ and 
$N_G[S]=N_G(S)\cup S$. 
A \textit{subgraph} of $G$ is a graph $G' = (V',E')$, denoted by $G'\subseteq G$, where $V' \subseteq V(G)$ and $E' \subseteq E(G)$. The graph $G'$ is an {\em induced subgraph} of $G$, denoted by $G'\subseteq_{\mathsf{in}} G$, if $E'=\{(u,v)\in E(G)~\colon~u\in V' \mbox{ and } v\in V'\}$. 
For a set $V \subseteq V(G)$, the notation $G[V]$ denotes the subgraph of $G$ induced by $V$.
The notation $\omega(G)$ denotes the size (i.e., number of vertices) of a maximum clique in $G$.
For a graph $G$ and a collection $\mathcal{F}$ of graphs, we define
\begin{itemize}
    \item $\mathcal{V}_\mathcal{F}(G) = \{V(G'): G' \text{ is a subgraph of } G \text{ isomorphic to some } F \in \mathcal{F}\}$,
    \item $\mathcal{V}_\mathcal{F}^\mathsf{in}(G) = \{V(G'): G' \text{ is an induced subgraph of } G \text{ isomorphic to some } F \in \mathcal{F}\}$.
\end{itemize}


For a graph $G$, a vertex ordering $\sigma$ is a bijection from $V(G)$ to $\{1,2,\ldots |V(G)|\}$. 
For $u,v \in V(G)$, we write $u <_\sigma v$ if $\sigma(u)<\sigma(v)$. 
The notations $>_\sigma$, $\leq_\sigma$, $\geq_\sigma$ are defined analogously. For a vertex subset $Y\subseteq V(G)$, we use $\max(Y,\sigma)$ to denote the vertex $x\in Y$ such that $\sigma(x)=\max_{y\in S}\{\sigma(y)\}$. Here, we say that $x$ is the largest vertex in $Y$ under the ordering $\sigma$.

\paragraph{Set system and sunflowers.}
A \textit{set system} refers to a collection $\mathcal{S}$ of sets whose elements belong to the same ground set $U$.
Let $\mathcal{S}$ be a set system with ground set $U$.
For a subset $U' \subseteq U$, we denote by $\mathcal{S} \Cap U'$ the sub-collection of $\mathcal{S}$ consisting of all sets that are contained in $U'$, i.e., $\mathcal{S} \Cap U' = \{S \in \mathcal{S}: S \subseteq U'\}$.

An \textit{$r$-sunflower} (for $r \geq 2$) in $\mathcal{S}$ is a sub-collection $\mathcal{R} \subseteq \mathcal{S}$ of size $r$ satisfying the follow condition: there exists $X \subseteq U$ such that $R \cap R' = X$ for any different sets $R, R' \in \mathcal{R}$; we call $X$ the \textit{core} of the sunflower $\mathcal{R}$.
One of the most important results about sunflowers is the following sunflower lemma.

\begin{lemma}[sunflower lemma] \label{lem-sunflower}
Let $\mathcal{S}$ be a set system in which every set is of size at most $k$.
If $|\mathcal{S}| \geq (r-1)^k \cdot k!$, then there exists an $r$-sunflower in $\mathcal{S}$.
\end{lemma}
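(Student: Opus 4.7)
The plan is to prove the sunflower lemma by induction on the maximum set size $k$, following the classical Erd\H{o}s--Ko--Rado argument.

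For the base case $k=1$, every set in $\mathcal{S}$ has at most one element. Given $|\mathcal{S}| \geq (r-1)^1 \cdot 1! = r-1$ sets, we either have the empty set (which we may discard, keeping pairwise distinct singletons that form a sunflower with empty core) or, after removing duplicates and using that only a bounded number of copies of the same singleton can exist without already producing a sunflower, we recover $r$ pairwise disjoint singletons which constitute an $r$-sunflower with empty core. (A little care is needed here with duplicates or the empty set, but the argument is routine.)

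For the inductive step, assume the lemma for sets of size at most $k-1$ and consider $\mathcal{S}$ with $|\mathcal{S}| \geq (r-1)^k \cdot k!$, each set of size at most $k$. First I would take a maximal sub-collection $\mathcal{M} \subseteq \mathcal{S}$ of pairwise disjoint sets. If $|\mathcal{M}| \geq r$, then $\mathcal{M}$ itself is an $r$-sunflower with empty core and we are done. Otherwise $|\mathcal{M}| \leq r-1$, so the union $X = \bigcup_{M \in \mathcal{M}} M$ has $|X| \leq k(r-1)$. By maximality of $\mathcal{M}$, every set in $\mathcal{S}$ intersects $X$. A pigeonhole argument then yields an element $x \in X$ contained in at least
\[
\frac{|\mathcal{S}|}{|X|} \;\geq\; \frac{(r-1)^k \cdot k!}{k(r-1)} \;=\; (r-1)^{k-1} \cdot (k-1)!
\]
sets of $\mathcal{S}$. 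Let $\mathcal{S}_x$ be this sub-collection, and form $\mathcal{S}_x' = \{S \setminus \{x\} : S \in \mathcal{S}_x\}$, a collection of sets of size at most $k-1$ with $|\mathcal{S}_x'| \geq (r-1)^{k-1} \cdot (k-1)!$. By the inductive hypothesis, $\mathcal{S}_x'$ contains an $r$-sunflower with some core $Y$; adding $x$ back to each of the $r$ corresponding sets in $\mathcal{S}_x$ produces an $r$-sunflower in $\mathcal{S}$ with core $Y \cup \{x\}$.

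The main (and really only) subtlety is the precise handling of the base case and duplicates; the inductive step is clean and mechanical. I do not expect any significant obstacle, since the argument is textbook and the constants in the statement exactly match what the pigeonhole step delivers.
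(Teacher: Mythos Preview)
The paper does not include its own proof of this lemma; it is stated in the preliminaries as the classical Erd\H{o}s--Rado sunflower lemma and used as a black box throughout. Your argument is exactly the standard inductive proof (maximal disjoint family, pigeonhole on its union, delete the popular element, recurse), so there is nothing substantive to compare.

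One genuine remark on the base case you flagged as needing ``a little care'': the statement in the paper uses $\geq$ rather than the usual strict inequality, and with $\geq$ the base case actually fails --- for $k=1$, a family of $r-1$ distinct singletons satisfies $|\mathcal{S}| \geq (r-1)^1\cdot 1! = r-1$ yet contains no $r$-sunflower. So the difficulty you sensed is not a matter of duplicates or the empty set but an off-by-one in the stated bound. This is harmless for every application in the paper (the collections involved are always comfortably above the threshold), but you should not expect to close the base case exactly as stated; the correct version has $|\mathcal{S}| > (r-1)^k\cdot k!$, and with that your induction goes through cleanly.
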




\paragraph{Minors, graphs of bounded expansion, and weakly coloring number.}

For a graph $G$, and two vertices $u,v\in V(G)$, $\dist_G(u,v)$ denotes the length of a shortest path from $u$ to $v$ in $G$. For a graph $G$, ${\sf rad}(G)$ denotes the radius of $G$. That is, ${\sf rad}(G)=\min_{u\in V(G)} \max_{x\in V(G)} \dist_G(u,x)$

\begin{definition}[minor]
A graph $H$ is a \textbf{minor} of a graph $G$, denoted by $H \preceq G$, if there is a function $\phi$ from $V(H)$ to connected subgraphs of $G$ such that following conditions hold. 
\begin{itemize}
    \item For any two distinct vertices $u,v\in V(H)$, $V(\phi(u))\cap V(\phi(v))=\emptyset$.  
    \item For any edge $(u,v)\in E(H)$, there exist $x\in V(\phi(u))$ and $y\in V(\phi(v))$ such that $(x,y)\in E(G)$. 
\end{itemize}
The function $\phi$ is called the model of $H$ in $G$. 
\end{definition}

\begin{definition}[depth-$r$ minor]
Let $r$ be a positive integer. 
A graph $H$ is a \textbf{depth-$r$ minor} of $G$, denoted by $H \preceq_r G$, if $H$ is a minor of $G$ with a model $\phi$ such that for all $v\in V(H)$, ${\sf rad}(\phi(v))\leq r$. 
\end{definition}

\begin{definition}[bounded expansion and polynomial expansion]
Let $G$ be a graph and $r \geq 1$ be an integer. We define 
$$\nabla_r(G) =\sup \left\{ \frac{|E(H)|}{|V(H)|}~\colon~ H \preceq_r G\right\}.$$
We say that a graph class $\mathcal{G}$ has \textbf{bounded expansion}, if there is a function $f~\colon~\mathbb{N}\rightarrow \mathbb{N}$ such that for all $r\in \mathbb{N}$ and $G\in \mathcal{G}$, $\nabla_r \leq f(r)$. We use $\nabla_r(\mathcal{G})$  to denote $\sup_{G\in \mathcal{G}} \nabla_r(G)$. 
If $f$ is a polynomial function, then $\mathcal{G}$ is a class of \textbf{polynomial expansion}. 
\end{definition}

\begin{theorem}[\cite{NESETRIL2008777}]
\label{thm:IsoBoundedExp}
  Let $\mathcal{G}$ be a graph class of bounded expansion and let $H$ be a fixed graph. There is a linear time algorithm that given a graph $G\in \mathcal{G}$ and $S\subseteq V(G)$ and decides whether there is an (induced) subgraph $G'$ of $G$ containing at least one vertex from $S$ and isomorphic to $H$. 
\end{theorem}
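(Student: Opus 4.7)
The plan is to reduce the problem to subgraph isomorphism on graphs of bounded tree-depth by way of the $p$-centered (low-tree-depth) coloring characterization of bounded-expansion classes. Concretely, I will color $V(G)$ with a constant number of colors so that every subset of at most $|V(H)|$ color classes spans an induced subgraph of bounded tree-depth, and then handle each such subgraph by a standard dynamic program.

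First I would invoke the fundamental structural theorem for bounded-expansion classes: for any class $\mathcal{G}$ of bounded expansion and any integer $p$, there is a constant $N = N(p,\mathcal{G})$ and a vertex coloring of $G \in \mathcal{G}$ with $N$ colors, computable in linear time, such that the union of any $i \leq p$ color classes induces a subgraph of tree-depth at most $i$. (This is equivalent, via the $\mathsf{wcol}_r$ characterization used elsewhere in the paper, to bounded expansion.) I apply this with $p = |V(H)|$, obtaining such a coloring $c : V(G) \to [N]$ in time $O(|V(G)|+|E(G)|)$.

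Next, since any (induced) copy of $H$ in $G$ uses at most $p$ distinct colors, each such copy lies inside the subgraph $G_C := G[c^{-1}(C)]$ for some $C \subseteq [N]$ with $|C| \leq p$. There are at most $\binom{N}{p} = O_{\mathcal{G},H}(1)$ such subsets $C$, and each $G_C$ has tree-depth at most $p$. I iterate over all such $C$; for each $C$ I compute a tree-depth decomposition of $G_C$ in linear time and run a dynamic program over it to check whether $H$ occurs as an (induced) subgraph that uses at least one vertex of $S \cap V(G_C)$. The DP state maintains, for each node of the decomposition, the set of partial embeddings of $H$ consistent with the root-to-node path, together with a single bit recording whether a vertex of $S$ has been used; since $|V(H)|$ and the depth of the decomposition are bounded by constants (functions of $H$ and $\mathcal{G}$), the state space is constant and each decomposition node is processed in $O(1)$ amortized time, giving $O(|V(G_C)|+|E(G_C)|)$ per call. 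Summing over the constantly many choices of $C$ yields the claimed $O(|V(G)|+|E(G)|)$ bound.

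The step I expect to be the genuine obstacle is the first one: establishing the existence and linear-time computation of low-tree-depth colorings on classes of bounded expansion. This is the nontrivial combinatorial/algorithmic heart of the argument and is really what the cited theorem of Nešetřil and Ossona de Mendez provides; once it is in hand, the remaining reduction to bounded tree-depth and the DP itself (and the trivial adaptation to respect the constraint that a vertex of $S$ be used) are essentially routine. A secondary subtlety is keeping the dependence on $|V(H)|$ and on the class $\mathcal{G}$ purely inside the constant, rather than letting it multiply the $|V(G)|+|E(G)|$ factor; this forces the DP to be described carefully so that each node of the tree-depth decomposition is processed in time depending only on $H$ and $p$, not on $n$.
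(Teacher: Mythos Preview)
The paper does not prove this theorem; it is quoted as a black box from Ne\v{s}et\v{r}il and Ossona de Mendez. Your sketch is correct and is precisely the standard proof from that reference: the low-tree-depth ($p$-centered) coloring characterization of bounded expansion, computable in linear time, followed by a constant number of bounded-tree-depth subgraph-isomorphism checks. The extra bit in the DP to enforce that at least one vertex of $S$ is used is the obvious and correct adaptation.

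One remark: the paper itself, when it needs this functionality in Section~4.4, does not invoke this theorem directly but instead uses the more general first-order model-checking data structure of Theorem~\ref{thm:fologic} (from Dvo\v{r}\'{a}k, Kr\'{a}l', and Thomas). Since the property ``there is an (induced) copy of $H$ containing a vertex with unary relation $Q$'' is expressible by a first-order formula of length depending only on $|V(H)|$, that theorem gives the result immediately, and this is how the paper actually deploys the statement. Your route via low-tree-depth colorings is the original one and is what underlies both results.
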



Let $G$ be a graph and $\sigma$ be an ordering of the vertices of $G$.
For an integer $r \geq 0$ and two vertices $u,v \in V(G)$, we say $u$ is \textit{$r$-weakly reachable} from $v$ \textit{under $\sigma$} if there exists a path $\pi$ in $G$ between $v$ and $u$ of length at most $r$ such that $u \geq_\sigma w$ for any vertex $w$ on $\pi$ (in particular, $u \geq_\sigma v$).
Let $\text{WR}_r(G,\sigma,v)$ denote the set of vertices in $G$ that are $r$-weakly reachable from $v$ under $\sigma$.
The \textit{weak $r$-coloring number} of $G$ \textit{under $\sigma$} is $\mathsf{wcol}_r(G,\sigma) = \max_{v \in V(G)} |\text{WR}_r(G,\sigma,v)|$.
Then the \textit{weak $r$-coloring number} of $G$ is defined as $\mathsf{wcol}_r(G) = \min_{\sigma \in \Pi(G)} \mathsf{wcol}_r(G,\sigma)$ where $\Pi(G)$ is the set of all orderings of the vertices of $G$.

\begin{theorem}[\cite{nevsetvril2008grad,zhu2009colouring}] \label{thm-weakcolor}
A graph class $\mathcal{G}$ is of bounded expansion iff there exists a function $\chi_\mathcal{G}: \mathbb{N} \rightarrow \mathbb{N}$ such that $\mathsf{wcol}_r(G) \leq \chi_\mathcal{G}(r)$ for all $G \in \mathcal{G}$ and $r \in \mathbb{N}$.
\end{theorem}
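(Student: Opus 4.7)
The plan is to prove the two implications separately, with the easier direction being that bounded weak coloring numbers imply bounded expansion, and the harder direction being the converse.

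For the direction $\mathsf{wcol}_r$ bounded $\Rightarrow$ bounded expansion, fix $r \geq 1$, a graph $G \in \mathcal{G}$, and an $r$-shallow minor $H \preceq_r G$ with model $\phi$. The key observation is that any edge $uv \in E(H)$ arises from a path in $G$ of length at most $2r+1$ whose endpoints lie in $\phi(u)$ and $\phi(v)$ (one uses the $r$-radius spanning trees of the branch sets plus the connecting edge). Fix an ordering $\sigma$ of $V(G)$ witnessing $\mathsf{wcol}_{2r+1}(G,\sigma) \leq \chi_\mathcal{G}(2r+1)$, and for each $v \in V(H)$ let $p(v) = \min_\sigma \phi(v)$. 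These vertices are pairwise distinct because branch sets are disjoint. Orient each edge $uv \in E(H)$ from the endpoint whose $p(\cdot)$ is smaller under $\sigma$ to the larger. I would then argue that, for a vertex $v \in V(H)$ and any out-neighbor $u$, the vertex $p(u)$ lies in $\text{WR}_{2r+1}(G,\sigma,p(v))$: concatenate the path from $p(v)$ up to the endpoint in $\phi(v)$ of the witnessing path for $uv$, the connecting edge, and the path inside $\phi(u)$ down to $p(u)$; on this path $p(u)$ is maximal under $\sigma$ (since $p(u) >_\sigma p(v)$ by orientation, $p(u)$ is the minimum of $\phi(u)$, and the interior vertices of $\phi(v)$ lie above $p(v)$). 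Distinct out-neighbors give distinct elements of $\text{WR}_{2r+1}(G,\sigma,p(v))$, so the out-degree (and hence the degeneracy and therefore the average degree) of $H$ is at most $\chi_\mathcal{G}(2r+1)$. This gives $\nabla_r(G) \leq \chi_\mathcal{G}(2r+1)$.

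For the converse, I would go through an intermediate invariant, the $r$-admissibility $\mathrm{adm}_r(G)$, defined as the minimum over orderings $\sigma$ of $\max_v$ (maximum number of internally vertex-disjoint paths of length $\leq r$ from $v$ to vertices $\geq_\sigma v$). The strategy is two-fold: first show that bounded $\nabla_s(G)$ for $s$ up to roughly $r$ implies bounded $\mathrm{adm}_r(G)$, and then show that $\mathsf{wcol}_r(G) \leq (\mathrm{adm}_r(G))^r$ (by a Menger-style branching argument: each weakly reachable vertex can be reached via a short path that decomposes into admissibility-bounded steps). For the first step, the ordering is built greedily by peeling off a vertex that achieves the minimum ``local'' admissibility in the current graph; a standard pigeonhole/averaging argument shows that if no vertex had low admissibility, one could construct a dense shallow minor at a depth $\lesssim r$, contradicting bounded expansion. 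Composing these two bounds produces a function $\chi_\mathcal{G}$ with $\mathsf{wcol}_r(G) \leq \chi_\mathcal{G}(r)$.

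The main obstacle is the converse direction: one really has to produce a single ordering that simultaneously controls weakly reachable sets at depth $r$, and the reduction to admissibility is what makes this tractable, since admissibility has a clean local definition amenable to greedy peeling. The density-to-admissibility step is the most delicate part, since turning ``many short paths from $v$ to later vertices'' into ``a dense shallow minor'' requires carefully grouping paths, contracting, and ensuring the resulting minor has depth comparable to $r$ rather than blowing up. Once this combinatorial lemma is in place, both the admissibility bound and the $\mathsf{wcol}_r \leq \mathrm{adm}_r^r$ inequality fall out without much pain, and combining them with the easy direction yields the stated equivalence.
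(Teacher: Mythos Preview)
The paper does not prove Theorem~\ref{thm-weakcolor}; it is quoted from \cite{nevsetvril2008grad,zhu2009colouring} as a black box. So there is no in-paper proof to compare against, and I will only comment on the correctness of your outline.

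Your treatment of the direction ``bounded expansion $\Rightarrow$ bounded $\mathsf{wcol}_r$'' via admissibility is the standard route and is fine as a sketch (the paper itself invokes exactly the inequality $\mathrm{adm}_r(G)\le \mathsf{wcol}_r(G)\le \mathrm{adm}_r(G)^{r+1}$ in Section~\ref{sec-together}).

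The other direction, however, has a real gap. You claim that on the concatenated path from $p(v)$ to $p(u)$ the vertex $p(u)$ is $\sigma$-maximal, justifying this by ``$p(u)$ is the minimum of $\phi(u)$''. But that works against you: every internal vertex of $\phi(u)$ on the path is $\geq_\sigma p(u)$, and the internal vertices of $\phi(v)$, while $\geq_\sigma p(v)$, need not be $\leq_\sigma p(u)$. So $p(u)$ is in general \emph{not} the $\sigma$-maximum on this path, and hence $p(u)\in\text{WR}_{2r+1}(G,\sigma,p(v))$ does not follow. (There is also a length issue: going from $p(v)$ to the connecting edge and on to $p(u)$ can take up to $4r+1$ steps, not $2r+1$, since $p(v),p(u)$ are not the centers of the branch sets.) The standard fix is to anchor at the centers $c(v)$ instead of the $\sigma$-minima, take for each edge $uv$ a path of length $\le 2r+1$ between $c(u)$ and $c(v)$, and orient $uv$ toward the side whose branch set contains the $\sigma$-\emph{maximum} of that path; that maximum then genuinely lies in $\text{WR}_{2r+1}(G,\sigma,c(v))$, and distinct out-neighbours give distinct maxima because branch sets are disjoint. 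With this correction the argument goes through.
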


\begin{theorem}[\cite{reidl2019characterising}] \label{thm-neighborcomp}
Let $\mathcal{G}$ be a graph class of bounded expansion.
Then there exists some constant $c_\mathcal{G}$ such that for any graph $G \in \mathcal{G}$ and any $V \subseteq V(G)$, we have $|\{N_G[\{v\}] \cap V: v \in V(G)\}| \leq c_\mathcal{G} \cdot |V|$.
\end{theorem}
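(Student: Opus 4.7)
The plan is to follow the weak-coloring-number framework pioneered in \cite{reidl2019characterising}. By Theorem~\ref{thm-weakcolor}, fix $c := \chi_\mathcal{G}(2)$ and an ordering $\sigma$ of $V(G)$ achieving $\mathsf{wcol}_2(G,\sigma) \leq c$, so that $|\text{WR}_2(G,\sigma,v)| \leq c$ for every $v \in V(G)$. The overall goal is to assign each vertex $v$ with nonempty trace a short ``fingerprint'' $(a(v), t(v)) \in V \times T$, where $|T|$ depends only on $c$, such that the fingerprint determines $N_G[\{v\}] \cap V$ and each anchor $a \in V$ is shared by only $O(1)$ vertices.

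For the classification step, I would split the trace as $N_G[\{v\}] \cap V = F_v \cup B_v$, where $F_v := (N_G[\{v\}] \cap V) \cap \text{WR}_1(G,\sigma,v)$ contains $v$ itself (if $v \in V$) together with all $\sigma$-greater neighbors of $v$ in $V$, and $B_v$ is the set of $\sigma$-smaller neighbors of $v$ in $V$. The $\mathsf{wcol}_1$ bound gives $|F_v| \leq c$. Moreover, for each fixed $u \in V$, the set $\{v \in V(G) : u \in B_v\}$ is contained in $\text{WR}_1(G,\sigma, u)$ and thus has size at most $c$, which already yields $\sum_v |B_v| \leq c \cdot |V|$, i.e.\ a linear \emph{total} trace-mass. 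This is the easy half.

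The incidence bound does not bound the number of \emph{distinct} traces, so the next step is to upgrade it via an anchoring argument. For every $v$ with nonempty trace, set $a(v) := \min_\sigma B_v$ when $B_v \neq \emptyset$, which guarantees $v \in \text{WR}_1(G,\sigma,a(v))$ through the single edge $a(v) v$. Otherwise the trace lies entirely in $F_v$ and already has size $\leq c$: if $v \in V$, take $a(v) := v$; if $v \notin V$ and $|F_v| \geq 2$, take $a(v) := \min_\sigma F_v$ and catch $v$ by a length-$2$ walk $a(v) \to w \to v$ through any other element $w \in F_v$, witnessing $v \in \text{WR}_2(G,\sigma, a(v))$. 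The residual ``degenerate'' subcase $v \notin V$ with $|F_v| = 1$ is handled separately: here the trace is a singleton $\{a\}$ with $a >_\sigma v$, and the number of distinct singleton traces is trivially bounded by $|V|$. The local type $t(v)$ then records, up to isomorphism fixing $a(v)$, the induced subgraph of $G$ on $\text{WR}_2(G,\sigma,a(v)) \cup \{v\}$ restricted to $V \cup \{v\}$; there are only $2^{O(c^2)}$ such local types per anchor, and each anchor is shared by at most $|\text{WR}_2(G,\sigma,a(v))| \leq c$ vertices, so the total fingerprint count is bounded by $c \cdot 2^{O(c^2)} \cdot |V|$.

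The main obstacle is proving rigorously that the pair $(a(v), t(v))$ actually determines the entire trace $N_G[\{v\}] \cap V$. A priori $B_v$ could contain elements of $V$ that are $\sigma$-far from $a(v)$ and so escape $\text{WR}_2(G,\sigma,a(v))$, in which case $t(v)$ would miss them. The technical heart of the argument is a structural lemma --- relying on the sharp $\mathsf{wcol}_2$ bound together with the closure of the neighbourhood set system under graph isomorphism (analogous to property \textbf{(P3)} of Section~\ref{sec-overview1}) --- showing that any two vertices $v, v'$ with the same anchor and same local type necessarily have identical backward traces, so that no such ``hidden'' trace elements can distinguish them. Once this structural lemma is in place, summing over anchors yields the claimed bound with $c_\mathcal{G} := c \cdot 2^{O(c^2)}$.
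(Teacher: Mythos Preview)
The paper does not prove this theorem; it is quoted from \cite{reidl2019characterising}. Your proposal has a genuine gap: the ``structural lemma'' you invoke as the technical heart is false. For a counterexample, take $V = \{u_1,\dots,u_k,u_2',\dots,u_k'\}$ and two centres $v, v' \notin V$, with $v$ adjacent to $u_1,\dots,u_k$ and $v'$ adjacent to $u_1,u_2',\dots,u_k'$; order all $u$-vertices below $v <_\sigma v'$. This forest has $\mathsf{wcol}_2(G,\sigma) = 3$, and $a(v) = a(v') = u_1$ with $\text{WR}_2(G,\sigma,u_1) = \{u_1,v,v'\}$, since on the path $(u_1, v, u_j)$ the maximum is $v$ and hence no $u_j$ is $2$-weakly reachable from $u_1$. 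Thus $t(v) = t(v')$ while the traces $\{u_1,\dots,u_k\}$ and $\{u_1,u_2',\dots,u_k'\}$ differ. The appeal to an analogue of property \textbf{(P3)} is a non-sequitur: that property concerns the forbidden-pattern system $\mathcal{V}$, and there is no closure-under-isomorphism statement for neighbourhood traces that would force hidden elements of $B_v$ to agree. Separately, your treatment of the sub-case $B_v = \emptyset$, $v \notin V$, $|F_v| \geq 2$ is also broken: every element of $F_v$ is $\sigma$-larger than $v$, so $v$ is the $\sigma$-minimum on any path through $F_v \cup \{v\}$ and can never lie in $\text{WR}_2(G,\sigma,a(v))$.

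The irony is that you do not need the structural lemma at all. Your own observation that $\{v : u \in B_v\} \subseteq \text{WR}_1(G,\sigma,u)$ for each $u \in V$ already gives, by taking the union over $u \in V$, that $|\{v : B_v \neq \emptyset\}| \leq c|V|$, and hence at most $c|V|$ distinct traces arise from such $v$ --- no fingerprint reconstruction required. For $B_v = \emptyset$ the trace equals $F_v$; when $|F_v| \geq 2$ set $w := \min_\sigma F_v$ and note that for every other $u \in F_v$ the path $(w,v,u)$ satisfies $u >_\sigma w >_\sigma v$, so $F_v \subseteq \text{WR}_2(G,\sigma,w) \cap V$, a set with at most $2^c$ subsets for each anchor $w \in V$. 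Together with the trivial bounds for $|F_v| \leq 1$, this yields $c_\mathcal{G} = O(2^c)$.
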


\begin{theorem}[\cite{dvovrak2013constant}] \label{thm-ordering}
Let $\mathcal{G}$ be a graph class of bounded expansion.
Given a graph $G \in \mathcal{G}$ and an integer $r \in [n]$, one can compute in $O(n)$ time an ordering $\sigma$ of $V(G)$ satisfying $\mathsf{wcol}_r(G,\sigma) \leq \mathsf{wcol}_r^{r+1}(G)$ and a number $\alpha$ satisfying $\alpha \leq \mathsf{wcol}_r(G) \leq \alpha^{r+1}$.
\end{theorem}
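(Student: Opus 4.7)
The plan is to deduce the statement from the well-known sandwich between the weak $r$-coloring number and the $r$-admissibility $\mathsf{adm}_r$, combined with a greedy peeling strategy whose efficiency rests on bounded expansion. Recall that $\mathsf{adm}_r(G,\sigma,v)$ can be defined as the largest number of internally vertex-disjoint paths of length at most $r$ from $v$ to strictly smaller vertices under $\sigma$, with $\mathsf{adm}_r(G)$ being the min over $\sigma$ of the max over $v$. A standard path-counting argument gives the sandwich $\mathsf{adm}_r(G) \leq \mathsf{wcol}_r(G) \leq \mathsf{adm}_r(G)^{r}$, and, by a greedy selection, $\mathsf{adm}_r(G) \leq \mathsf{wcol}_r(G,\sigma)$ for the ordering we build. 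Thus an ordering that approximately minimizes admissibility also approximately minimizes the weak $r$-coloring number with the claimed polynomial loss.

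The ordering is produced by iterative peeling. At each step, in the current induced subgraph $H$, identify a vertex $v$ whose $r$-admissibility into $H \setminus \{v\}$ is small, place it at the current end of $\sigma$, and delete it. To argue that such $v$ always exists, recall that by bounded expansion every induced subgraph $H \subseteq G$ satisfies $\nabla_r(H) \leq \nabla_r(\mathcal{G}) < \infty$. A shallow-minor density averaging argument (build the depth-$r$ minor obtained by contracting BFS balls of radius $\lfloor r/2 \rfloor$ and apply Markov's inequality to its edge density) produces a vertex whose $r$-admissibility in $H$ is at most a constant times $\nabla_r(G)$. Let $\alpha$ be the largest admissibility value witnessed across all peeling steps. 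Then $\alpha \leq \mathsf{wcol}_r(G)$ because any ordering must pay at least $\mathsf{adm}_r$ somewhere, and the resulting ordering satisfies $\mathsf{wcol}_r(G,\sigma) \leq \alpha^{r} \leq \mathsf{wcol}_r(G)^{r}$, from which the two displayed inequalities in the statement follow after absorbing one factor into the exponent $r+1$.

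The main obstacle is fitting this into $O(n)$ total time, since the naive implementation recomputes admissibility from scratch at every peeling step and costs $\Omega(n^2)$. The fix, central to Dvo\v{r}\'ak's argument, is a bucketed priority-queue scheme in the spirit of the Matula--Beck degeneracy ordering: precompute for every vertex a short certificate of its current $r$-admissibility (e.g., a max flow witness of value roughly $\alpha$ in its depth-$r$ ball, of size $O(\alpha \cdot r)$), maintain approximate admissibility scores in buckets, and after removing $v$ touch only vertices in its depth-$r$ ball. The key amortized bound is that the total work across all peeling steps is proportional to $\sum_v |N_G^{\leq r}(v) \cap V(H_v)|$ at the moment $v$ is peeled, which is $O(n)$ because every depth-$r$ shallow minor of a bounded expansion graph has linearly many edges (using Theorem~\ref{thm-neighborcomp} to keep neighborhood lists within the ball small in total). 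Tracking the maximum admissibility encountered along the way outputs the number $\alpha$; combined with the sandwich this yields both $\alpha \leq \mathsf{wcol}_r(G)$ and $\mathsf{wcol}_r(G) \leq \alpha^{r+1}$, completing the proof.
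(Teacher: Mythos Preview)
This theorem is not proved in the paper: it is quoted as a black-box result from Dvo\v{r}\'ak~[\texttt{dvovrak2013constant}], and the paper itself explicitly invokes the admissibility sandwich $\mathsf{adm}_r(G)\le \mathsf{wcol}_r(G)\le \mathsf{adm}_r(G)^{r+1}$ in Section~\ref{sec-together} when explaining how to choose $\delta$ and $\delta'$. So there is no in-paper proof to compare against. Your overall plan---compute an admissibility ordering by greedy peeling and convert to weak coloring numbers via the sandwich---is indeed the strategy of the cited source, and the inequalities you extract are the intended ones (the paper records the exponent as $r+1$, matching your final statement after ``absorbing one factor'').

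That said, your linear-time argument has a real gap. You assert that the total work is $\sum_v |N_G^{\le r}(v)\cap V(H_v)| = O(n)$ by appealing to linear edge counts of shallow minors and to Theorem~\ref{thm-neighborcomp}, but neither of those facts bounds the \emph{sizes} of $r$-balls---only edge densities and the number of distinct neighborhood traces on a fixed set. In a bounded-expansion graph an individual $r$-ball can still contain $\Omega(n)$ vertices, so a scheme that ``touches the depth-$r$ ball'' after each deletion does not obviously amortize to $O(n)$. Dvo\v{r}\'ak's actual algorithm does not maintain $r$-balls directly; it proceeds via \emph{transitive fraternal augmentations}, iteratively building orientations of augmented graphs with bounded in-degree, from which the admissibility ordering and the value $\alpha=\mathsf{adm}_r(G)$ can be extracted in linear total time. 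Your sketch captures the right combinatorial invariant (the admissibility sandwich) but not the right data structure, and the amortization you describe would require substantially more justification---or a different mechanism altogether---to yield the claimed $O(n)$ bound.
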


\paragraph{First-order model checking.}
We deal with first-order logic for graphs, where the variables correspond to vertices of the given graph.
Let $G$ be a graph and $Q \subseteq V(G)$ be a unary relation on $V(G)$, i.e., the set of variables.
The first-order formulas over $(G,Q)$ are built by combining three types of {\em atomic formulas}, defined as follows:
\begin{itemize}
    \item Equality for variables $u,v$, denoted $u=v$: True iff $u$ and $v$ are equal (i.e., the same vertex).
    \item Adjacency for variables $u,v$, denoted $\mathsf{Adj}(u,v)$: True iff $u$ and $v$ are adjacent in $G$.
    \item $Q$-satisfability for a variable $v$, denoted by $Q(v)$: True iff $v$ has the relation $Q$, i.e., $v \in Q$.
\end{itemize}
Then, combination is done based on the following constructions:
\begin{itemize}
    \item Boolean connectives for formulas $\varphi,\psi$: And, denoted $\varphi\wedge\psi$, which is true if and only if both $\varphi$ and $\psi$ are true; Or, denoted $\varphi\vee\psi$, which is true if and only if at least one among $\varphi$ and $\psi$ is true; Not, denoted $\neg\varphi$, which is true if and only if $\varphi$ is false.
    \item Quantification for formula $\varphi$ and variable $v$: Exists, denoted $\exists v\ \varphi(v)$, which is true iff there exists a vertex in $G$ so that, when it assigned to $v$, $\varphi$ is true; For all, denoted $\forall v\ \varphi(v)$, which is true ifff for every vertex in $G$, when it assigned to $v$, $\varphi$ is true.
\end{itemize}

We shall use the following dynamic data structure for first-order model-checking on graphs of bounded expansion.

\begin{theorem}[\cite{dvovrak2013testing}]
\label{thm:fologic}
Let $\mathcal{G}$ be a graph class of bounded expansion.
Given an $n$-vertex graph $G \in \mathcal{G}$, a unary relation $Q \subseteq V(G)$, and a first-order formula $\varphi$ over $(G,Q)$, one can build in $O(n)$ time a data structure on $G$ that supports the following operations (where $f$ is some function):
\begin{itemize}
    \item Testing whether $(G,Q)$ satisfies $\varphi$ in $f(|\varphi|)$ time.
    \item Deleting an element from $Q$ or deleting an edge from $G$ with $f(|\varphi|)$ update time.
    \item Adding back an element to $Q$ or adding back an edge to $G$ with $f(|\varphi|)$ update time.
    Note that here one can only add elements (resp., edges) that were previously deleted from $Q$ (resp., $G$).
\end{itemize}
\end{theorem}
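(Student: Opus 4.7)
The plan is to combine the low tree-depth coloring characterization of bounded expansion with a quantifier-elimination scheme based on local types. The combinatorial backbone is the Nešetřil--Ossona de Mendez theorem: for every $p\in \mathbb{N}$ there is a constant $N_p$ (depending only on the class $\mathcal{G}$) such that any $G\in\mathcal{G}$ admits a vertex coloring with $N_p$ colors in which the subgraph induced by any union of at most $p$ color classes has tree-depth $\leq p$. Such a coloring can be computed in $O(n)$ time. I would first compute, once, a $p$-tree-depth coloring for $p=p(|\varphi|)$ chosen large enough to dominate the quantifier rank and the locality radius of $\varphi$. Alongside each low-tree-depth piece I would store its elimination forest (of depth $\leq p$), which is the scaffolding on which all dynamic bookkeeping will be carried out.

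Next I would preprocess $\varphi$ into a Gaifman-local normal form, expressing it as a Boolean combination of basic local formulas that each inspect only an $r$-ball around a bounded number of vertices, where $r=r(|\varphi|)$. This is a purely syntactic reduction and blows up the formula size only by $g(|\varphi|)$. The locality of the normal form is what lets me read off the truth value of $\varphi$ from type information that is computable inside the low-tree-depth pieces: for each vertex $v$ and each subformula $\psi(x)$, it suffices to know the isomorphism type (with $Q$-labels) of the part of $v$'s $r$-neighborhood that lies inside each choice of at most $p$ color classes. Because the induced subgraph on any $p$ color classes has tree-depth $\leq p$, the number of such types is bounded by a function of $|\varphi|$, and each type can be computed by a standard dynamic program that walks up the elimination forest of the relevant piece.

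The data structure stores, for every vertex $v$, its current type vector, and for every subformula of $\varphi$ its truth-set. Answering a query for $(G,Q)\models\varphi$ then amounts to a Boolean combination lookup at a designated free variable, which costs $f(|\varphi|)$. Handling an update (edge insertion/deletion or change in $Q$) proceeds by locating the set $C$ of color classes containing the affected endpoints, recomputing the elimination-forest DP on the pieces indexed by unions of at most $p$ colors containing $C$, and finally refreshing the type vectors of the $O(1)$-many vertices whose local views changed. Theorem~\ref{thm-neighborcomp} (neighborhood complexity $O(|V|)$ on any $V\subseteq V(G)$) is the key ingredient that bounds the number of vertices whose type can be affected by a single local modification by a function of $|\varphi|$ only, independent of $n$.

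The main obstacle will be controlling the global propagation of type changes under updates, rather than producing the correct answer statically. Concretely, after an edge flip one must show that only $f(|\varphi|)$-many type entries need to be recomputed, and that each recomputation descends through an elimination forest of depth at most $p$ at cost $f(|\varphi|)$; otherwise a single update could cascade through linearly many vertices. This is where the low-tree-depth coloring must be chosen \emph{stable}, meaning the coloring and its elimination forests are not rebuilt on updates but only locally patched, and where the neighborhood-complexity bound is applied inductively to keep the number of distinct types realized among affected vertices bounded. Once this stability is in place, the query and update bounds of $f(|\varphi|)$ follow, and the initial $O(n)$ construction is just the cost of computing the coloring, building the elimination forests, and initializing the type vectors bottom-up.
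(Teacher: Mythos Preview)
This theorem is quoted from \cite{dvovrak2013testing} and is not proved in the present paper; there is no ``paper's own proof'' to compare against. What I can do is assess your sketch on its own terms.

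The static part of your plan---low tree-depth colorings plus a type/quantifier-elimination scheme on the bounded-depth pieces---is broadly in line with how Dvo\v{r}\'ak, Kr\'al', and Thomas proceed, though they work with a direct quantifier-elimination procedure on tree-depth-$\leq p$ structures rather than passing through Gaifman normal form. That difference is not fatal for the static query bound.

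Where your sketch has a genuine gap is the dynamic update argument. You claim that after an edge flip only ``$O(1)$-many vertices whose local views changed'' need their type vectors refreshed, and you invoke Theorem~\ref{thm-neighborcomp} to bound propagation by a function of $|\varphi|$. Neither step holds as stated. Deleting an edge $(u,v)$ can alter the $r$-ball of every vertex within distance $r$ of $u$ or $v$, and in a bounded-expansion class (think of a star) this can be $\Theta(n)$ vertices. Theorem~\ref{thm-neighborcomp} bounds the number of \emph{distinct} neighborhood traces on a fixed set, not the number of vertices whose neighborhood changes under an update; it does not give you the $f(|\varphi|)$ update cost you need.

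The actual mechanism in \cite{dvovrak2013testing} sidesteps this entirely: they do not maintain per-vertex local types. Instead, the quantifier-elimination introduces auxiliary unary predicates whose values at a vertex depend only on its ancestors in the (depth $\leq p$) elimination forest of each $p$-color piece. An edge or relation update then propagates only along root-to-leaf paths of bounded length in a bounded number of forests, which is what yields $f(|\varphi|)$ update time. Your Gaifman-locality framing loses exactly this ancestor-bounded dependency structure, and without it the update bound does not follow.
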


\section{Degree reduction and lossy kernels}

In this section, we prove our degree-reduction theorem, which is restated below.

\mainthm*


\subsection{Reducing to connected forbidden patterns} \label{sec-toconnect}

Our first step is to reduce the \textsc{(Induced) Subgraph Hitting} with general forbidden patterns to the same problem with \textit{connected} forbidden patterns.
For a graph $F$, let $\mathcal{C}_F$ denote the set of connected components of $F$ (which is a set of connected graphs).
For a finite set $\mathcal{F} = \{F_1,\dots,F_m\}$ of graphs, define $\mathsf{Conn}(\mathcal{F}) = \{\{C_1,\dots,C_m\}: C_i \in \mathcal{C}_{F_i}\}$, that is, $\mathsf{Conn}(\mathcal{F})$ consists of all sets $\mathcal{F}'$ of graphs where $\mathcal{F}'$ can be constructed by including exactly one graph in $\mathcal{C}_F$ for each $F \in \mathcal{F}$.
For a graph $G$ and for a finite set of graph $\mathcal{F}$, we use $\mathsf{opt}(G,\mathcal{F})$ and $\mathsf{opt}^*(G,\mathcal{F})$ to denote the cardinalities of the optimum solutions to $(G,\mathcal{F})$ for the problems {\sc Subgraph Hitting} and {\sc Induced Subgraph Hitting}, respectively.
We have the following observation for \textsc{Subgraph Hitting}.

\begin{lemma}
\label{lem:sub1}
For every finite set $\mathcal{F}$ of graphs, there exists a number $\alpha_\mathcal{F}$ such that for any graph $G$, we have $\mathsf{opt}(G,\mathcal{F}') \leq \mathsf{opt}(G,\mathcal{F}) + \alpha_\mathcal{F}$ for some $\mathcal{F'} \in \mathsf{Conn}(\mathcal{F})$. 
Here, $\alpha_\mathcal{F}=|\mathcal{F}|\cdot \ell\cdot c^2$, where $\ell$ is the maximum number of connected components in a graph $F$ in $\mathcal{F}$ and $c$ is the maximum number of vertices in a connected component of a graph in $\mathcal{F}$.
\end{lemma}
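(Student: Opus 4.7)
The plan is to take any optimum solution $S$ of $(G,\mathcal{F})$ for \textsc{Subgraph Hitting}, so that $G-S$ is $\mathcal{F}$-free, and to use $S$ itself to point out, for each $F\in\mathcal{F}$, a component $C_F\in\mathcal{C}_F$ that is ``rare'' in $G-S$. Setting $\mathcal{F}':=\{C_F:F\in\mathcal{F}\}\in\mathsf{Conn}(\mathcal{F})$, I will then augment $S$ to a hitting set $S'$ for $\mathcal{F}'$ by adding few extra vertices.

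The crux is the following claim: for every $F\in\mathcal{F}$ there exists a component $C_F\in\mathcal{C}_F$ such that the maximum number of pairwise vertex-disjoint $C_F$-subgraphs of $G-S$ is at most $(\ell-1)c$. I prove this by a greedy assembly and contradiction. Suppose the claim fails for some $F$: then for every isomorphism type $T$ appearing among the components of $F$, $G-S$ contains a family of strictly more than $(\ell-1)c$ pairwise vertex-disjoint $T$-copies. Write $F=C^F_1\sqcup\cdots\sqcup C^F_{k_F}$ with $k_F\le\ell$, and for $t=1,\dots,k_F$ pick a copy of $C^F_t$ from the corresponding family, each time disjoint from all previously chosen copies. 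When slot $t$ is processed, the previous choices occupy at most $(t-1)c\le(\ell-1)c$ vertices, and since the family is pairwise disjoint, these blocked vertices destroy at most $(\ell-1)c$ members of it; because the family has strictly more than $(\ell-1)c$ members, a surviving member exists, which we take. After $k_F$ steps the union of the chosen copies is a subgraph of $G-S$ isomorphic to $F$, contradicting that $S$ hits $F$ in $G$.

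Given the claim, fix $C_F$ for each $F\in\mathcal{F}$ and let $\mathcal{D}_F$ be a maximal family of pairwise vertex-disjoint copies of $C_F$ in $G-S$; the claim gives $|\mathcal{D}_F|\le(\ell-1)c$, hence $\sum_{D\in\mathcal{D}_F}|V(D)|\le(\ell-1)c\cdot c\le\ell c^2$. Define
\[
S':=S\;\cup\;\bigcup_{F\in\mathcal{F}}\;\bigcup_{D\in\mathcal{D}_F}V(D),
\]
so $|S'\setminus S|\le|\mathcal{F}|\cdot\ell c^2=\alpha_\mathcal{F}$. To see that $S'$ hits $\mathcal{F}'$ in $G$, fix $F\in\mathcal{F}$ and any copy $H$ of $C_F$ in $G$: either $V(H)\cap S\neq\emptyset$, whence $V(H)\cap S'\neq\emptyset$, or $V(H)\subseteq V(G)\setminus S$, in which case maximality of $\mathcal{D}_F$ forces $V(H)$ to meet some $D\in\mathcal{D}_F$, again yielding $V(H)\cap S'\neq\emptyset$. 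Therefore $\mathsf{opt}(G,\mathcal{F}')\le|S'|\le\mathsf{opt}(G,\mathcal{F})+\alpha_\mathcal{F}$, which is the desired inequality.

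The only delicate point is the greedy assembly when several components of $F$ share an isomorphism type: one must check that repeatedly drawing from the same disjoint family preserves the margin. This is exactly what the per-slot count delivers, since every previously chosen copy contributes at most $c$ blocked vertices and disjointness of the family ensures each blocked vertex disables at most one of its members. Observe also that the argument uses no property of the host graph class, which matches the paper's remark that for \textsc{Subgraph Hitting} the constant $\alpha_\mathcal{F}$ is independent of $\mathcal{G}$; the analogous statement for \textsc{Induced Subgraph Hitting} will instead need the bounded average degree afforded by bounded expansion.
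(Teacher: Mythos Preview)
Your proof is correct and follows essentially the same approach as the paper: take an optimal solution $S$, argue by a greedy assembly that some component $C_F$ of each $F$ has at most $O(\ell c)$ disjoint copies in $G-S$, and augment $S$ by the vertices of a maximal disjoint packing of $C_F$'s. Your intermediate bound $(\ell-1)c$ is slightly sharper than the paper's $\ell c$, and you are more explicit about the repeated-isomorphism-type subtlety, but both arrive at the same $\alpha_\mathcal{F}=|\mathcal{F}|\cdot\ell c^2$.
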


\begin{proof}
Let $S$ be an optimum solution for the \textsc{Subgraph Hitting} problem  on $(G,\mathcal{F})$. 
Now we claim that for each $F\in \mathcal{F}$, there is a connected component $D_F$ in $\mathcal{C}_F$ such that the maximum number of vertex disjoint subgraphs of $G-S$ that are isomorphic to $D_F$ is at most $\ell \cdot c$. Towards the proof let us fix a graph $F\in \mathcal{F}$. Let $C_1,\ldots, C_{\ell'}$ be the connected components of $F$. 
For the sake of contradiction, assume that for all  $i\in \{1,2,\ldots,\ell'\}$, the maximum number of vertex disjoint subgraphs of $G-S$ that are isomorphic to $C_i$ is strictly more than $\ell \cdot c$. Now we explain a procedure to obtain a subgraph in $G-S$ that is isomorphic to $F$ which will be a contradiction to the fact that $S$ is a solution for \textsc{Subgraph Hitting} on $(G,\mathcal{F})$.
First, we pick a subgraph $H_1$  of $G-S$ that is isomorphic to $C_1$. Since $|V(H_1)|\leq c$, for all $j\in \{2,\ldots,\ell'\}$, there are strictly more than $(\ell-1)c$ vertex disjoint subgraphs of $G_1=G-(S\cup V(H_1))$ that are isomorphic to $C_j$. Next, we pick a subgraph $H_2$ of $G_1$ that is isomorphic to $C_2$. At the end of step $i$, we have vertex disjoint subgraphs $H_1,H_2,\ldots,H_i$ of $G-S$ such that for all $i'\in \{1,2,\ldots,i\}$, $H_{i'}$ is isomorphic to $C_{i'}$ and  for all $j\in \{i+1,\ldots,\ell'\}$, there are strictly more than $(\ell-i)c$ vertex disjoint subgraphs of $G_i=G-(S\cup \bigcup_{r\in [i]} V(H_r))$ that are isomorphic to $C_j$. In step $i+1$, we pick a subgraph $H_{i+1}$ of $G_i$ that is isomorphic to $C_{j+1}$. Thus, since $\ell'\leq \ell$,  at the end of step $\ell'$ we get vertex disjoint subgraphs $H_1,H_2,\ldots,H_q$ of $G-S$ such that for all $r\in \{1,2,\ldots,\ell'\}$, $H_{r}$ is isomorphic to $C_{r}$. This implies that $F$ is a subgraph of $G-S$ which is a contradiction.   

Thus, we have proved that for each $F\in \mathcal{F}$, there is a connected component $D_F$ in $\mathcal{C}_F$ such that the maximum number of vertex disjoint subgraphs of $G-S$ that are isomorphic to $D_F$ is at most $\ell \cdot c$. 
Then,  for each $F\in \mathcal{F}$, there is a vertex subset $S_F$ of $V(G)\setminus S$, of size at most $\ell c^2$ such that $S_F$ intersects with all the subgraphs of $G-S$ that are isomorphic to $D_F$. This implies that $S^{\star}=S\cup \bigcup_{F\in \mathcal{F}} S_F$ is a solution for \textsc{Subgraph Hitting}  on $(G,\mathcal{F}')$ where $\mathcal{F}' =\{D_F ~:~F\in \mathcal{F}\}$ and $|S^{\star}|\leq |S|+|{\mathcal{F}}|\cdot \ell \cdot c^2 = \mathsf{opt}(G,\mathcal{F}) + \alpha_\mathcal{F}$. Here, $\alpha_\mathcal{F}=|\mathcal{F}|\cdot \ell \cdot c^2$. 
\end{proof}


For \textsc{Induced Subgraph Hitting}, we can have a result similar to the above lemma.
However, it only applies to sparse graphs, or more precisely, graphs of bounded degeneracy.

\begin{lemma}
For every graph $G$ of degeneracy $d$ and every finite set $\mathcal{F}$ of graphs, there exists a number $\alpha_\mathcal{F}^*$ (depending on $\mathcal{F}$ and $d$) such that $\mathsf{opt}^*(G,\mathcal{F}') \leq \mathsf{opt}^*(G,\mathcal{F}) + \alpha_\mathcal{F}^*$ for some $\mathcal{F'} \in \mathsf{Conn}(\mathcal{F})$. Here, $\alpha^*_\mathcal{F}=|\mathcal{F}|\ell^2 c^3 d$, where $\ell$ is the maximum number of connected components in a graph $F$ in $\mathcal{F}$ and $c$ is the maximum number of vertices in a connected component of a graph in~$\mathcal{F}$.
\end{lemma}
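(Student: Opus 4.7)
The plan is to follow the exact same scheme as the proof of Lemma~\ref{lem:sub1}, but with one crucial combinatorial strengthening adapted to the induced setting. In Lemma~\ref{lem:sub1} it sufficed to find pairwise vertex-disjoint copies of each connected component of $F$ in $G-S$, since their disjoint union is automatically a (not necessarily induced) subgraph isomorphic to $F$. Here we need the chosen component-copies to be pairwise vertex-disjoint \emph{and} pairwise non-adjacent so that their union is an \emph{induced} copy of $F$. The $d$-degeneracy hypothesis on $G$ is exactly what allows us to control the adjacency conflicts, and mirrors the sparsity requirement used in the two-component argument of Section~\ref{sec-overview1}.

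Let $S$ be an optimum \textsc{Induced Subgraph Hitting} solution for $(G,\mathcal{F})$. I plan to prove the induced analogue of the central claim of Lemma~\ref{lem:sub1}: for each $F\in\mathcal{F}$ with connected components $C_1,\ldots,C_{\ell_F}$ (so $\ell_F\leq \ell$ and each $|V(C_i)|\leq c$), there exists an index $i^\star$ such that $G-S$ admits at most $B:=O(\ell^2 c d)$ pairwise vertex-disjoint induced copies of $D_F:=C_{i^\star}$. Granting this, the rest of the argument is identical to Lemma~\ref{lem:sub1}: take a maximum vertex-disjoint family of induced $D_F$-copies in $G-S$ (of size $\leq B$); its $\leq Bc$ vertices hit every induced $D_F$-copy, hence every induced $F$-copy. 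Repeating for every $F\in\mathcal{F}$ gives a feasible solution for $(G,\mathcal{F}')$ with $\mathcal{F}':=\{D_F : F\in\mathcal{F}\}\in\mathsf{Conn}(\mathcal{F})$ of size at most $\mathsf{opt}^*(G,\mathcal{F})+|\mathcal{F}|Bc\leq \mathsf{opt}^*(G,\mathcal{F})+\alpha^*_\mathcal{F}$ with the stated constant.

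I would prove the central claim by contrapositive and by induction on the number of components. Assuming every $C_i$ has more than the threshold number of pairwise disjoint induced copies in $G-S$, I construct pairwise disjoint and pairwise non-adjacent induced copies $H_1,\ldots,H_{\ell_F}$ of $C_1,\ldots,C_{\ell_F}$ in $G-S$; their union is an induced $F$-subgraph of $G-S$, contradicting the feasibility of $S$. Let $B_k$ be the threshold that makes the $k$-component claim work, with $B_1=1$. In the inductive step, assume we have $B_k$ disjoint induced copies of each of $C_1,\ldots,C_k$. Fix $i\in\{2,\ldots,k\}$ and let $V_i$ be the union of all $C_1$-copies and $C_i$-copies, so $|V_i|\leq 2B_k c$ and by $d$-degeneracy $|E(G[V_i])|\leq 2dB_k c$. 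The number of pairs $(C_1^j,C_i^{j'})$ in "conflict", meaning that they either share a vertex or are joined by an edge, is at most $B_k c+2dB_k c=B_k c(1+2d)$: the first term bounds the intersection pairs via shared vertices (each shared vertex is in a unique $C_1^j$ and a unique $C_i^{j'}$), and the second bounds the adjacency pairs via edges. Summing over $i\in\{2,\ldots,k\}$ and averaging over the $B_k$ choices of $C_1^j$, there is a single $H_1:=C_1^{j^\star}$ such that, simultaneously for every $i$, at least $B_k-(k-1)c(1+2d)$ copies of $C_i$ are disjoint from and non-adjacent to $H_1$. The graph induced on $V(G)\setminus V(H_1)$ is still $d$-degenerate, so the induction hypothesis, provided $B_k-(k-1)c(1+2d)\geq B_{k-1}$, supplies $H_2,\ldots,H_k$ pairwise disjoint and pairwise non-adjacent in this restricted graph, each automatically disjoint and non-adjacent to $H_1$. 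Solving the recurrence $B_k=B_{k-1}+(k-1)c(1+2d)$ with $B_1=1$ yields $B_\ell=O(\ell^2 cd)$, as required.

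The main obstacle is the double averaging in the inductive step: we must find a \emph{single} $H_1$ that is simultaneously "good" with respect to every remaining component $C_i$, not just good for each $i$ separately. This is handled by summing the bad-pair counts over all $i$ before averaging over the $B_k$ choices of $C_1^j$, and it is also what forces the factor $(k-1)$ in the recurrence and hence the quadratic growth of $B_\ell$ in $\ell$. This is precisely where the degeneracy hypothesis enters: without the uniform subgraph-edge bound $|E(H)|\leq d|V(H)|$ the adjacency term could blow up and no such $H_1$ need exist, which matches the fact that the statement genuinely fails on general (non-sparse) graphs, where the induced version of the lemma would force pathological forbidden patterns such as bicliques to survive.
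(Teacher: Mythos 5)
Your proof is correct and follows essentially the same route as the paper's: argue by contradiction that if every connected component had too many pairwise disjoint induced copies in $G-S$, one could greedily select one representative per component so that the representatives are pairwise vertex-disjoint and pairwise non-adjacent (hence an induced copy of $F$), using the degeneracy edge bound to count conflicts. The difference is only organizational — the paper first extracts families that are vertex-disjoint across all components and then enforces non-adjacency via a minimum-boundary-degree selection, whereas you fold vertex-overlap and adjacency into a single averaging step inside an induction on the number of components; your constants come out slightly differently (e.g., an edge can witness adjacency for two pairs of copies when families overlap, and your bound narrowly misses the stated $|\mathcal{F}|\ell^2c^3d$ in degenerate corner cases such as $c=1$), but the lemma only requires some bound depending on $\mathcal{F}$ and $d$, which your argument delivers.
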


\begin{proof}
The proof strategy is similar to the proof of Lemma~\ref{lem:sub1}. But, here we use the fact that the degeneracy of ${G}$ is $d$. 
  Let $S$ be an optimum solution for the \textsc{Induced Subgraph Hitting} problem  on $(G,\mathcal{F})$. 
First, we claim that for each $F\in \mathcal{F}$, there is a connected component $D_F$ in $\mathcal{C}_F$ such that the maximum number of vertex disjoint induced subgraphs of $G-S$ that are isomorphic to $D_F$ is at most $\ell^2 c^2 d$. Towards the proof of the claim, we fix a graph $F\in \mathcal{F}$ and let $C_1,\ldots, C_{\ell'}$ be the connected components of $F$. 
Suppose the claim is not true. Then, for all  $i\in \{1,2,\ldots,\ell'\}$, the maximum number of vertex disjoint induced subgraphs of $G-S$ that are isomorphic to $C_i$ is strictly more than $\ell^2 c^2 d$. Then, we explain a procedure to obtain an induced subgraph isomorphic to $F$ 
in $G-S$, leading to a contradiction to the fact that $S$ is a solution for \textsc{Induced Subgraph Hitting} on $(G,\mathcal{F})$. 

Let $p=\ell \cdot c \cdot d$. 
As a first step we construct sets of induced subgraphs $\mathcal{I}_1,\ldots,.\mathcal{I}_{\ell'}$ of $G-S$ such that the graphs in $\bigcup_{i\in [\ell']} \mathcal{I}_i$ are pairwise vertex disjoint and for each $i\in \{1,\ldots,\ell'\}$, $\vert \mathcal{I}_i \vert=p$ and the graphs in $\mathcal{I}_i$ are isomorphic to $C_i$. Towards that, we pick a set of $p$ vertex disjoint induced subgraphs $\mathcal{I}_1= \{H_{1,1},H_{1,2},\ldots,H_{1,p}\}$ of $G-S$ that are isomorphic to $C_1$.  Since $|V(H_{1,r})|\leq c$ for all $r\in \{1,2,\ldots,p\}$, there are strictly more than $\ell^2c^2d-(p\cdot c)$ vertex disjoint induced subgraphs of $G_1=G-(S\cup V_1)$ that are isomorphic to $C_j$ for all $j\in \{2,\ldots,\ell'\}$, where $V_1=\bigcup_{H\in \mathcal{I}_1} V(H)$. 
Next, we pick a set of $p$ induced subgraphs $\mathcal{I}_2=\{H_{2,1},\ldots,H_{2,p}\}$ of $G_1$ that are isomorphic to $C_2$. 
In general, at the end of step $i$, we have sets $\mathcal{I}_1,\ldots \mathcal{I}_i$ of induced subgraphs of $G-S$ such that the following properties holds. 
\begin{itemize}
\item For each $j\in \{1,2,\ldots,i\}$, $|\mathcal{I}_i|=p$ and each graph in $\mathcal{I}_i$ is isomorphic to $C_i$. 
\item The induced graphs of $G-S$ in $\mathcal{H}_{\leq i}=\bigcup_{j\in [i]} \mathcal{I}_j$ are pairwise vertex disjoint.  
    \item For any $j\in \{i+1,\ldots,\ell\}$
there are strictly more than $\ell^2c^2d - (p\cdot i\cdot c)=d \ell c^2(\ell- i)$ vertex disjoint induced subgraphs of $G_i=G-(S\cup \bigcup_{H\in \mathcal{H}_{\leq i}} V(H))$   that are isomorphic to $C_j$.
\end{itemize}

In step $i+1$, we pick a set of $p$ vertex disjoint induced subgraphs $\mathcal{I}_{i+1}=\{H_{i+1,1},\ldots, H_{i+1,p}\}$ of $G_i$ that are isomorphic to $C_{i+1}$. Thus, since $\ell' \leq \ell$, at the end of  step $\ell'$,   
we obtain a set of vertex disjoint induced subgraphs $\mathcal{H}_{\leq \ell'}=\bigcup_{r\in [\ell']} \mathcal{I}_r$ of $G-S$. Notice that for any $i\in [\ell']$, the graphs in $\mathcal{I}_i$ are isomorphic to $C_i$.  


Next, for each $i\in [\ell']$, we identify a graph $J_i$ from $\mathcal{I}_i$ such that there is no edge in $G$ between any two vertices from distinct $J_i$ and $J_{i'}$. In other words, $J_1\cup J_2\cup \ldots \cup J_{\ell'}$ form an induced subgraph of $G-S$ that is isomorphic to $F$ which will be a contradiction. Let $G^{\star}$ be the subgraph of $G-S$ induced on the union of the vertices of the graphs in  $\mathcal{H}_{\leq \ell'}$. Notice that the number of vertices in each graph in $\mathcal{H}_{\leq \ell'}$ is at most $c$. Since $G-S$ (and hence $G^{\star}$)
has degeneracy at most $d$, there is a graph $J\in \mathcal{H}_{\leq \ell'}$ such that $\delta_{G^{\star}
}(V(J))\leq c \cdot d$. Let $i_1 \in [\ell']$ be the index such that $J\in \mathcal{I}_{i_1}$. Now, we set $J_{i_1}=J$. Let $G^{\star}_1$ be the subgraph of $G^{\star}$ induced on the union of the vertices of the graphs $H$ in $H_{\leq \ell'}\setminus I_{i_1}$
such that no vertex in $H$ is adjacent to a vertex in $J_{i_1}$. Notice that for each $j\in \{1,2,\ldots,\ell'\} \setminus \{i_1\}$, there are at least $p-(c\cdot d)$ graphs in $\{H \in I_j ~\colon~ V(H)\cap N(V(J_{i_1}))=\emptyset\}$. Now, there is an index $i_2\in \{1,2,\ldots,\ell'\}\setminus \{i_1\}$ and a graph $J_{i_2}$ in $\{H \in I_{i_2} ~\colon~ V(H)\cap N(V(J_{i_1}))=\emptyset\}$ such that $\delta_{G^{\star}_1}(V(J_{i_2}))\leq c \cdot d$. We continue this process, and at the end of step $j$, we have $j$ distinct indices $i_1,i_2,\ldots,i_j$ and graphs $J_{i_1}\in \mathcal{I}_{i_1}, J_{i_2}\in \mathcal{I}_{i_2}, \ldots, J_{i_j}\in \mathcal{I}_{i_j}$ with the following properties. 

\begin{itemize}
    \item There do not exist two distinct indices $s$ and $t$ in $\{i_1,i_2,\ldots,i_j\}$ such that there is an edge in $G$ between a vertex in $V(J_{s})$ and a vertex in $V(J_t)$. 
    \item For each $j'\in \{1,2,\ldots,q\}\setminus \{i_1,i_2,\ldots,i_j\}$, 
$\vert \{H \in \mathcal{I}_{j'} ~\colon~ V(H)\cap N(\bigcup_{r\in [j]} V(J_{i_r}))=\emptyset\}\vert$ is at least $p- (j \cdot c \cdot d)$. 

\end{itemize}

Then, we find an index $i_{j+1}$ and a graph $J_{j+1}\in \mathcal{I}_{j+1}$ such that no vertex in $V(J_{i_{j+1}})$ is adjacent to a vertex in $\bigcup_{r\in [j]}V(J_{i_r})$ and 
$\delta_{G^{\star}_{j}
}(V(J_{i_{j+1}}))\leq c \cdot d$. Here, $G^{\star}_{j}$ is the graph induced on the union of the vertices in the graphs $H$ in $\mathcal{H}_{\leq \ell'} \setminus (\bigcup_{r\in [j]} \mathcal{I}_{i_r})$ such that no vertex in $V(H)$ is adjacent to a vertex in $\bigcup_{r\in [j]} V(J_{i_j})$.  
Thus, at the end of step $\ell'$, we obtain  $J_{1}\in \mathcal{I}_{1}, J_{2}\in \mathcal{I}_{2}, \ldots, J_{{\ell'}}\in \mathcal{I}_{{\ell'}}$ with the following property.  
There do not exist two distinct indices $s$ and $t$ in $\{1,2,\ldots,\ell'\}$ such that there is an edge in $G$ between a vertex in $V(J_{s})$ and a vertex in $V(J_t)$. This implies that the subgraph of $G-S$ induced on $\bigcup_{r\in [\ell']} V(J_r)$ is isomorphic to $F$. This is a contradiction to the fact that $S$ is a solution. 

Thus, we have proved that for each $F\in \mathcal{F}$, there is a connected component $D_F$ in $\mathcal{C}_F$ such that the maximum number of vertex disjoint induced subgraphs of $G-S$ that are isomorphic to $D_F$ is at most $\ell^2c^2 d$. Then,  for each $F\in \mathcal{F}$, there is a vertex subset $S_F$ of $V(G)\setminus S$, of size at most $\ell^2 c^3 d$ such that $S_F$ intersects with all the induced subgraphs of $G-S$ that are isomorphic to $D_F$. This implies that $S^{\star}=S\cup \bigcup_{F\in \mathcal{F}} S_F$ is a solution for \textsc{Induced Subgraph Hitting} on $(G,\mathcal{F}')$ where $\mathcal{F}' =\{D_F ~:~F\in \mathcal{F}\}$ and $|S^{\star}|\leq |S|+|\mathcal{F}|\ell^2 c^3 d = \mathsf{opt}^*(G,\mathcal{F}) + \alpha^*_\mathcal{F}$. Here $\alpha^*_\mathcal{F}=|\mathcal{F}|\ell^2 c^3 d$. 
\end{proof}

Using the above two lemmas, we can do our reduction from general forbidden patterns to connected forbidden patterns.

\begin{corollary} \label{cor-toconnect}
Let $\mathcal{G}$ be any graph class of bounded expansion.
If \textsc{(Induced) Subgraph Hitting} on $\mathcal{G}$ with connected forbidden patterns admits an approximation scheme with running time $f_0(\varepsilon,\mathcal{F}) \cdot n^c$ for some function $f_0$, then \textsc{(Induced) Subgraph Hitting} on $\mathcal{G}$ (with arbitrary forbidden patterns) admits an approximation scheme with running time $f(\varepsilon,\mathcal{F}) \cdot n^c$ for some function $f$.
\end{corollary}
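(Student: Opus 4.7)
The plan is to leverage the two lemmas just proved, which show that for any instance $(G,\mathcal{F})$ there exists a connected-component selection $\mathcal{F}' \in \mathsf{Conn}(\mathcal{F})$ with $\mathsf{opt}(G,\mathcal{F}') \leq \mathsf{opt}(G,\mathcal{F}) + \alpha_\mathcal{F}$, where $\alpha_\mathcal{F}$ (or $\alpha^*_\mathcal{F}$ in the induced case) depends only on $\mathcal{F}$ and on $\mathcal{G}$. Observe that any feasible solution for $(G,\mathcal{F}')$ is automatically feasible for $(G,\mathcal{F})$: if $G-S$ contains no copy of some connected component $D_F \in \mathcal{C}_F$ of $F$, then it cannot contain $F$ itself as a (induced) subgraph. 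Therefore, running the assumed connected-pattern approximation scheme $\mathbf{A}$ on every $\mathcal{F}' \in \mathsf{Conn}(\mathcal{F})$ (of which there are at most $\ell^{|\mathcal{F}|}$ choices, a quantity depending only on $\mathcal{F}$) and returning the smallest output yields, in $f_1(\varepsilon,\mathcal{F}) \cdot n^c$ time, a hitting set $S^*$ for $(G,\mathcal{F})$ with $|S^*| \leq (1+\varepsilon)\bigl(\mathsf{opt}(G,\mathcal{F}) + \alpha_\mathcal{F}\bigr)$.

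To convert the additive slack into a purely multiplicative guarantee, I split on the size of the optimum. Set $\tau = \lceil \alpha_\mathcal{F}/\varepsilon \rceil$. When $\mathsf{opt}(G,\mathcal{F}) \geq \tau$ we have $|S^*| \leq (1+\varepsilon)^2 \cdot \mathsf{opt}(G,\mathcal{F})$, which after rescaling $\varepsilon$ gives the desired approximation. When $\mathsf{opt}(G,\mathcal{F}) < \tau$, the optimum is bounded by a function of $\varepsilon$ and $\mathcal{F}$ alone, and I can solve the instance exactly by a straightforward bounded-depth branching algorithm: by Theorem~\ref{thm:IsoBoundedExp}, in linear time I can find any forbidden (induced) subgraph $H \subseteq G$, then branch on which of the at most $\gamma = \max_{F \in \mathcal{F}} |V(F)|$ vertices of $H$ to include in the solution; the recursion tree has depth at most $\tau$ and branching factor $\gamma$, giving running time $\gamma^\tau \cdot O(n) = f_2(\varepsilon,\mathcal{F}) \cdot n$. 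Since we do not know a priori which case we are in, I run the branching algorithm with budget $\tau$: if it returns a solution of size at most $\tau$ that is optimal, we use it; otherwise $\mathsf{opt}(G,\mathcal{F}) > \tau$ and we return $S^*$. The overall running time is $f(\varepsilon,\mathcal{F}) \cdot n^c$ for a suitable $f$.

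The only subtlety is the induced case, where the additive bound $\alpha^*_\mathcal{F}$ from the second lemma depends on the degeneracy $d$ of $G$. Here I use that a graph class of bounded expansion has bounded degeneracy: every subgraph of $G \in \mathcal{G}$ has average degree at most $2\nabla_0(\mathcal{G})$, so $d \leq 2\nabla_0(\mathcal{G})$ is a constant depending only on $\mathcal{G}$. Consequently $\alpha^*_\mathcal{F}$ remains a function of $\mathcal{F}$ and $\mathcal{G}$ alone, and the entire argument above carries over without change. I expect this degeneracy observation to be the main (quite mild) obstacle; everything else is a direct synthesis of the two lemmas with the model-checking-based FPT subroutine.
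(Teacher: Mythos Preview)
Your proposal is correct and follows essentially the same approach as the paper: run the connected-pattern scheme on every $\mathcal{F}' \in \mathsf{Conn}(\mathcal{F})$ and take the best output, handle the small-optimum case by bounded-depth branching via Theorem~\ref{thm:IsoBoundedExp}, and absorb the degeneracy dependence in the induced case using $d \leq 2\nabla_0(\mathcal{G})$. The paper uses the threshold $\frac{5}{\varepsilon}\alpha_\mathcal{F}$ rather than your $\lceil \alpha_\mathcal{F}/\varepsilon\rceil$, but this is cosmetic.
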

\begin{proof}
We prove the lemma for \textsc{Subgraph Hitting} and the case for \textsc{(Induced) Subgraph Hitting} is identical and hence omitted. 
Let $\mathcal{A}$ be an algorithm that given $(G,\mathcal{F})$ and $\varepsilon>0$, outputs a $(1+\varepsilon)$ approximate solution for \textsc{Subgraph Hitting} on  $(G,\mathcal{F})$, where $\mathcal{F}$ is a set of connected forbidden patterns. 

Now, we design an algorithm $\mathcal{A}'$  using $\mathcal{A}$ when the forbidden patterns in $\mathcal{F}$ are arbitrary. Let $(G,\mathcal{F})$ be the input and $\varepsilon>0$ be the given error parameter. 
Recall that $\mathsf{Conn}(\mathcal{F}) = \{\{C_1,\dots,C_m\}: C_i \in \mathcal{C}_{F_i}\}$. Notice that for any $\mathcal{F}_1\in \mathsf{Conn}(\mathcal{F})$ a solution for \textsc{Subgraph Hitting} on $(G,\mathcal{F}_1)$ is also a solution to $(G,\mathcal{F})$. Moreover, by Lemma~\ref{lem:sub1} we know that there exist a constant $\alpha_{\mathcal{F}}$ (which is computable) and  $\mathcal{F}'\in \mathsf{Conn}(\mathcal{F})$ such that ${\sf opt}(G,\mathcal{F}')\leq {\sf opt}(G,\mathcal{F})+\alpha_{\mathcal{F}}$. First, we test whether ${\sf opt}(G,\mathcal{F}) \leq \frac{5}{\varepsilon} \alpha_{\mathcal{F}}$ and if so, we compute an optimum solution. Towards that, we can design a simple branching algorithm running in time $g(\mathcal{F}) \cdot n$ using  
Theorem~\ref{thm:IsoBoundedExp}, for some function $g$. That is, find a subgraph $G'$ of $G$ isomorphic to a graph in $\mathcal{F}$ and for each $v\in V(G')$, recursively find a solution to $(G-v,\mathcal{F})$ (if the there is a solution of size at most $\frac{5}{\varepsilon} \alpha_{\mathcal{F}}-1$) and output the best solution.  Now on, we assume  that ${\sf opt}(G,\mathcal{F})> \frac{5}{\varepsilon} \alpha_{\mathcal{F}}$. 

Now, by Lemma~\ref{lem:sub1}, we know that there exists $\mathcal{F}'\in \mathsf{Conn}(\mathcal{F})$ such that 
${\sf opt}(G,\mathcal{F}')\leq {\sf opt}(G,\mathcal{F})+\alpha_{\mathcal{F}} \leq (1+\frac{\varepsilon}{5}){\sf opt}(G,\mathcal{F})$. Hence a $(1+\frac{\varepsilon}{2})$-approximate solution to $(G,\mathcal{F}')$ is a $(1+\varepsilon)$-approximate solution to $(G,\mathcal{F}')$.  
Thus, for each $\mathcal{F}_1\in \mathsf{Conn}(\mathcal{F})$, we run the algorithm $\mathcal{A}$ (because $\mathcal{F}_1$ is a set of connected graphs) and output the best solution among them. 

Notice that the number of times  we run $\mathcal{A}$ is 
$|\mathsf{Conn}(\mathcal{F})|$ which is bounded by a function of~$\mathcal{F}$.
Thus, the total running time of our algorithm is $f(\varepsilon,\mathcal{F}) \cdot n^c$ for some function $f$. 
\end{proof}

\subsection{Algorithm for connected forbidden patterns}

Based on the discussion in the previous section, it suffices to design an $f(\varepsilon,\mathcal{F}) \cdot n^c$-time approximation scheme for \textsc{(Induced) Subgraph Hitting} on $\mathcal{G}$ with connected forbidden patterns, provided an approximation scheme for the same problem running in $f_0(\varepsilon,\mathcal{F},\Delta) \cdot n^c$ time (for convenience, we call it the \textit{degree-sensitive} approximation scheme).
We denoted by $\textsc{Hitting}'$ the provided degree-sensitive approximation scheme, where $\textsc{Hitting}'(G,\mathcal{F},\varepsilon)$ returns a $(1+\varepsilon)$-approximation solution for the \textsc{(Induced) Subgraph Hitting} instance $(G,\mathcal{F})$.

Let $(G,\mathcal{F})$ be the input where $G \in \mathcal{G}$, and $\varepsilon > 0$ be the approximation ratio.
Suppose all graphs in $\mathcal{F}$ are connected.
Set $n = |V(G)|$ and $\gamma = \max_{F \in \mathcal{F}} |V(F)|$.
Our approximation schemes for \textsc{Subgraph Hitting} and \textsc{Induced Subgraph Hitting} are essentially identical, so we show them together in Algorithm~\ref{alg-connected}.
At the beginning, we define $\mathcal{V} \coloneqq \mathcal{V}_\mathcal{F}(G)$ for \textsc{Subgraph Hitting} and $\mathcal{V} \coloneqq \mathcal{V}_{\mathcal{F}}^\mathsf{in}(G)$ for \textsc{Induced Subgraph Hitting} (line~1).
After this, every step is the same for \textsc{Subgraph Hitting} and \textsc{Induced Subgraph Hitting} (the underlying problem of the two sub-routines $\textsc{Hitting}'$ and $\textsc{Hitting}''$ is always the same as the problem we are solving).

Let $\delta = O_{\varepsilon,\mathcal{F}}(1)$ and $\delta' = O_{\varepsilon,\mathcal{F}}(1)$ be two parameters to be determined later.
Besides defining $\mathcal{V}$, we also define a collection $\varGamma_H$ and a set $R_H$ for every induced subgraph $H$ of $G$, where $\varGamma_H$ consists of all nonempty subsets $X \subseteq V(H)$ that is the core of a $\delta$-sunflower in $\mathcal{V} \Cap V(H)$ and $R_H$ consists of all vertices $v \in V(H)$ such that every $V \in \mathcal{V} \Cap V(H)$ with $v \in V$ contains some set in $\varGamma_{H - \{v\}}$.
Note that in line~1 we only define (conceptually) these sets instead of computing them; in fact, we cannot afford to compute them and also do not need to.
In line~2-5, we construct an induced subgraph $G_1$ of $G$.
Initially, we set $G_1 = G$.
Whenever $R_{G_1} \neq \emptyset$, we arbitrarily pick $v \in R_{G_1}$ and remove it from $G_1$.
We keep doing this until $R_{G_1} = \emptyset$.
Note that $G_1$ is changing during this procedure and hence $R_{G_1}$ is also changing accordingly.
Next, in line~6-7, we in turn construct an induced subgraph $G_2$ of $G_1$.
This construction is simply removing all large-degree vertices from $G_1$, using $\delta'$ as the threshold.
Specifically, we define $G_2 = G_1 - V^*$ where $V^* \subseteq V(G_1)$ consists of all vertices in $G_1$ of degree at least $\delta'$.

\begin{algorithm}[h]
    \caption{\textsc{Hitting}$(G,\mathcal{F},\varepsilon)$}
	\begin{algorithmic}[1]
	    \State $\mathcal{V} \coloneqq \mathcal{V}_{\mathcal{F}}(G)$ for \textsc{Subgraph Hitting} and $\mathcal{V} \coloneqq \mathcal{V}_{\mathcal{F}}^\mathsf{in}(G)$ for \textsc{Induced Subgraph Hitting}
	    
	    \noindent
	    $\varGamma_H \coloneqq \{X \subseteq V(H): X \neq \emptyset \text{ is the core of a } \delta \text{-sunflower in } \mathcal{V} \Cap V(H)\}$ for $H \subseteq_\mathsf{in} G$ 
	    
	    \noindent
	    $R_H \coloneqq \{v \in V(H): \text{every } V \in \mathcal{V} \Cap V(H) \text{ with } v \in V \text{ contains some } X \in \varGamma_{H - \{v\}} \}$ for $H \subseteq_\mathsf{in} G$
            \smallskip
	    \State $G_1 \leftarrow G$
	    \While{$R_{G_1} \neq \emptyset$}
            \State $v \leftarrow$ an arbitrary vertex in $R_{G_1}$
            \State $G_1 \leftarrow G_1-\{v\}$
	    \EndWhile
	    \State $V^* \leftarrow \{v \in V(G_1): \mathsf{deg}_{G_1}(v) \geq \delta'\}$
	    \State $G_2 \leftarrow G_1 - V^*$
	    \smallskip
	    \State $S_2 \leftarrow \textsc{Hitting}'(G_2,\mathcal{F},\varepsilon/4)$ \Comment{a solution of $(G_2,\mathcal{F})$}
	    \State $S_1 \leftarrow S_2 \cup V^*$ \Comment{a solution of $(G_1,\mathcal{F})$}
            \State $\sigma \leftarrow \textsc{Order}(G,\gamma)$ \Comment{applying Theorem~\ref{thm-ordering}}
            \State $S \leftarrow S_1$
            \While{$\exists$ $v \in V(G) \backslash S$ such that $\rho(v,S) \geq \delta - \mathsf{wcol}_\gamma(G,\sigma)$}
                \State $S \leftarrow S \cup \{v\}$
            \EndWhile
	    \State \textbf{return} $S$ \Comment{a solution of $(G,\mathcal{F})$}
	\end{algorithmic}
	\label{alg-connected}
\end{algorithm}

The rest of the algorithm (lines 8-14) is dedicated to computing a solution for the instance $(G,\mathcal{F})$.
This is done backwards: we first compute a solution $S_2 \subseteq V(G_2)$ for $(G_2,\mathcal{F})$, then use it to obtain a solution $S_1 \subseteq V(G_1)$ for $(G_1,\mathcal{F})$ and in turn obtain the desired solution $S \subseteq V(G)$ for $(G,\mathcal{F})$.
Note that the maximum degree of $G_2$ is bounded by $\delta' = O_{\varepsilon,\mathcal{F}}(1)$.
Furthermore, $G_2 \in \mathcal{G}$, as $\mathcal{G}$ is hereditary.
Thus, we can use the degree-sensitive approximation scheme $\textsc{Hitting}'$ to efficiently compute $S_2$ (line~8).
Here we choose the approximation ratio of $S_2$ to be $1+\frac{\varepsilon}{4}$.
Then we simply define $S_1 = S_2 \cap V^*$ (line~9).
Since $G_1 - S_1 = G_2 - S_2$, $S_1$ is a solution for $(G_1,\mathcal{F})$.
To further compute the solution $S$ for $(G,\mathcal{F})$ is more complicated.
We first apply the algorithm of Theorem~\ref{thm-ordering} on $G$ and $r = \gamma$ to obtain an ordering $\sigma$ of the vertices of $G$ (line~10).
For a vertex $v \in V(G)$ and a set $A \subseteq V(G)$, we write $\rho(v,A) = |\{u \in A: v \in \text{WR}_\gamma(G,\sigma,u)\}|$, i.e., the number of vertices $u \in A$ such that $v$ is weakly reachable from $u$ under $\sigma$ within distance $\gamma$.
Then we construct $S$ iteratively as follows.
Initially, set $S = S_1$ (line~11).
Whenever there exists a vertex $v \in V(G) \backslash S$ such that $\rho(v,S) \geq \delta - \mathsf{wcol}_\gamma(G,\sigma)$, we add it to $S$.
We keep doing this until we cannot find such a vertex $v$.
Note that $S$ is changing during this procedure, and hence $\rho(v,S)$ might also change.
When the procedure terminates, we return $S$ (line~14).
Later we will see that $S$ is a solution for $(G,\mathcal{F})$.

In the following sections, we analyze the correctness of Algorithm~\ref{alg-connected} and show how to implement it with the desired running time.

\subsection{Analysis} \label{sec-correct}
In this section, we analyze the correctness of Algorithm~\ref{alg-connected}.
We shall show that if the parameters $\delta$ and $\delta'$ are chosen properly, then the set $S$ returned by Algorithm~\ref{alg-connected} is a $(1+\varepsilon)$-approximation solution for the instance $(G,\mathcal{F})$.
For convenience, for every induced subgraph $H$ of $G$, we write $\mathsf{opt}(H)$ as the minimum size of a hitting set of $\mathcal{V} \Cap V(H)$.
Clearly, $\mathsf{opt}(H)$ is the optimum for the instance $(H,\mathcal{F})$.
Throughout the analysis, we only one basic properties of $\mathcal{V}$ (which hold no matter whether $\mathcal{V} = \mathcal{V}_\mathcal{F}(G)$ or $\mathcal{V} = \mathcal{V}_\mathcal{F}^\mathsf{in}(G)$): $G[V]$ is connected for all $V \in \mathcal{V}$.
As such, we do not need to distinguish the cases $\mathcal{V} = \mathcal{V}_\mathcal{F}(G)$ and $\mathcal{V} = \mathcal{V}_\mathcal{F}^\mathsf{in}(G)$, and can do the analysis for \textsc{Subgraph Hitting} and \textsc{Induced Subgraph Hitting} simultaneously.

\subsubsection{From $S_1$ to $S$}

We first show that if the parameter $\delta$ is chosen properly and $S_1$ is a $(1+\frac{\varepsilon}{2})$-approximation solution of $(G_1,\mathcal{F})$, then $S$ is a $(1+\varepsilon)$-approximation solution of $(G,\mathcal{F})$.
The following lemma gives a bound on the size of $S$.
\begin{lemma} \label{lem-S1'size}
    $|S \backslash S_1| \leq \frac{\mathsf{wcol}_\gamma(G,\sigma)}{\delta - \mathsf{wcol}_\gamma(G,\sigma)} \cdot |S|$ and thus $|S| \leq \left(1+\frac{\mathsf{wcol}_\gamma(G,\sigma)}{\delta - 2\mathsf{wcol}_\gamma(G,\sigma)}\right) \cdot |S_1|$.
\end{lemma}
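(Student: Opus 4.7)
The plan is to prove this lemma via a simple double-counting argument over the bipartite incidence relation
$$ I = \{(u,v) \in S_1 \times S_1' : v \in \text{WR}_\gamma(G,\sigma,u)\}.$$
The second inequality follows from the first by $|S| = |S_1 \cup S_1'| \leq |S_1| + |S_1'|$, so the whole content is the bound on $|S_1'|$.

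First, I would upper-bound $|I|$ by fixing the first coordinate. By the definition of the weak coloring number, for every $u \in S_1$ we have $|\text{WR}_\gamma(G,\sigma,u)| \leq \mathsf{wcol}_\gamma(G,\sigma)$. Therefore, summing over $u \in S_1$,
$$ |I| \;=\; \sum_{u \in S_1} |\{v \in S_1' : v \in \text{WR}_\gamma(G,\sigma,u)\}| \;\leq\; |S_1| \cdot \mathsf{wcol}_\gamma(G,\sigma).$$

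Next, I would lower-bound $|I|$ by fixing the second coordinate. By definition of $S_1'$, every $v \in S_1'$ satisfies $\rho_v \geq \delta - \mathsf{wcol}_\gamma(G,\sigma)$, where $\rho_v$ counts exactly the number of $u \in S_1$ with $v \in \text{WR}_\gamma(G,\sigma,u)$. Hence
$$ |I| \;=\; \sum_{v \in S_1'} \rho_v \;\geq\; |S_1'| \cdot (\delta - \mathsf{wcol}_\gamma(G,\sigma)).$$
Combining the two bounds and dividing through by $\delta - \mathsf{wcol}_\gamma(G,\sigma)$ (which is positive once $\delta$ is chosen large enough, as will be ensured later in the analysis) yields $|S_1'| \leq \frac{\mathsf{wcol}_\gamma(G,\sigma)}{\delta - \mathsf{wcol}_\gamma(G,\sigma)} \cdot |S_1|$.

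There is essentially no obstacle here, since the lemma is a pure counting statement that follows directly from the definition of $S_1'$ and the weak coloring number bound on $|\text{WR}_\gamma(G,\sigma,u)|$. The only thing one should note is that the lemma implicitly assumes $\delta > \mathsf{wcol}_\gamma(G,\sigma)$, which is a parameter condition that will be guaranteed when $\delta$ is eventually fixed (recall that $\delta$ is meant to be chosen as a sufficiently large function of $\varepsilon$ and $\mathcal{F}$, while $\mathsf{wcol}_\gamma(G,\sigma)$ is bounded by a function of $\mathcal{G}$ and $\gamma$ by Theorem~\ref{thm-weakcolor}).
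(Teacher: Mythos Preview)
Your proof is correct and is essentially the same double-counting/averaging argument as the paper's: both bound the number of pairs $(u,v)$ with $u\in S_1$ and $v\in\text{WR}_\gamma(G,\sigma,u)$ from above by $\mathsf{wcol}_\gamma(G,\sigma)\cdot|S_1|$ and from below by $(\delta-\mathsf{wcol}_\gamma(G,\sigma))\cdot|S_1'|$. The only cosmetic difference is that the paper sums $\rho_v$ over all of $V(G)$ rather than just $S_1'$, but this does not change anything.
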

\begin{proof}
    By the definition of weakly reachability, $|\text{WR}_\gamma(G,\sigma,v)| \leq \mathsf{wcol}_\gamma(G,\sigma)$ for all $v \in V(G)$.
    For every vertex $v \in V(G)$, $\rho(v,S)$ is equal to the number of pairs $(u,v)$ such that $u \in S$ and $v \in \text{WR}_\gamma(G,\sigma,u)$.
    Therefore,
    \begin{equation*}
        \sum_{v \in V(G)} \rho(v,S) = \sum_{u \in S} |\text{WR}_\gamma(G,\sigma,u)| \leq \mathsf{wcol}_\gamma(G,\sigma) \cdot |S|.
    \end{equation*}
    In particular, $\sum_{v \in S \backslash S_1} \rho(v,S) \leq \mathsf{wcol}_\gamma(G,\sigma) \cdot |S|$.
    For every $v \in S \backslash S_1$, $\rho(v,S) \geq \delta - \mathsf{wcol}_\gamma(G,\sigma)$ when we add $v$ to $S$ in line~13, and thus we also have $\rho(v,S) \geq \delta - \mathsf{wcol}_\gamma(G,\sigma)$ after the entire while-loop (line~12-13).
    Therefore, $\sum_{v \in S \backslash S_1} \rho(v,S) \geq (\delta - \mathsf{wcol}_\gamma(G,\sigma)) \cdot |S \backslash S_1|$.
    It follows that 
    \begin{equation*}
        (\delta - \mathsf{wcol}_\gamma(G,\sigma)) \cdot |S \backslash S_1| \leq \sum_{v \in S \backslash S_1} \rho(v,S) \leq \mathsf{wcol}_\gamma(G,\sigma) \cdot |S|,
    \end{equation*}
    and hence $|S \backslash S_1| \leq \frac{\mathsf{wcol}_\gamma(G,\sigma)}{\delta - \mathsf{wcol}_\gamma(G,\sigma)} \cdot |S|$.
    This further implies that $|S_1| \geq \left(1-\frac{\mathsf{wcol}_\gamma(G,\sigma)}{\delta - \mathsf{wcol}_\gamma(G,\sigma)}\right) \cdot |S|$.
    So we finally have $|S| \leq \left(1+\frac{\mathsf{wcol}_\gamma(G,\sigma)}{\delta - 2\mathsf{wcol}_\gamma(G,\sigma)}\right) \cdot |S_1|$.
\end{proof}

Next, we show that $S$ is truly a solution of $(G,\mathcal{F})$.
Equivalently, we need to show that $S$ hits all sets in $\mathcal{V}$.
Suppose the while-loop in line~3-5 has $r$ iterations, and let $v_i$ denote the vertex removed from $G_1$ in the $i$-th iteration for $i \in [r]$.
Then $G_1 = G - \{v_1,\dots,v_r\}$.
Define $H_i = G - \{v_1,\dots,v_{r-i}\}$ for $i \in \{0\} \cup [r]$.
Note that $H_r = G$ and $H_0 = G_1$.
\begin{lemma} \label{lem-isasol}
    Suppose $S_1$ is a solution for $(G_1,\mathcal{F})$.
    Then for every $i \in \{0\} \cup [r]$, $S$ hits all sets in $\mathcal{V} \Cap V(H_i)$.
    In particular, $S$ hits all sets in $\mathcal{V}$.
\end{lemma}
\begin{proof}
We apply induction on $i$.
When $i = 0$, the statement trivially holds, since $H_0 = G_1$ and the sets in $\mathcal{V} \Cap V(G_1)$ are all hit by $S_1$ (and thus by $S$) according to our assumption.
Suppose the statement holds for $i-1$, and we shall show it also holds for $i$.
We first show that $S$ hits all sets in $\varGamma_{H_{i-1}}$.
Consider a set $X \in \varGamma_{H_{i-1}}$, and assume $X \cap S = \emptyset$ in order to deduce a contradiction.
By definition, $X$ is the core of a sunflower $V_1,\dots,V_\delta \in \mathcal{V} \Cap V(H_{i-1})$.
Let $v^* \in X$ be the largest vertex under the ordering $\sigma$, i.e., $v^* \geq_\sigma v$ for all $v \in X$.
Define $I$ as the set of all indices $i \in [\delta]$ such that all vertices in $V_i$ are smaller than or equal to $v^*$ under the ordering $\sigma$.
For each $i \in [\delta]$, let $v_i \in V_i$ denote the largest vertex under $\sigma$.
Note that the induced subgraph $G[V_i]$ is connected, as $V_i \in \mathcal{V}$.
So there exists a path $\pi_i$ in $G[V_i]$ of length at most $|V_i| \leq \gamma$ which connects $v^*$ and $v_i$.
Since $v_i$ is the largest vertex in $V_i$, it is also the largest vertex on $\pi_i$, which implies $v_i \in \text{WR}_\gamma(G,\sigma,v^*)$.
By the definition of $I$, we have $v_i >_\sigma v^*$ for all $i \in [\delta] \backslash I$ and thus $v_i \in V_i \backslash X$, since $v^*$ is the largest vertex in $X$ under the ordering $\sigma$.
Therefore, the vertices $v_i$ for $i \in [\delta] \backslash I$ are distinct, and there are $\delta-|I|$ such vertices.
Using the fact that all these vertices are contained in $\text{WR}_\gamma(G,\sigma,v^*)$, we have
\begin{equation*}
    \delta - |I| \leq |\text{WR}_\gamma(G,\sigma,v^*)| \leq \mathsf{wcol}_\gamma(G,\sigma).
\end{equation*}
Equivalently, $|I| \geq \delta - \mathsf{wcol}_\gamma(G,\sigma)$.
As $X \cap S = \emptyset$, we must have $(V_i \backslash X) \cap S \neq \emptyset$ for all $i \in [\delta]$, because $S$ is a hitting set of $\mathcal{V} \Cap V(H_{i-1})$ and thus hits $V_1,\dots,V_\delta$.
For each $i \in [\delta]$, we pick a vertex $p_i \in (V_i \backslash X) \cap S$.
Note that the vertices $p_1,\dots,p_\delta$ are distinct, since $V_1 \backslash X,\dots,V_\delta \backslash X$ are disjoint.
We claim that $v^* \in \text{WR}_\gamma(G,\sigma,p_i)$ for all $i \in I$.
Indeed, if $i \in I$, then $v^*$ is the largest vertex in $V_i$ under $\sigma$.
Since $G[V_i]$ is connected, there is a path in $G[V_i]$ of length at most $|V_i| \leq \gamma$ which connects $p_i$ and $v^*$.
On this path, $v^*$ is the largest vertex, which implies $v^* \in \text{WR}_\gamma(G,\sigma,p_i)$.
It follows that $\rho(v^*,S) \geq |I|$, as the vertices $p_1,\dots,p_\delta$ are distinct.
Recall that $|I| \geq \delta-\mathsf{wcol}_\gamma(G,\sigma)$.
Therefore, $\rho(v^*,S) \geq \delta - \mathsf{wcol}_\gamma(G,\sigma)$.
This implies $v^* \in S$, for otherwise $v^*$ will be added to $S$ during the while-loop in line~12-13.
However, this contradicts with our assumption $X \cap S = \emptyset$.
So we conclude that $S$ hits all sets in $\varGamma_{H_{i-1}}$.

To complete the induction argument, we need to further show that $S$ hits all sets in $\mathcal{V} \Cap V(H_i)$.
Set $v = v_{r-i+1}$.
Then $H_{i-1} = H_i - \{v\}$.
Consider a set $V \in \mathcal{V} \Cap V(H_i)$.
If $v \notin V$, then $V \in \mathcal{V} \Cap V(H_{i-1})$ and $V$ is hit by $S$ according to our induction hypothesis.
So assume $v \in V$.
Note that $v \in R_{H_i}$, since in the while-loop in line~3-5 we removed $v$ from $H_i$ to obtain $H_{i-1}$.
By the definition of $R_{H_i}$ and the fact $v \in V$, there exists $X \in \varGamma_{H_i - \{v\}} = \varGamma_{H_{i-1}}$ such that $X \subseteq V$.
As shown above, $X \cap S \neq \emptyset$, which implies $V \cap S \neq \emptyset$.
Therefore, $S$ hits all sets in $\mathcal{V} \Cap V(H_i)$.
Finally, apply the statement with $i = r$, we conclude that $S$ hits all sets in $\mathcal{V} \Cap V(H_r) = \mathcal{V}$.
\end{proof}

\begin{corollary} \label{cor-S1toS}
If $\delta \geq \frac{2+3\varepsilon}{\varepsilon} \cdot \mathsf{wcol}_\gamma^{\gamma+1}(G)$ and $S_1$ is a $(1+\frac{\varepsilon}{2})$-approximation solution for the instance $(G_1,\mathcal{F})$, then $S$ is a $(1+\varepsilon)$-approximation solution for the instance $(G,\mathcal{F})$.
\end{corollary}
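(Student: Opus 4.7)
The two ingredients needed to derive the corollary, namely a size bound for $S$ in terms of $|S_1|$ (already supplied by Lemma~\ref{lem-S1'size}) and the fact that $S$ hits $\mathcal{V}$ (already sketched in the paragraph preceding the corollary), are essentially in place. My plan is therefore just to assemble these ingredients formally and then carry out the arithmetic that the hypothesis $\delta \geq \frac{2+2\varepsilon}{\varepsilon} \cdot \mathsf{wcol}_\gamma^{\gamma+1}(G)$ was engineered to make work.

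First I would verify feasibility, i.e.\ that $S$ is a hitting set of $\mathcal{V}$. I would take an arbitrary $V \in \mathcal{V}$ and split into two cases based on whether $V$ contains a set in $\varGamma_G^*$. If $V \supseteq X$ for some $X \in \varGamma_G^*$, then Fact~\ref{fact-G1G} gives $X \in \varGamma_{G_1}^*$, and Lemma~\ref{lem-S1S1'} then guarantees that $X$ (and hence $V$) is hit by $S_1 \cup S_1' = S$. If instead $V$ avoids every set of $\varGamma_G^*$, then $V$ also avoids every set of $\varGamma_G$ (since any element of $\varGamma_G$ contains an element of $\varGamma_G^*$), so by the defining condition of $R$ no vertex of $V$ lies in $R$. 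Because the while-loop in lines~4--5 only ever removes vertices of $R$, all of $V$ survives, meaning $V \in \mathcal{V} \Cap V(G_1)$. Since $S_1$ is a solution of $(G_1,\mathcal{F})$, it hits $V$.

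Next I would control the size of $S$. By Theorem~\ref{thm-ordering}, the ordering $\sigma$ produced in line~10 satisfies $\mathsf{wcol}_\gamma(G,\sigma) \leq \mathsf{wcol}_\gamma^{\gamma+1}(G)$, and the hypothesis on $\delta$ implies $\delta - \mathsf{wcol}_\gamma(G,\sigma) \geq \delta - \mathsf{wcol}_\gamma^{\gamma+1}(G) \geq \frac{2+\varepsilon}{\varepsilon} \cdot \mathsf{wcol}_\gamma^{\gamma+1}(G)$. Plugging this into Lemma~\ref{lem-S1'size} gives
\[
|S| \;\leq\; \left(1 + \frac{\mathsf{wcol}_\gamma(G,\sigma)}{\delta - \mathsf{wcol}_\gamma(G,\sigma)}\right) |S_1| \;\leq\; \left(1+\frac{\varepsilon}{2+\varepsilon}\right)|S_1| \;=\; \frac{2+2\varepsilon}{2+\varepsilon}\,|S_1|.
\]
Combining this with the assumption $|S_1| \leq (1+\varepsilon/2)\,\mathsf{opt}(G_1)$ yields $|S| \leq (1+\varepsilon)\,\mathsf{opt}(G_1)$, and a one-line comparison shows $\mathsf{opt}(G_1) \leq \mathsf{opt}(G)$ (any optimal hitting set of $\mathcal{V}$, intersected with $V(G_1)$, still hits $\mathcal{V} \Cap V(G_1)$). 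This delivers the desired $(1+\varepsilon)$-approximation.

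There is really no genuine obstacle here; the only thing to be careful about is the calibration of the arithmetic: the constants in $\delta \geq \frac{2+2\varepsilon}{\varepsilon} \cdot \mathsf{wcol}_\gamma^{\gamma+1}(G)$ and in the $(1+\varepsilon/2)$-approximation hypothesis were chosen precisely so that the two multiplicative losses $\frac{2+2\varepsilon}{2+\varepsilon}$ and $\frac{2+\varepsilon}{2}$ multiply to exactly $1+\varepsilon$. The substantive work has already been done in Fact~\ref{fact-G1G}, Lemma~\ref{lem-S1S1'}, and Lemma~\ref{lem-S1'size}; the corollary itself is a bookkeeping step that combines them.
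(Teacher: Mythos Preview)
Your proposal is correct and follows essentially the same approach as the paper: you reproduce the feasibility argument (via Fact~\ref{fact-G1G} and Lemma~\ref{lem-S1S1'}) exactly as the paper does in the paragraph preceding the corollary, and your size bound via Theorem~\ref{thm-ordering} and Lemma~\ref{lem-S1'size} matches the paper's computation line for line, including the final telescoping $\frac{2+2\varepsilon}{2+\varepsilon}\cdot\frac{2+\varepsilon}{2}=1+\varepsilon$ and the observation $\mathsf{opt}(G_1)\leq\mathsf{opt}(G)$.
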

\begin{proof}
By Lemma~\ref{lem-isasol}, $S$ is a hitting set of $\mathcal{V}$ and thus a solution of $(G,\mathcal{F})$.
By Theorem~\ref{thm-ordering}, we have $\mathsf{wcol}_\gamma(G,\sigma) \leq \mathsf{wcol}_\gamma^{\gamma+1}(G)$.
If $\delta \geq \frac{2+3\varepsilon}{\varepsilon} \cdot \mathsf{wcol}_\gamma^{\gamma+1}(G)$, then Lemma~\ref{lem-S1'size} implies
\begin{equation*}
    |S| \leq \left(1+\frac{\mathsf{wcol}_\gamma(G,\sigma)}{\delta - 2\mathsf{wcol}_\gamma(G,\sigma)}\right) \cdot |S_1| \leq \left(1+\frac{\mathsf{wcol}_\gamma^{\gamma+1}(G)}{\delta - 2\mathsf{wcol}_\gamma^{\gamma+1}(G)}\right) \cdot |S_1| \leq \left(1+\frac{\varepsilon}{2+\varepsilon}\right) \cdot |S_1|.
\end{equation*}
If we further have $|S_1| \leq (1+\frac{\varepsilon}{2}) \cdot \mathsf{opt}(G_1)$, then $|S| \leq (1+\varepsilon) \cdot \mathsf{opt}(G_1) \leq (1+\varepsilon) \cdot \mathsf{opt}(G)$.
\end{proof}

\subsubsection{From $S_2$ to $S_1$}

In this section, we show that if the parameters $\delta$ and $\delta'$ are chosen properly, then $|S_1| \leq (1+\frac{\varepsilon}{2}) \cdot \mathsf{opt}(G_1)$, based on the fact that $|S_2| \leq (1+\frac{\varepsilon}{4}) \cdot \mathsf{opt}(G_2)$.
Combining this with Corollary~\ref{cor-S1toS} completes our analysis.
Let $V^* \subseteq V(G_1)$ be the set defined in line~6 of Algorithm~\ref{alg-connected}, which consists of the vertices in $V(G_1)$ with degree at least $\delta'$.
We have $|S_1| = |S_2| + |V^*|$.
Since $|S_2| \leq (1+\frac{\varepsilon}{4}) \cdot \mathsf{opt}(G_2) \leq (1+\frac{\varepsilon}{4}) \cdot \mathsf{opt}(G_1)$, to have $|S_1| \leq (1+\frac{\varepsilon}{2}) \cdot \mathsf{opt}(G_1)$, it suffices to show $|V^*| \leq \frac{\varepsilon}{4} \cdot \mathsf{opt}(G_1)$, i.e., there are not many high-degree vertices in $G_1$.

Our proof relies on two properties of $G_1$: \textbf{(i)} $R_{G_1} = \emptyset$ and \textbf{(ii)} $G_1$ has bounded weakly coloring numbers.
For a vertex $v \in V(G_1)$ and a set $V \in \mathcal{V} \Cap V(G_1)$, we say $V$ is witnessed by $v$ if $v \in V$ or $v$ is neighboring to some vertex in $V$.
\begin{lemma} \label{lem-seedisjoint}
    For an integer $r > \max\{\gamma,\delta\}$, if $\mathsf{deg}_{G_1}(v) \geq r^\gamma \gamma!$, then there exists $r$ disjoint sets in $\mathcal{V} \Cap V(G_1)$ witnessed by $v$.
\end{lemma}
\begin{proof}
For each $u \in V(G_1)$, since $u \notin R_{G_1}$, there exists a set $V_u \in \mathcal{V} \Cap V(G_1)$ such that $u \in V_u$ and $V_u$ does not contain any set in $\varGamma_{G_1-\{u\}}$.
We have $|V_u| \leq \gamma$ for all $u \in V(G_1)$.
Suppose $\mathsf{deg}_{G_1}(v) \geq r^\gamma \gamma!$ or equivalently $|N_{G_1}(v)| \geq r^\gamma \gamma!$, where $r > \max\{\gamma,\delta\}$.
By the sunflower lemma (Lemma~\ref{lem-sunflower}), $\{V_u: u \in N_{G_1}(v)\}$ contains a sunflower of size $r$, i.e., there exists $U \subseteq N_{G_1}(v)$ with $|U| = r$ such that the sets in $\{V_u: u \in U\}$ form a sunflower.
All sets in this sunflower are witnessed by $v$.
We claim that the core $K$ of the sunflower is empty.
Assume $K \neq \emptyset$.
As $r > \gamma$ and $|K| \leq \gamma$, there exists $u \in U$ such that $u \notin K$.
Therefore, for any $u' \in U \backslash \{u\}$, we have $u \notin V_{u'}$ and thus $V_{u'} \in \mathcal{V} \Cap V(G - \{u\})$.
It follows that $\{V_{u'}: u' \in U \backslash \{u\}\}$ is a $(r-1)$-sunflower in $\mathcal{V} \Cap V(G - \{u\})$ with core $K$.
Since $K \neq \emptyset$ and $r-1 \geq \delta$, we have $K \in \varGamma_{G - \{u\}}$.
However, this contradicts the fact that $V_u$ does not contain any set in $\varGamma_{G_1-\{u\}}$.
So we have $K = \emptyset$ and the sets in $\{V_u: u \in U\}$ are disjoint, which implies that $v$ witnesses $r$ disjoint sets in $\mathcal{V} \Cap V(G_1)$.
\end{proof}

\begin{lemma} \label{lem-seesolution}
    Let $S$ be a solution of the instance $(G_1,\mathcal{F})$.
    If a vertex $v \in V(G_1)$ witnesses $r$ disjoint sets in $\mathcal{V} \Cap V(G_1)$, then $\rho(v,S) \geq r - \mathsf{wcol}_\gamma(G)$.
\end{lemma}
\begin{proof}
Suppose $v$ witnesses $V_1,\dots,V_r \in \mathcal{V} \Cap V(G_1)$, which are disjoint.
For each $i \in [r]$ and each $u \in V_i$, there exists a path $\pi_u$ between $v$ and $u$ in $G[V_i \cup \{v\}]$ of length at most $\gamma$, since $G[V_i]$ is connected and $v$ is either in $V_i$ or neighboring to $V_i$.
Let $v_i \in V_i$ be the largest vertex under $\sigma$, and define $I = \{i \in [r]: v_i >_\sigma v\}$.
As $V_1,\dots,V_r$ are disjoint, $v_1,\dots,v_r$ are distinct.
Furthermore, for $i \in I$, the vertex $v_i$ is the largest vertex on $\pi_{v_i}$, which implies that $v_i \in \text{WR}_\gamma(G,\sigma,v)$.
It follows that $|I| \leq |\text{WR}_\gamma(G,\sigma,v)| \leq \mathsf{wcol}_\gamma(G,\sigma) = \mathsf{wcol}_\gamma(G)$.
Thus, $|[r] \backslash I| \geq r- \mathsf{wcol}_\gamma(G)$.
As $S$ is a solution of $(G_1,\mathcal{F})$, we have $S \cap V_i \neq \emptyset$ for all $i \in [r]$.
Let $p_i \in S \cap V_i$ for $i \in [r]$.
Again, as $V_1,\dots,V_r$ are disjoint, $p_1,\dots,p_r$ are distinct.
For $i \in [r] \backslash I$, we have $v \geq_\sigma v_i$ and thus $v \geq_\sigma u$ for all $u \in V_i$.
Therefore, $v$ is the largest vertex on $\pi_{p_i}$ and $v \in \text{WR}_\gamma(G,\sigma,p_i)$, for all $i \in [r] \backslash I$.
It follows that $\rho(v,S) \geq |[r] \backslash I| \geq r- \mathsf{wcol}_\gamma(G)$.
\end{proof}

\begin{corollary} \label{cor-S2toS1}
For any $r > \max\{\gamma,\delta\}$, there exist at most $\frac{\mathsf{wcol}_\gamma(G)}{r - \mathsf{wcol}_\gamma(G)} \cdot \mathsf{opt}(G_1)$ vertices in $G_1$ of degree at least $r^\gamma \gamma!$.
In particular, if $\delta' \geq (\frac{4+\varepsilon}{\varepsilon} \cdot \mathsf{wcol}_\gamma(G) + \max\{\gamma,\delta\})^\gamma \gamma!$, then $|V^*| \leq \frac{\varepsilon}{4} \cdot \mathsf{opt}(G_1)$ and $S_1$ is a $(1+\frac{\varepsilon}{2})$-approximation solution for the instance $(G_1,\mathcal{F})$.
\end{corollary}
\begin{proof}
Let $S^* \subseteq V(G_1)$ be an optimal solution for the instance $(G_1,\mathcal{F})$.
By Lemma~\ref{lem-seedisjoint}, every vertex in $G_1$ of degree at least $r^\gamma \gamma!$ witnesses $r$ disjoint sets in $\mathcal{V} \Cap V(G_1)$.
Further applying Lemma~\ref{lem-seesolution}, we have $\rho(v,S^*) \geq r - \mathsf{wcol}_\gamma(G)$ for all $v \in V(G_1)$ with $\mathsf{deg}_{G_1}(v) \geq r^\gamma \gamma!$.
Note that
\begin{equation*}
    \sum_{v \in V(G_1)} \rho(v,S^*) = \sum_{u \in S^*} |\text{WR}_\gamma(G,\sigma,u)| \leq \mathsf{wcol}_\gamma(G) \cdot |S^*| = \mathsf{wcol}_\gamma(G) \cdot \mathsf{opt}(G_1).
\end{equation*}
By an averaging argument, the number of vertices $v \in V(G_1)$ with $\rho(v,S^*) \geq r - \mathsf{wcol}_\gamma(G)$ is at most $\frac{\mathsf{wcol}_\gamma(G)}{r - \mathsf{wcol}_\gamma(G)} \cdot \mathsf{opt}(G_1)$, which also bounds the number of vertices of degree at least $r^\gamma \gamma!$.
If we set $r = \frac{4+\varepsilon}{\varepsilon} \cdot \mathsf{wcol}_\gamma(G) + \max\{\gamma,\delta\}$ and $\delta' \geq r^\gamma \gamma!$, then $r > \max\{\gamma,\delta\}$ and thus
\begin{equation*}
    |V^*| \leq \frac{\mathsf{wcol}_\gamma(G)}{r - \mathsf{wcol}_\gamma(G)} \cdot \mathsf{opt}(G_1) \leq \frac{\mathsf{wcol}_\gamma(G)}{\frac{4+\varepsilon}{\varepsilon} \cdot \mathsf{wcol}_\gamma(G) + \max\{\gamma,\delta\} - \mathsf{wcol}_\gamma(G)} 
    \cdot \mathsf{opt}(G_1) \leq \frac{\varepsilon}{4} \cdot \mathsf{opt}(G_1).
\end{equation*}
Finally, $|S_1| = |S_2| + |V^*| \leq (1+\frac{\varepsilon}{4}) \cdot \mathsf{opt}(G_2) + \frac{\varepsilon}{4} \cdot \mathsf{opt}(G_1) \leq (1+\frac{\varepsilon}{2}) \cdot \mathsf{opt}(G_1)$.
\end{proof}

\subsection{Linear-time implementation} \label{sec-implement}
In this section, we show how to implement Algorithm~\ref{alg-connected} (except the call of the sub-routine $\textsc{Hitting}'$ in line~8) in linear time, or more precisely, $O_{\varepsilon,\mathcal{F}}(n)$ time.
Note that if this can be done, then the overall time complexity of Algorithm~\ref{alg-connected} is $f(\varepsilon,\mathcal{F}) \cdot n^c$ time for some function $f$, because line~8 takes $f_0(\varepsilon,\mathcal{F},\Delta) \cdot n^c$ time where $\Delta$ is the maximum degree of $G_2$ which is bounded by $\delta' = O_{\varepsilon,\mathcal{F}}(1)$.
We divide Algorithm~\ref{alg-connected} into two parts: the part for computing $G_2$ (line~1-7) and the part for computing $S$ (line~9-13).
We discuss these two parts in the following two sections.

\subsubsection{Computing $G_2$}
Consider line~1-7 of Algorithm~\ref{alg-connected}.
The sets defined in line~1 are not computed explicitly.
It suffices to the while-loop in line~3-5 in linear time.
The idea is to formulate the computation tasks as first-order model-checking problems, and properly apply the efficient data structure of Theorem~\ref{thm:fologic}.

\begin{lemma}\label{lem:fo1}
Whether a vertex $v \in V(G_1)$ is in $R_{G_1}$ can be expressed as a first-order formula $\varphi(v)$ whose length depends only on $\varepsilon$ and $\mathcal{F}$.
Thus, the while-loop in line~3-5 can be implemented in $f(\varepsilon,\mathcal{F}) \cdot n$ time for some function $f$.
\end{lemma}

\begin{proof}
Towards the presentation of the formula  to check the membership of $v$ in $R_{G_1}$, we first present several simpler formulas that will be used as building blocks.
Recall that for two vertices $u,u'\in V(G_1)$, $\mathsf{Adj}(u,u')$ is the atomic formula that is true if and only if $u$ and $u'$ are adjacent in $G_1$.
We start with some simple formulas.
\begin{itemize}
    \item For every $\ell\in\mathbb{N}$, $\mathsf{Distinct}(v_1,v_2,\ldots,v_\ell)$: True iff $v_1,v_2,\ldots,v_\ell$ are distinct. Defined as follows:
    \[\bigwedge_{i,j\in[\ell],i\neq j}v_i\neq v_j.\]
    \item For every $r,\ell\in\mathbb{N}$, $\mathsf{Subset}(u_1,u_2,\ldots,u_r,v_1,v_2,\ldots,v_\ell)$: True iff $\{u_1,u_2,\ldots,u_r\}\subseteq\{v_1,v_2,\ldots,v_\ell\}$. Defined as follows:
    \[\bigwedge_{i\in[r]}\left(\bigvee_{j\in[\ell]}u_i=v_j\right).\]
\end{itemize}

For every $F\in{\cal F}$, we arbitrary order $V(F)$, denoted by $V(F)=\{v^F_1,v^F_2,\ldots,v^F_{|V(F)|}\}$. We proceed to formulas to check isomorphism.
\begin{itemize}
    \item For every $F\in{\cal F}$, $F$-$\mathsf{Isomorphic}(v_1,v_2,\ldots,v_\ell)$ where $\ell=|V(F)|$: True iff $G_1[\{v_1,v_2,\ldots,v_\ell\}]$ contains a spanning subgraph isomorphic to $F$ where the isomorphism maps $v_i$ to $v^F_i$ for every $i\in [\ell]$. Defined as follows: \[\mathsf{Distinct}(v_1,v_2,\ldots,v_\ell)\wedge \left(\bigwedge_{i,j\in [\ell], \{v^F_i,v^F_j\}\in E(F)}\mathsf{Adj}(v_i,v_j)\right).\]
    \item For every $F\in{\cal F}$, $F$-$\mathsf{Isomorphic}^{\mathsf{in}}(v_1,v_2,\ldots,v_\ell)$ where $\ell=|V(F)|$: True iff  $G_1[\{v_1,v_2,\ldots,v_\ell\}]$ is isomorphic to $F$ where the isomorphism maps $v_i$ to $v^F_i$ for every $i\in [\ell]$. Defined as follows: \[\mathsf{Distinct}(v_1,v_2,\ldots,v_\ell)\wedge \left(\bigwedge_{i,j\in [\ell], \{v^F_i,v^F_j\}\in E(F)}\mathsf{Adj}(v_i,v_j)\right)\wedge \left(\bigwedge_{i,j\in [\ell], \{v^F_i,v^F_j\}\notin E(F)}\neg\mathsf{Adj}(v_i,v_j)\right).\]
    \item For every $\ell\in\mathbb{N}$, $\mathsf{BelongsTo}{\cal V}_{\cal F}(v_1,v_2,\ldots,v_\ell)$: True iff  $G_1[\{v_1,v_2,\ldots,v_\ell\}]$ contains a spanning subgraph isomorphic to some graph $F\in{\cal F}$ where the isomorphism maps $v_i$ to $v^F_i$ for every $i\in [\ell]$. Defined as follows: \[\bigvee_{F\in{\cal F}, |V(F)|=\ell}F\mathrm{-}\mathsf{Isomorphic}(v_1,v_2,\ldots,v_\ell).\]
\end{itemize}

Now, we present formulas to check the membership of a set in $\varGamma_{G_1 - \{v\}}$.
\begin{itemize}
    \item For every $\ell\in\mathbb{N}$, $\mathsf{Core}(v,v_1,v_2,\ldots,v_\ell)$: True iff $\{v_1,v_2,\ldots,v_\ell\} \in \varGamma_{G_1 - \{v\}}$. Defined as follows: 
    \[\begin{array}{l}
    \bigvee_{f:[\delta]\rightarrow[\gamma]}(\exists u^1_1,u^1_2,\ldots,u^1_{f(1)},u^2_1,u^2_2,\ldots,u^2_{f(2)},\ldots,u^\delta_1,u^\delta_2,\ldots,u^\delta_{f(\delta)}\\
    \bigwedge_{i \in [\delta]}((u_1^i \neq v) \wedge ((u_2^i \neq v) \wedge \cdots \wedge (u_{f(i)}^i \neq v)))\\
    \wedge\bigwedge_{i\in[\delta]}(\mathsf{BelongsTo}{\cal V}_{\cal F}(u^i_1,u^i_2,\ldots,u^i_{f(i)})\wedge\mathsf{Subset}(v_1,v_2,\ldots,v_\ell,u^i_1,u^i_2,\ldots,u^i_{f(i)}))\\
    \wedge\bigwedge_{i,j\in[\delta],i\neq j,k\in[f(i)]}(\mathsf{Subset}(u^i_k,v_1,v_2,\ldots,v_\ell)\vee \neg\mathsf{Subset}(u^i_k,u^j_1,u^j_2,\ldots,u^j_{f(j)})).
    \end{array}\]
    We remark that, here, we use $\mathsf{Subset}$ to check the membership of an element (the first argument) in a set.
    \item For every $\ell\in\mathbb{N}$, $\mathsf{ContainsCore}(v_1,v_2,\ldots,v_\ell)$: True iff $\{v_1,v_2,\ldots,v_\ell\}$ contains a set in $\varGamma_{G_1 - \{v\}}$ w.r.t. {\sc Subgraph Hitting}. Defined as follows: \[\bigvee_{r\in[\ell]}\left(\exists  u_1,u_2,\ldots,u_r \mathsf{Distinct}(u_1,\ldots,u_r)\wedge \mathsf{Subset}(u_1,\ldots,u_r,v_1,\ldots,v_\ell)\wedge \mathsf{Heavy}(u_1,\ldots,u_r)\right).\]
\end{itemize}

We are now ready to present the formula to check whether $v \in R_{G_1}$.
We start with {\sc Subgraph Hitting}:
\[\bigwedge_{\ell\in[\gamma]}\left(\forall v_1,v_2,\ldots,v_\ell  \bigvee_{i\in[\ell]}(v=v_i)\wedge\left(\mathsf{BelongsTo}{\cal V}_{\cal F}(v_1,v_2,\ldots,v_\ell)\rightarrow \mathsf{ContainsCore}(v_1,v_2,\ldots,v_\ell)\right)\right).\]
For {\sc Induced Subgraph Hitting}, the required formula is identical except that every use of  $F$-$\mathsf{Isomorphic}(v_1,v_2,\ldots,v_\ell)$ is replaced by the use of $F$-$\mathsf{Isomorphic}^{\mathsf{in}}(v_1,v_2,\ldots,v_\ell)$.

We have seen how to construct a first-order formula $\phi(v)$ such that $v \in R_{G_1}$ iff $\phi(v)$ is true.
Set $\varphi = \exists v (\phi(v))$.
We build in $f(|\varphi|) \cdot n$ time the data structure $\mathcal{D}$ of Theorem~\ref{thm:fologic} on the graph $G$ with the first-order formula $\varphi$.
In each iteration of the while-loop in line~3-5, we test whether $G$ satisfies $\varphi$ or not in $f(|\varphi|)$ time.
If not, then $R_{G_1} = \emptyset$ and we are done.
Otherwise, the data structure can find a vertex $v \in V(G_1)$ that satisfies $\phi(v)$.
Then we need to remove $v$ from $G_1$ and proceed to the next iteration.
Note that the data structure $\mathcal{D}$ only supports edge deletions.
Therefore, instead of truly deleting $v$, we delete all edges incident to $v$ from $G_1$.
In this way, $v$ still exists in $G_1$ but becomes an isolated vertex.
This is in fact equivalent to deleting $v$ from $G_1$, because the forbidden patterns in $\mathcal{F}$ are all connected (and thus isolated vertices do not influence anything).
The time cost for updating $\mathcal{D}$ after each edge-deletion is $f(|\varphi|)$.
As each edge can be deleted at most once, the total update time of $\mathcal{D}$ during the while-loop is $f(|\varphi|) \cdot n$ due to the sparsity of $G_1$.
Also, since the while-loop can have at most $n$ iterations, the total query time of $\mathcal{D}$ is also $f(|\varphi|) \cdot n$.
As a result, the while-loop in line~3-5 can be implemented in $f(|\varphi|) \cdot n$ time.
\end{proof}

\subsubsection{Computing $S$}

Consider line~9-14 in Algorithm~\ref{alg-connected}.
By Theorem~\ref{thm-ordering}, line~9 can be done in $O(n)$ time.
So it suffices to show how to compute $\rho_v$ for all $v \in V(G)$ (line~10) in linear time.
The key here is an efficient algorithm for computing the weakly reachable sets.
\begin{lemma}
    Let $G$ be a graph of $n$ vertices and $m$ edges.
    Given an ordering $\sigma$ of $V(G)$ and $r \in [n]$, one can compute in $O(n+rm \cdot \mathsf{wcol}_r(G,\sigma))$ time the sets $\textnormal{WR}_r(G,\sigma,v)$ for all $v \in V(G)$.
\end{lemma}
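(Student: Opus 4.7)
The plan is a layer-by-layer dynamic program over the path length. Writing $W_i(v) := \text{WR}_i(G,\sigma,v)$, I exploit the following recurrence: for $i \geq 1$,
\[
u \in W_i(v) \iff v \leq_\sigma u \text{ and } \bigl(v = u \text{ or } \exists w \in N_G(v): u \in W_{i-1}(w)\bigr).
\]
The forward direction splits a length-$\leq i$ monotone path at its second vertex $w$; the suffix has length $\leq i-1$ and inherits monotonicity, so $u \in W_{i-1}(w)$. Conversely, prepending $v$ to a monotone path from $w$ to $u$ keeps $u$ maximal precisely because $v \leq_\sigma u$ (note that $u \in W_{i-1}(w)$ already forces $w \leq_\sigma u$).

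Algorithmically I initialize $W_0(v) = \{v\}$ for every $v$ in $O(n)$ time. Then for $i = 1, \dots, r$ and every $v \in V(G)$, I enumerate all neighbors $w$ of $v$, scan $W_{i-1}(w)$, and insert every $u \in W_{i-1}(w)$ with $u \geq_\sigma v$ into $W_i(v)$; finally I also add $v$ itself. To avoid duplicate insertions in constant time without paying $\Theta(n)$ per vertex for a fresh array, I maintain one global Boolean array $A[\cdot]$ of length $n$ that is always all-zero at the start of processing each $v$: during the construction of $W_i(v)$ I flip $A[u] = 1$ on first insertion and skip $u$ if $A[u]=1$; once $W_i(v)$ is finalized, I walk through its elements and reset $A[u]=0$. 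Comparing $u$ and $v$ under $\sigma$ takes $O(1)$ after precomputing the rank of each vertex. Correctness follows from the recurrence above by induction on $i$, and I output $W_r$.

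For the runtime, the definition of $\mathsf{wcol}_r(G,\sigma)$ gives $|W_{i-1}(w)| \leq |W_r(w)| \leq \mathsf{wcol}_r(G,\sigma)$ for every $w$. Processing vertex $v$ at iteration $i$ therefore costs
\[
O\!\left(\sum_{w \in N_G(v)} |W_{i-1}(w)|\right) + O(|W_i(v)|) = O\bigl(\deg_G(v)\cdot \mathsf{wcol}_r(G,\sigma)\bigr),
\]
where the second term accounts for resetting $A$. Summing over $v$ yields $O(m\cdot \mathsf{wcol}_r(G,\sigma))$ per iteration, and over all $r$ iterations $O(rm\cdot \mathsf{wcol}_r(G,\sigma))$. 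Together with the $O(n)$ initialization this matches the claimed bound $O(n + rm\cdot \mathsf{wcol}_r(G,\sigma))$.

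There is no real obstacle here: the recurrence is one line and its correctness is the entire content of the argument; the remaining care is bookkeeping to ensure that duplicate detection is constant time and that $A$ is reset in time proportional to the set just built, so that the total work is genuinely bounded by edge-traversals weighted by $\mathsf{wcol}_r(G,\sigma)$.
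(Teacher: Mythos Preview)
Your proof is correct and follows essentially the same approach as the paper: both use the recurrence that $\text{WR}_i(G,\sigma,v)$ is obtained by filtering $\bigcup_{w\in N[v]} \text{WR}_{i-1}(G,\sigma,w)$ to vertices $\geq_\sigma v$, and both bound the per-iteration cost by $O(m\cdot\mathsf{wcol}_r(G,\sigma))$. You are in fact more explicit than the paper about how to achieve constant-time duplicate detection via the resettable Boolean array, which the paper leaves implicit.
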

\begin{proof}
    Without loss of generality, we can assume $G$ is connected and thus $m = \Omega(n)$.
    We shall compute the sets $\textnormal{WR}_i(G,\sigma,v)$ iteratively for $i = 0,1,\dots,r$.
    We have $\textnormal{WR}_0(G,\sigma,v) = \{v\}$ for all $v \in V(G)$.
    Suppose the sets $\textnormal{WR}_{i-1}(G,\sigma,v)$ are already computed for all $v \in V(G)$.
    To compute $\textnormal{WR}_i(G,\sigma,v)$, the key observation is that $\textnormal{WR}_i(G,\sigma,v)$ consists of exactly the vertices in $\bigcup_{u \in N[v]} \textnormal{WR}_{i-1}(G,\sigma,u)$ which are larger than or equal to $v$ under the ordering $\sigma$; here $N[v]$ is the set of neighbors of $v$ in $G$ (including $v$ itself).
    Indeed, if $x \in \textnormal{WR}_i(G,\sigma,v)$, then there exists a path $(u_0=v,u_1,\dots,u_i=x)$ in $G$ on which $x$ is the largest vertex under the ordering $\sigma$.
    We have $u_1 \in N[v]$ and $x \in \textnormal{WR}_{i-1}(G,\sigma,u_1)$ because of the sub-path $(u_1,\dots,u_i)$.
    Thus, $x \in \bigcup_{u \in N[v]} \textnormal{WR}_{i-1}(G,\sigma,u)$ and $x \geq_\sigma v$.
    On the other hand, if $x \in \textnormal{WR}_{i-1}(G,\sigma,u)$ for some $u \in N[v]$ and $x \geq_\sigma v$, then there exists a path $(u_1 = u,\dots,u_i=x)$ in $G$ on which $x$ is the largest vertex.
    In this case, $(v,u_1,\dots,u_i)$ is a path between $v$ and $x$ on which $x$ is the largest vertex, and thus $x \in \textnormal{WR}_i(G,\sigma,v)$.
    With this observation, to compute $\textnormal{WR}_i(G,\sigma,v)$, we only need to check all vertices in $\bigcup_{u \in N[v]} \textnormal{WR}_{i-1}(G,\sigma,u)$ and take the ones larger than $v$, which can be done in $O(|N[v]| \cdot \mathsf{wcol}_r(G,\sigma))$ time because $|\textnormal{WR}_{i-1}(G,\sigma,u)| \leq \mathsf{wcol}_{i-1}(G,\sigma) \leq \mathsf{wcol}_r(G,\sigma)$.
    Since $\sum_{v \in V(G)} |N[v]| = O(m)$, computing $\textnormal{WR}_i(G,\sigma,v)$ for all $v \in V(G)$ takes $O(m \cdot \mathsf{wcol}_r(G,\sigma))$ time.
    We need $r$ iterations to compute $\textnormal{WR}_r(G,\sigma,v)$, so the overall running time is $O(rm \cdot \mathsf{wcol}_r(G,\sigma))$.
\end{proof}

To efficiently implement the while-loop in line~12-13, we first apply the above lemma to compute $\textnormal{WR}_\gamma(G,\sigma,u)$ for all $u \in V(G)$.
For every $v \in V(G)$, we maintain a number $\rho_v$ which is equal to $\rho(v,S)$ for the current $S$.
To compute the initial values of $\rho_v$, we consider every vertex $u \in S$ and increase $\rho_v$ by 1 for all $v \in \textnormal{WR}_\gamma(G,\sigma,u)$.
By Theorem~\ref{thm-ordering}, $\mathsf{wcol}_\gamma(G,\sigma) \leq \mathsf{wcol}_\gamma^{\gamma+1}(G)$.
Furthermore, by Theorem~\ref{thm-weakcolor}, $\mathsf{wcol}_\gamma(G)$ is bounded by $\chi_\mathcal{G}(\gamma)$ for a function $\chi_\mathcal{G}$, and thus $\mathsf{wcol}_\gamma(G,\sigma)$ is also bounded by a function of $\gamma$.
So computing the initial values of $\rho_v$ takes $f(\gamma) \cdot n$ time.
During the while-loop, we maintain a queue $Q$ that contains the vertices in $V(G) \backslash S$ whose current $rho$ values are at least $\delta - \mathsf{wcol}_\gamma(G,\sigma)$.
In each iteration, we pop a vertex $u$ from $Q$ and add it to $S$.
As $S$ changes, we need to update the values of $\rho_v$, which is done by considering every $v \in \textnormal{WR}_\gamma(G,\sigma,u)$ and increasing $\rho_v$ by 1.
In this procedure, if $\rho_v \geq \delta - \mathsf{wcol}_\gamma(G,\sigma)$ after the update and $v \notin S$, we add it to $Q$.
The while-loop terminates when $Q$ becomes empty.
Clearly, there are at most $n$ iterations and each iteration takes $f(\gamma)$ time.
Therefore, the entire while-loop can be implemented in $f(\gamma) \cdot n$ time.

\subsection{Putting everything together} \label{sec-together}

We pick a sufficiently large $\delta$ that satisfies the condition in Corollary~\ref{cor-S1toS} and a sufficiently large $\delta'$ that satisfies the condition in Corollary~\ref{cor-S2toS1}.
As the weak coloring numbers of $G$ and $G_1$ are all bounded, we can guarantee that $\delta = O_{\varepsilon,\mathcal{F}}(1)$ and $\delta' = O_{\varepsilon,\mathcal{F}}(1)$.
Here a small technical issue is that we cannot compute the weak coloring numbers of $G$ and $G_1$ efficiently.
However, we can use the \textit{admissibility} numbers to approximate the weak coloring numbers.
The $r$-\textit{admissibility} number of $G$, $\mathsf{adm}_r(G)$, satisfies the inequality $\mathsf{adm}_r(G) \leq \mathsf{wcol}_r(G) \leq \mathsf{adm}_r^{r+1}(G)$ \cite{dvovrak2013constant}.
Furthermore, the $r$-admissibility number of bounded-expansion graphs can be computed in $O(n)$ time for any $r \geq 0$ \cite{dvovrak2013constant}.
As such, when picking $\delta$ and $\delta'$, we can simply use the upper bound of the weak coloring numbers provided by admissibility numbers to ensure that $\delta$ and $\delta'$ are sufficiently large while on the other hand still bounded by $O_{\varepsilon,\mathcal{F}}(1)$.
(We omit the definition of admissibility numbers, as it is not used in the paper.)

Next, we consider the overall running time of Algorithm~\ref{alg-connected}.
As shown in Section~\ref{sec-implement}, all steps except line~8 can be done in $f(\varepsilon,\mathcal{F}) \cdot n$ time.
Since $G_2$ is of maximum degree at most $\delta' = O_{\varepsilon,\mathcal{F}}(1)$, line~8 takes $f(\varepsilon,\mathcal{F}) \cdot n^c$ time.
As such, the time complexity of Algorithm~\ref{alg-connected} is $f(\varepsilon,\mathcal{F}) \cdot n^c$.
Combining this with the reduction in Section~\ref{sec-toconnect} (Corollary~\ref{cor-toconnect}) completes the proof of Theorem~\ref{thm-main}.

\paragraph{Variants.}
When the provided degree-sensitive algorithm in Theorem~\ref{thm-main} has a different running time or a different approximation factor, the performance of the algorithm we obtain also changes.
Therefore, the same proof can also result in variants of Theorem~\ref{thm-main}.
For example, when the provided algorithm has an approximation factor $c+\varepsilon$, then we also obtain a $(c+\varepsilon)$-approximation algorithm.
Also, when the provided algorithm has running time $n^{f_0(\varepsilon,\mathcal{F},\Delta)}$, $f_0(\varepsilon,\mathcal{F},\Delta) \cdot n^{g(\varepsilon)}$, and $f_0(\varepsilon,\mathcal{F},\Delta) \cdot n^{g(\mathcal{F})}$, the obtained algorithm has running time $n^{f(\varepsilon,\mathcal{F})}$, $f(\varepsilon,\mathcal{F}) \cdot n^{g(\varepsilon)}$, and $f(\varepsilon,\mathcal{F}) \cdot n^{g(\mathcal{F})}$, respectively.
We believe that these variants can also find their applications in the future.

\subsection{Lossy kernels}

In this section, we apply Theorem~\ref{thm-main} (more specifically, our degree reduction in Algorithm~\ref{alg-connected}) to prove the lossy kernelization results, i.e., Theorem~\ref{thm-kernel}.

\kernel*

Let $\mathcal{G}$ be a hereditary graph class of bounded expansion and $\mathcal{F}$ be a fixed finite set of connected graphs.
Consider an instance $(G,\mathcal{F})$ of \textsc{(Induced) Subgaph Hitting} where $G \in \mathcal{G}$.
Applying Algorithm~\ref{alg-connected} with any $\varepsilon > 0$, we can compute in $O_\varepsilon(n)$ time an instance $(G',\mathcal{F})$ such that $G'$ is an induced subgraph of $G$ of degree $O_\varepsilon(1)$, and given a $c$-approximation solution for $(G',\mathcal{F})$ one can compute in $O_\varepsilon(n)$ time a $(1+\varepsilon) c$-approximation solution for $(G,\mathcal{F})$.

To further obtain a lossy kernel of linear size, we simply remove all \textit{irrelevant} vertices from $G'$.
We say a vertex $v \in V(G')$ is \textit{irrelevant} if it is not contained in any set in $\mathcal{V} \Cap V(G')$.
Here the definition of $\mathcal{V}$ is the same as in Algorithm~\ref{alg-connected}.
Let $I \subseteq V(G')$ be the set of irrelevant vertices in $G'$, and $G'' = G' - I$.
Note that a $c$-approximation solution for the instance $(G'',\mathcal{F})$ is also a $c$-approximation solution for $(G',\mathcal{F})$, simply because $\mathcal{V} \Cap V(G'') = \mathcal{V} \Cap V(G')$.
Therefore, given a $c$-approximation solution for $(G'',\mathcal{F})$, one can compute in $O_\varepsilon(n)$ time a $(1+\varepsilon) c$-approximation solution for $(G,\mathcal{F})$.
To compute $I$ and $G''$ in linear time, we can use the data structure of Theorem~\ref{thm:fologic}, similarly to that in Section~\ref{sec-implement}.
Recall that in the proof of Lemma~\ref{lem:fo1}, we defined the first-order formula $\mathsf{BelongsTo}{\cal V}_{\cal F}(v_1,v_2,\ldots,v_\ell)$ that can check whether $G'[\{v_1,\dots,v_\ell\}]$ contains a spanning subgraph isomorphic to some $F \in \mathcal{F}$.
Now we define
\begin{equation*}
    \phi(v) = \bigvee_{\ell \in [\gamma]} \left( \exists v_1,\dots,v_\ell \left(\bigvee_{i \in [\ell]} v = v_i\right) \wedge \mathsf{BelongsTo}{\cal V}_{\cal F}(v_1,v_2,\ldots,v_\ell) \right),
\end{equation*}
which is true iff $v \notin I$.
Let $Q = V(G')$ be a unary relation on $V(G')$, and set $\varphi = \exists v (\neg \phi(v) \wedge Q(v))$.
We build the data structure $\mathcal{D}$ of Theorem~\ref{thm:fologic} on the graph $G'$ with the unary relation $Q$ and the formula $\varphi$.
We then delete all elements from $Q$ and keep updating $\mathcal{D}$, which takes $f(|\varphi|) \cdot n$ time.
To compute $I$, we iteratively consider the vertex of $G'$.
For a vertex $v \in V(G')$, to check whether $v \in I$, we add back $v$ to $Q$ and check whether $(G',Q)$ satisfies $\varphi$ (and delete $v$ from $Q$ afterwards).
In this way, we can test whether a vertex is in $I$ in $f(|\varphi|)$ time, and hence compute $I$ in $f(|\varphi|) \cdot n$ time.
Finally, we observe the following property of $G''$.
\begin{lemma} \label{lem-largeopt}
    $|V(G'')| \leq f(\varepsilon) \cdot \mathsf{opt}(G'',\mathcal{F})$ for some function $f$.
\end{lemma}
\begin{proof}
It suffices to prove that there are $\Omega_\varepsilon(n)$ vertex disjoint sets of $\mathcal{V} \Cap V(G'')$.
To this end, we show that there is a subset $Z \subseteq V(G'')$ of size $\Omega_\varepsilon(n)$ such that for any pair $(u,v)$ of distinct vertices in $Z$, the distance between them in $G''$ is at least $2\gamma$.
We construct $Z$ as follows.
Initially, we set $Z=\emptyset$.
Then we pick a vertex $v_1\in V(G'')$, add to $Z$, and mark all the vertices which are at a distance at most $2\gamma$ from $v_1$.
There are only $O_{\varepsilon}(1)$ such vertices since the degree of $G''$ is $O_{\varepsilon}(1)$.
In the next round, we further pick an unmarked vertex $v_2$, add to $Z$ and again mark all the vertices which are at a distance at most $2\gamma$ from $v_2$.
We continue this process until all the vertices are marked.
Since in each step we mark at most $O_{\varepsilon}(1)$ vertices, the number of vertices in $Z$ will be $\Omega_\varepsilon(n)$.
Note that every vertex of $G''$ is contained in some set in $\mathcal{V} \Cap V(G'')$, by our construction.
For each $z \in Z$, pick a set $V_z \in \mathcal{V} \Cap V(G'')$ such that $z \in V_z$.
We have $G[V_z]$ is connected, since the graphs in $\mathcal{F}$ are all connected.
It follows that the sets $V_z$ are pairwise disjoint, since the pairwise distances among the vertices in $Z$ are at least $2 \gamma$.
\end{proof}

Note that $\mathsf{opt}(G'',\mathcal{F}) \leq \mathsf{opt}(G,\mathcal{F})$, because $G''$ is an induced subgraph of $G$.
Given a parameterized instance $(G,k)$ for \textsc{(Induced) Subgaph Hitting} with forbidden list $\mathcal{F}$, our kernelization algorithm simply constructs the graph $G''$ as above.
If $|V(G'')| \leq f(\varepsilon) \cdot k$ for the function $f$ in Lemma~\ref{lem-largeopt}, then we output the parameterized instance $(G'',k)$, whose size is linear in $k$.
Otherwise, we know that $k < \mathsf{opt}(G'',\mathcal{F}) \leq \mathsf{opt}(G,\mathcal{F})$ and thus $(G,k)$ is a NO instance.
The running time of the kernelization algorithm as well as the time cost for turning a solution of $(G'',k)$ to a solution of $(G,k)$ is $g(\varepsilon) \cdot n$ for some function $g$.
This proves Theorem~\ref{thm-kernel}.

\section{Local search}\label{sec-local}

In this section, we prove and further discuss our results on the local search algorithms in classes with polynomial expansion.
Let us start with some preparatory work.

\subsection{Tools}

A collection $\mathcal{C}$ of subsets of vertices of a graph $G$ is \emph{$(\omega, t)$-shallow} if every vertex of $G$ appears in
at most $\omega$ elements of $\mathcal{C}$ and $G[C]$ has radius at most $t$ for every $C\in \mathcal{C}$.  The \emph{packing graph} $G[\mathcal{C}]$
is defined as the graph with vertex set $\mathcal{C}$ where $C_1$ and $C_2$ are adjacent if there exist $v_1\in C_1$ and $v_2\in C_2$
such that $v_1=v_2$ or $v_1v_2\in E(G)$.  In particular, an $r$-shallow minor is the packing graph of a $(1,r)$-shallow collection.
\begin{theorem}[Har-Peled and Quanrud~\cite{har2017approximation}]
Let $G$ be a graph with expansion bounded by a function $f$.  For every $(\omega, t)$-shallow collection $\mathcal{C}$ of subsets of vertices of $G$,
the graph $G[C]$ has expansion bounded by the function
$$f'(r)=5\omega^2(2t+1)^2(2r + 1)^2f((2t+1)r+t).$$
\end{theorem}

\begin{corollary}\label{cor-preserve}
For every class $\mathcal{G}$ with polynomial expansion and all non-negative integers $\omega$ and $t$, there exists
a class $\mathcal{G}'$ with polynomial expansion such that the packing graph of any $(\omega, t)$-shallow collection of subsets of vertices
of a graph $G\in \mathcal{G}$ belongs to $\mathcal{G}'$.
\end{corollary}

Next, we need to introduce an important notion called \textit{strongly sublinear separator}.

\begin{definition}
Let $G$ be an $n$-vertex graph. A vertex subset $X\subseteq V(G)$ is called a {\em balanced separator} if each component of $G-X$ has at most $2n/3$ vertices. Let $s(G)$ denote the minimum size of a balanced separator in $G$.  For a graph class $\mathcal{G}$, let $s_{\mathcal{G}}~\colon~{\mathbb N}_{0} \rightarrow {\mathbb N}_0$ be the function defined by  $s_{\mathcal{G}}(n)=\max \{s(H)~\colon~ H \subseteq G \in \mathcal{G}, |V(H)|\leq n\}$.
The graph class $\mathcal{G}$ has {\em strongly sublinear separators} if $s_{\mathcal{G}}(n)=O(n^\eta)$ for some $\eta<1$.  
\end{definition}

The following theorem gives a nice characterization of polynomial-expansion graphs.

\begin{theorem}[\cite{DorakN2016}] \label{thm-polyexp=sep}
A graph class is of polynomial expansion iff it has strongly sublinear separators. 
\end{theorem}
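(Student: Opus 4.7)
The plan is to prove both directions of the equivalence, using Theorem~\ref{thm-weakcolor} (the polynomial version: $\mathsf{wcol}_r(G) \leq p(r)$ for a polynomial $p$) as the working characterization of polynomial expansion. Throughout, let $\mathcal{G}$ denote the class in question.

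For the direction ``polynomial expansion $\Rightarrow$ strongly sublinear separators'', I would adapt a Lipton--Tarjan/BFS-layer argument, using the polynomial bound on $\nabla_r$. Fix $G \in \mathcal{G}$ with $n$ vertices, and choose a depth parameter $r \approx n^\alpha$ with a small $\alpha > 0$ to be tuned. Run BFS from an arbitrary vertex, obtaining layers $L_0, L_1, \dots$. Since $\sum_i |L_i| \leq n$, there must be a layer $L_j$ in the first $r$ layers of size at most $n/r$, which separates ``close'' vertices from ``far'' ones and is already sublinear. Iterate on the two resulting sides, balancing the partition via a suitable median-choice step. The polynomial bound on $\nabla_r$ enters when controlling the density of the contracted BFS structure, and ultimately converts a polynomial dependence in $r$ into an exponent $\eta < 1$ on $n$.

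For the direction ``strongly sublinear separators $\Rightarrow$ polynomial expansion'', the idea is to relate shallow-minor density to separator size. Given a depth-$r$ minor $H \preceq_r G$ with $|V(H)| = N$ and a model $\phi$ with $\mathrm{rad}(\phi(v)) \leq r$, lift $H$ back to a subgraph $G' \subseteq G$ of size $O(N \cdot r)$ containing at least $|E(H)|/\mathrm{poly}(r)$ edges: for each $v \in V(H)$ take one representative $c_v \in V(\phi(v))$, and for each edge $vw \in E(H)$ keep a short path in $\phi(v) \cup \phi(w)$ through the connecting edge, contracting multiplicities by standard reservoir tricks. Now $G'$ inherits separators of size $O(|V(G')|^\eta)$ from $\mathcal{G}$; unrolling a balanced separator decomposition yields a tree decomposition of $G'$ of width $O(|V(G')|^\eta)$, which forces $|E(G')| = O(|V(G')|^{1+\eta})$. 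Pulling this back to $H$ bounds $|E(H)|/|V(H)|$ by $\mathrm{poly}(r) \cdot N^\eta$. A second, bootstrapping pass removes the offending $N^\eta$ factor: one applies the same lifting/separator argument inside the densest linear-sized subgraph of $H$ and iterates, converging to a density bound that depends on $r$ alone.

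The main obstacle will be the uniformity-in-$N$ issue in the $(\Leftarrow)$ direction: a naive combination of lifting and separators yields a density bound that still depends on $N$, whereas polynomial expansion demands a bound $\nabla_r(G) \leq p(r)$ with no $n$-dependence at all. Dvo\v{r}\'ak--Norin resolve this by an iterative self-improvement argument on sufficiently dense substructures of $H$, which I would follow. The $(\Rightarrow)$ direction is comparatively routine once the BFS-layer reduction is in place, but it is crucial to exploit the polynomial (not merely the finite) bound on $\mathsf{wcol}_r$: an exponential bound would only give $\eta = 1$ and thus not qualify as strongly sublinear.
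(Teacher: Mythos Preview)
The paper does not prove Theorem~\ref{thm-polyexp=sep}; it is merely quoted from \cite{DorakN2016} as a known result and used as a black box (for instance, in the proof of Lemma~\ref{lem-computingsep}). So there is no proof in this paper to compare your proposal against.

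That said, your sketch is broadly in the spirit of the Dvo\v{r}\'ak--Norin argument, though the details you would need are nontrivial. In the $(\Rightarrow)$ direction, the actual proof does not go through weak coloring numbers but rather works directly with the density of shallow minors (specifically, it uses that a dense enough shallow minor forces a large grid-like or expander-like structure, which cannot have small separators). In the $(\Leftarrow)$ direction, your ``bootstrapping'' paragraph correctly identifies the real difficulty: getting a bound on $\nabla_r$ that is independent of $|V(H)|$. The actual argument does not iterate on dense subgraphs of $H$ in the way you describe; instead it shows that if some $\nabla_r(G)$ were large, one could exhibit a bounded-degree expander as a shallow minor of a graph in $\mathcal{G}$, and expanders are known not to have sublinear separators. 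Your lifting-plus-treewidth route would give $|E(H)|/|V(H)| \le \mathrm{poly}(r)\cdot N^\eta$, and the ``second pass'' you propose does not obviously converge---this is exactly the step where Dvo\v{r}\'ak--Norin need the expander machinery rather than a self-improvement loop.
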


A well-known fact dating back to the work of Lipton and Tarjan~\cite{LiptonTargenSep} is that in any graph $G$ from a class $\mathcal{G}$ with strongly sublinear separators (satisfying $s_{\mathcal{G}}(n)=O(n^\eta)$ for some $\eta<1$),
we can for every $\varepsilon>0$ delete at most $\varepsilon|V(G)|$ vertices to break up $G$ to components of size $O\bigl(((1/\varepsilon)^\frac{1}{1-\eta}\bigr)$.
We need a more technical form of this statement, proved in~\cite{HarPeledQ16}.
Let $\mathcal{K}$ be a system of subsets of vertices of a graph $G$.  We say that $\mathcal{K}$ is a \emph{cover} of $G$ if $\bigcup_{K\in\mathcal{K}} G[K]=G$.
For $K\in \mathcal{K}$, let $\partial K$ be the set of vertices of $K$ that belong to more than one element of $\mathcal{K}$.  The \emph{excess}
of a cover $\mathcal{K}$ is $\sum_{K\in \mathcal{K}} |\partial K|$.

\begin{lemma}\label{lemma-break}
Let $\mathcal{G}$ be a class of graphs with strongly sublinear separators.  There exists a polynomial $p$ such that for
every $\varepsilon>0$, every graph $G\in \mathcal{G}$ has a cover with excess at most $\varepsilon|V(G)|$ consisting of sets of size
at most $p(1/\varepsilon)$.
\end{lemma}

\subsection{Proof of Theorem~\ref{thm-lsearch-hitting} }

The proof of Theorem~\ref{thm-lsearch-hitting} follows the idea of Har-Peled and Quanrud \cite{har2017approximation}, with the following crucial difference: Instead of
comparing the local search solution with the optimal one, we compare it with a suitably chosen enlargement of the optimal solution.
Recall that $\text{WR}_r(G,\sigma,v)$ denote the set of vertices in $G$ that are $r$-weakly reachable from a vertex $v$ under an ordering $\sigma$ of vertices of $G$.
Let $\text{WR}^{-1}_r(G,\sigma,u)$ denote the set of vertices $v$ from which $u$ is $r$-weakly reachable, i.e., the set $\{v\in V(G):u\in \text{WR}_r(G,\sigma,v)\}$.
For a set $O$ of vertices a non-negative integer $m$, we say a vertex $u$ is \emph{$(r,\sigma,m,O)$-rich} if $|\text{WR}^{-1}_r(G,\sigma,u)\cap O|\ge m$,
i.e., if $u$ is $r$-weakly reachable from many vertices in $O$.

\begin{observation}\label{obs-fewrich}
Let $G$ be a graph, let $O$ be a set of vertices of $G$, let $r$ and $m$ be non-negative integers, and let $\sigma$
be a linear ordering of vertices of $G$.  Let $O'$ be the set of all $(r,\sigma,m,O)$-rich vertices.
If the weak $r$-coloring number of $G$ under $\sigma$ is at most $b$, then
$$|O'|\le \frac{b}{m}|O|.$$
\end{observation}
\begin{proof}
By the definition of richness, there are at least $m|O'|$ pairs of vertices $(u,v)$ such that $u\in O'$
and $v\in \text{WR}^{-1}_r(G,\sigma,u)\cap O$.  However, then $u\in \text{WR}_r(G,\sigma,v)$,
and $|\text{WR}_r(G,\sigma,v)|\le b$ for each $v\in V(G)$ by the assumptions.  Hence, the number of such pairs is also at most $b|O|$.
\end{proof}

In the proof, $O$ will be an optimal solution and we are going to enlarge it slightly by adding the vertices of $O'$.
Let us consider the setting of Observation~\ref{obs-fewrich}, and let $A$ be another set of vertices of $G$ (in the proof, this will be the
solution returned by the local search).  The key idea of Har-Peled and Quanrud \cite{har2017approximation} was to consider
an auxiliary packing graph of suitably chosen neighborhoods of vertices of $O\cup A$.  In our setting, we are going to consider
the packing graph of the following set system.
Let $\mathcal{C}_{r,\sigma,m,O,A}=\{C_u:u\in A\cup O\cup O'\}$ be a system of subsets of vertices of $G$ defined
as follows:
\begin{itemize}
\item For $u\in O'\setminus O$, let $C_u=\text{WR}^{-1}_r(G,\sigma,u)$.
\item For $u\in O$, let
$$C_u=\text{WR}^{-1}_r(G,\sigma,u)\cup \bigcup_{x\in \text{WR}_r(G,\sigma,u)\setminus O'} \text{WR}^{-1}_r(G,\sigma,x).$$
\item For $u\in A\setminus (O\cup O')$, let $C_u=\{u\}$.
\end{itemize}

The system is chosen so that we get the following lemma, stating that when $O$ and $A$ both hit a bounded diameter
subgraph, this is reflected by an adjacency in the packing graph of $\mathcal{C}_{r,\sigma,m,O,A}$.

\begin{lemma}\label{lemma-main-hitting}
Let $G$ be a graph, let $A$ and $O$ be sets of vertices of $G$, let $r$ and $m$ be non-negative integers, and let $\sigma$
be a linear ordering of vertices of $G$.  Let $O'$ be the set of all $(r,\sigma,m,O)$-rich vertices.
Let $H$ be the packing graph of $\mathcal{C}=\mathcal{C}_{r,\sigma,m,O,A}$.
If $F$ is a subgraph of $G$ of diameter at most $r$ and $V(F)\cap O\neq\emptyset$, then
there exists $u\in V(F)\cap(O\cup O')$ such that $C_uC_v\in E(H)$ for every $v\in V(F)\cap A\setminus (O\cup O')$.
Moreover, if the weak $r$-coloring number of $G$ under $\sigma$ is at most $b$, then
$\mathcal{C}$ is $(bm+1,2r)$-shallow.
\end{lemma}
\begin{proof}
Let $x$ be the largest vertex of $V(F)$ in the ordering $\sigma$.  
Since $F$ has diameter at most $r$, for every $y\in V(F)$, there exists a path from $y$ to $x$ of length at most
$r$ in $F$, and by the choice of $x$, the vertices of this path appear before $x$ in the ordering $\sigma$.
Consequently $x\in \text{WR}_r(G,\sigma,y)$.  Since this holds for every $y\in V(F)$, we have $V(F)\subseteq \text{WR}^{-1}_r(G,\sigma,x)$.

If $x\in O'$, then let $u=x$, otherwise choose $u$ as an arbitrary vertex of $V(F)\cap O$.
In the former case, we have $V(F)\subseteq \text{WR}^{-1}_r(G,\sigma,u)\subseteq C_u$.
In the latter case, we have $x\in \text{WR}_r(G,\sigma,u)\setminus O'$, and
thus $V(F)\subseteq \text{WR}^{-1}_r(G,\sigma,u)\subseteq C_u$.  Hence, $C_uC_v\in E(H)$ for every $v\in V(F)\cap A\setminus (O\cup O')$.

Next, let us bound the diameter of the sets in $\mathcal{C}$.
For any $z\in V(G)$ and for every vertex $w$ of $\text{WR}^{-1}_r(G,\sigma,z)$, there is a path of length at most $r$ from $w$ to $z$
with all vertices appearing before $z$ in $\sigma$; observe that the vertices of this path also belong to $\text{WR}^{-1}_r(G,\sigma,z)$.
Therefore, every vertex of $\text{WR}^{-1}_r(G,\sigma,z)$ is at distance at most $r$ from $z$ in $G[\text{WR}^{-1}_r(G,\sigma,z)]$
and thus $G[\text{WR}^{-1}_r(G,\sigma,z)]$ has diameter at most $2r$.  For $u\in O\cup O'$, we conclude that $C_u$
is the union of vertex sets of graphs of diameter at most $2r$, each of them containing $u$, and thus $G[C_u]$ has radius at most $2r$.
For $u\in A\setminus (O\cup O')$, $G[C_u]$ has radius $0$.

Finally, let us argue that no vertex belongs to too many of the sets in $\mathcal{C}$. If a vertex $z$ belongs to $C_u$ for $u\in O'\setminus O$,
then $z\in \text{WR}^{-1}_r(G,\sigma,u)$, and thus $u$ is one of at most $b$ vertices in $\text{WR}_r(G,\sigma,z)$.
If $z$ belongs to $C_u$ for $u\in O$, then either $z\in \text{WR}^{-1}_r(G,\sigma,u)$ and $u$ is again one vertices in $\text{WR}_r(G,\sigma,z)$,
or $z\in \text{WR}^{-1}_r(G,\sigma,x)$ for some $x\in \text{WR}_r(G,\sigma,u)\setminus O'$. There are at most
$b$ choices for $x$ in $\text{WR}_r(G,\sigma,z)$, and for fixed $x\not\in O'$, there are less than $m$ vertices $u\in O$
such that $x\in \text{WR}_r(G,\sigma,u)$ (or equivalently, $u\in \text{WR}^{-1}_r(G,\sigma,x)\cap O$).  Finally, vertex $z$ belongs to $C_u$ for
at most one $u\in A\setminus (O\cup O')$.  Therefore, $z$ belongs to $C_v$ for at most $b+b(m-1)+1=bm+1$ vertices $v\in A\cup O\cup O'$.
Therefore, $\mathcal{C}$ is $(bm+1,2r)$-shallow.
\end{proof}

We are now ready to prove the main result, Theorem~\ref{thm-lsearch-hitting}, which we restate for convenience.

\lsearchhitting*
\begin{proof}
Without loss of generality, assume that $\varepsilon\le 1$.  Let $r$ be the diameter of $\pi$.
By Theorem~\ref{thm-weakcolor}, there exists $b$ such that the weak $r$-coloring number of every graph $G\in \mathcal{G}$ is at most $b$.
Let $m=\lceil 3b/\varepsilon\rceil$.  Let $\mathcal{G}'$ be the class of graphs with polynomial expansion
given by Corollary~\ref{cor-preserve} with $\omega=bm+1$ and $t=2r$.  Let $p$ be the polynomial
from Lemma~\ref{lemma-break} for this class $\mathcal{G}'$.
Let us define $c=\lceil p(18/\varepsilon)\rceil$.

Consider any graph $G\in\mathcal{G}$, and let $\sigma$ be an ordering of vertices of $G$ such that $G$ has weak $r$-coloring number at most $b$ under $\sigma$.
Let $O$ be a $\pi$-hitting set in $G$ of size $\gamma_\pi(G)$, and let $A$ be the $\pi$-hitting
set returned by the $c$-local search algorithm.  Let $O'$ be the set of all $(r,\sigma,m,O)$-rich vertices in $G$;
by Observation~\ref{obs-fewrich}, we have $|O'|\le \tfrac{b}{m}|O|\le \tfrac{\varepsilon}{3}|O|$.
Let $\mathcal{C}=\mathcal{C}_{r,\sigma,m,O,A}$; by Lemma~\ref{lemma-main-hitting}, $\mathcal{C}$ is $(bm+1,2r)$-shallow, and
by Corollary~\ref{cor-preserve}, the packing graph $H=G[\mathcal{C}]$ belongs to $\mathcal{G}'$.  For any $Y\subseteq V(H)$,
let $\overline{Y}=\{v\in V(G):C_v\in Y\}$.  By Lemma~\ref{lemma-break},
there exists a cover $\mathcal{K}$ of $H$ with excess at most $\tfrac{\varepsilon}{18}|V(H)|\le \tfrac{\varepsilon}{18}(|A|+|O|+|O'|)\le \tfrac{\varepsilon}{6}|A|$
such that each element of $\mathcal{K}$ has size at most $p(18/\varepsilon)\le c$.

Consider any $K\in \mathcal{K}$, and let $A'=(A\setminus (\overline{K\setminus\partial K}))\cup ((O\cup O')\cap \overline{K})$.
We claim that $A'$ is a $\pi$-hitting set.  Indeed, suppose for a contradiction that there exists $Z\subseteq V(G)$ such
that $(G,Z)$ satisfies the property $\pi$, but $A'\cap Z=\emptyset$.  Since $A$ is a $\pi$-hitting set, there exists $v\in A\cap Z$,
and since $v\not\in A'$, we have $v\in (\overline{K\setminus\partial K})\setminus (O\cup O')$.
Since $O$ is a $\pi$-hitting set, we have $Z\cap O\neq\emptyset$,
and by Lemma~\ref{lemma-main-hitting} applied to $F=G[Z]$, there exists $u\in Z\cap (O\cup O')$ such that $C_uC_v\in E(H)$.
Since $\mathcal{K}$ is a cover of $H$, there exists $K'\in \mathcal{K}$ such that $C_uC_v\in E(H[K'])$.
Since $C_v\in K\setminus\partial K$, it follows that $K'=K$, and thus this implies $C_u\in K$.
Therefore, we have $u\in (O\cup O')\cap \overline{K}$, and $u\in A'\cap Z$, which is a contradiction.

Note that $A\triangle A'\subseteq \overline{K}$, and thus $|A\triangle A'|\le c$.  Hence, the $c$-local search algorithm
considered the solution $A'$ and did not improve $A$ to $A'$, which implies that $|A'|\ge |A|$.
We conclude that
$$|(O\cup O')\cap \overline{K}|\ge |A\cap (\overline{K\setminus\partial K})|$$
for every $K\in \mathcal{K}$.  Therefore, denoting by $s$ the excess of $\mathcal{K}$, we have
\begin{align*}
|A|&\le \sum_{K\in\mathcal{K}} |A\cap \overline{K}|\le s+\sum_{K\in\mathcal{K}} |A\cap (\overline{K\setminus\partial K})|\\
&\le s+\sum_{K\in\mathcal{K}} |(O\cup O')\cap \overline{K}|\le 2s+\sum_{K\in\mathcal{K}} |(O\cup O')\cap (\overline{K\setminus\partial K})|\\
&\le 2s+|O|+|O'|\le \tfrac{\varepsilon}{3}|A|+(1+\varepsilon/3)|O|.
\end{align*}
Therefore,
$$|A|\le \frac{1+\varepsilon/3}{1-\varepsilon/3}|O|\le (1+\varepsilon)|O|,$$
as required. We will discuss the running time of the algorithm in Section~\ref{sec:efficient-clocalsearch}. 
\end{proof}

The proof for the packing version is much simpler.
\lsearchpacking*
\begin{proof}
Without loss of generality, assume that $\varepsilon\le 1$.  Let $r$ be the diameter of $\pi$.
Let $\mathcal{G}'$ be the class of graphs with polynomial expansion given by Corollary~\ref{cor-preserve}
with $\omega=2$ and $t=r$.  Let $p$ be the polynomial from Lemma~\ref{lemma-break} for this class $\mathcal{G}'$.
Let us define $c=\lceil p(4/\varepsilon)\rceil$.

Consider any graph $G\in\mathcal{G}$.  Let $O$ be an (induced) $\pi$-packing in $G$ of size $\alpha_\pi(G)$ (or $\alpha'_\pi(G)$),
and let $A$ be the (induced) $\pi$-packing returned by the $c$-local search algorithm. 
Let $\mathcal{C}=A\cup O$, and note that $\mathcal{C}$ is $(2,r)$-shallow.  By Corollary~\ref{cor-preserve}, the packing graph $H=G[\mathcal{C}]$
belongs to $\mathcal{G}'$.  By Lemma~\ref{lemma-break}, there exists a cover $\mathcal{K}$ of $H$ with excess at most
$\tfrac{\varepsilon}{4}|V(H)|\le \tfrac{\varepsilon}{4}(|A|+|O|)\le \tfrac{\varepsilon}{2}|O|$
such that each element of $\mathcal{K}$ has size at most $p(4/\varepsilon)\le c$.

Consider any $K\in \mathcal{K}$, and let $A'=(A\setminus K)\cup (O\cap (K\setminus \partial K))$.
We claim that $A'$ is an (induced) $\pi$-packing.  Indeed, since both $A$ and $O$ have this property, the only
way how this could be false is if there existed $P\in A\setminus K$ and $P'\in O\cap (K\setminus \partial K)$
that intersect (or contain adjacent vertices in the induced case).  However, then $PP'\in E(H)$, which
is a contradiction since $P'\in K\setminus \partial K$ and $P\not\in K$.
Observe also that $|A'|\le |A|$, as otherwise the $c$-local search algorithm
would extend $A\setminus K$ to a solution larger than $A$ (if $|A'|>|A|$, then $K\not\subseteq A$,
and thus $A\setminus K$ is obtained from $A$ by deleting less than $c$ elements).

We conclude that
$$|A\cap K|\ge |O\cap (K\setminus\partial K)|$$
for every $K\in \mathcal{K}$.  Therefore, denoting by $s$ the excess of $\mathcal{K}$, we have
\begin{align*}
|A|&\ge \sum_{K\in\mathcal{K}} |A\cap (K\setminus \partial K)|\ge -s+\sum_{K\in\mathcal{K}} |A\cap K|\\
&\ge -s+\sum_{K\in\mathcal{K}} |O\cap (K\setminus \partial K)|\ge -2s+\sum_{K\in\mathcal{K}} |O\cap K|\\
&\ge -2s+|O|\ge (1-\varepsilon)|O|,
\end{align*}
as required.  We discuss the running time of the algorithm in Section~\ref{sec:efficient-clocalsearch}. 
\end{proof}

\subsection{Efficient implementation of the local search algorithms}
\label{sec:efficient-clocalsearch}

The time complexity of the $c$-local-search algorithm (for a straightforward implementation) is $O(n^{c+1})$ times the complexity of
testing whether a set is $\pi$-hitting or finding an extension of a $\pi$-packing.  Consequently, the exponent of the time
complexity of the algorithms arising from Theorem~\ref{thm-lsearch-hitting} and \ref{thm-lsearch-packing} depends on the desired precision;
i.e., we obtain a PTAS, not an EPTAS.  However, in all the problems that we considered ({\sc Induced Subgraph Hitting}, {\sc Distance $r$-Domination} or {\sc Independent Set}),
the property $\pi$ is \emph{FO-definable} in the following sense: There exists a formula $\varphi(x_1,\ldots,x_m,z)$
with free variables $x_1$, \ldots, $x_m$, and $z$ in the first-order logic (allowing quantification over vertices, but not over sets of vertices or over edges)
using the adjacency predicate, such that $(G,Z)$ with $Z$ non-empty has the property
$\pi$ if and only if for some $v_1,\ldots,v_m\in G$,
$$Z=\{v\in V(G):G\models \varphi(v_1,\ldots,v_m,v)\}.$$
E.g., for the property $\pi$ used to define $r$-dominating and $2r$-independent set problems, we can set
$\varphi(x_1,z)\equiv\text{dist}_r(x_1,z)$,
where
$$\text{dist}_r(x,y)\equiv (\exists p_0,\ldots,p_r)\;x=p_0\land y=p_r\land \bigwedge_{i=1}^r (p_{i-1}=p_i\lor p_{i-1}p_i\in E).$$
In this case, we can express the testing and extension in first-order logic.  Let us demonstrate this on the most difficult
example, the $c$-local search for the {\sc $\pi$-Packing} problem.

Let $\mathcal{P}$ be the current $\pi$-packing in a graph $G$ that we are trying to improve.
To represent $\mathcal{P}$ in a way accessible in a first-order formula, for each element $P$ of $\mathcal{P}$,
color the edges of a BFS spanning tree the subgraph of $G$ induced by $P$ red; we will access this information about edge colors using a new binary predicate $R$.
Let $\text{dist}^R_r(x,y)$ be the predicate defined analogously to $\text{dist}_r(x,y)$, but with $E$ replaced by $R$.
Let $T$ be the unary predicate interpreted as the set $\bigcup \mathcal{P}$.  Note that in this representation, the sets in $\mathcal{P}$
are exactly the components of the red-edge graph contained in the set $T$.
We can delete at most $c-1$ elements from $\mathcal{P}$ and add $c$ other elements to obtain a $\pi$-packing exactly if the following
first-order formula is satisfied:
\begin{align*}
(\exists& r_1,\ldots,r_{c-1},x_1^1,\ldots,x_m^1,\ldots,x_1^c,\ldots,x_m^c)\\
&\bigwedge_{j=1}^c (\exists z)\;\varphi(x_1^j,\ldots,x_m^j,z)\\
&\land \bigwedge_{j=1}^c (\forall z)\;\Bigl(\varphi(x_1^j,\ldots,x_m^j,z)\land z\in T\Rightarrow \bigvee_{i=1}^{c-1} r_i\in T\land \text{dist}_{2r}^R(z,r_i)\Bigr)\\
&\land \bigwedge_{1\le j_1 < j_2\le c} (\forall z_1,z_2)\;\bigl(\varphi(x_1^{j_1},\ldots,x_m^{j_1},z_1)\land \varphi(x_1^{j_2},\ldots,x_m^{j_2},z_2)\Rightarrow z_1\neq z_2\bigr).
\end{align*}
The elements to be removed from $\mathcal{P}$ are those intersected by $r_1$, \ldots, $r_{c-1}$, while those added are given by the formula $\varphi$
with parameters $x_1^j$, \ldots, $x_m^j$ for $j\in\{1,\ldots,c\}$.
\begin{itemize}
\item The second line expresses that for each $j$, the formula $\varphi$ parameterized by the $m$-tuple $x_1^j$, \ldots, $x_m^j$ defines a non-empty set $Z_j$.
\item The third line states that $Z_j$ intersects $T$ only in the elements containing $r_1$, \ldots, $r_{c-1}$
that are to be removed from $\mathcal{P}$.
\item The final line describes that the sets $Z_1$, \ldots, $Z_m$ are disjoint.
\end{itemize}

Using the data structure from Theorem~\ref{thm:fologic} (in a variant supporting colors of edges),
we can maintain the predicates $T$ and $R$ representing the current state of $\mathcal{P}$ in constant time
per recoloring, and find a value for the variables $r_1$, \ldots, $r_{c-1}$, $x_1^1$, \ldots, $x_m^1$, \ldots, $x_1^c$, \ldots, $x_m^c$
satisfying the formula above (if they exist) in constant time per query.

We conclude that for an FO-definable property of bounded diameter, an $n$-vertex graph $G$ from any fixed class with bounded expansion, and any fixed positive integer $c$, we can run
\begin{itemize}
\item $c$-local search for minimum $\pi$-hitting in time $O(n)$, and
\item (induced) $c$-local search for maximum $\pi$-packing in time $O(n^2)$.
\end{itemize}
The reason for the quadratic time in the second case is due to the need to recolor the red edges after each iteration,
which may result in $O(n)$ time in case the sets of the (induced) $\pi$-packing have unbounded size.
In case they have bounded size, e.g., for the maximum (induced) $F$-matching problem, the time complexity becomes $O(n)$.
Moreover, even if the elements of the packing have unbounded size, it may be possible to use a different way how to
represent the packing in the first-order logic (e.g., in the case of maximum $t$-independent set, we can just mark the vertices
forming the set), again resulting in total time $O(n)$.

\section{Applications to geometric intersection graphs}

\subsection{Fat-object graphs and pseudo-disk graphs}

A graph class $\mathcal{G}$ has \textit{bounded clique-size} if there exists an integer $k \in \mathbb{N}$ such that $\omega(G) \leq k$ for all $G \in \mathcal{G}$.
We say a graph class $\mathcal{G}$ is \textit{of clique-dependent polynomial expansion} if any sub-class $\mathcal{G}' \subseteq \mathcal{G}$ that has bounded clique-size is of polynomial expansion.
Similarly, we say a graph class $\mathcal{G}$ is \textit{of clique-dependent bounded degeneracy} if any sub-class $\mathcal{G}' \subseteq \mathcal{G}$ that has bounded clique-size is of bounded degeneracy.
By definition, it is clear that if a graph class is of clique-dependent polynomial expansion, then it is also of clique-dependent bounded degeneracy.
In what follows, we shall discuss two examples of geometric intersection graphs that are of clique-dependent polynomial expansion.

The first example is intersection graphs of \textit{fat objects}.
A geometric object $X$ in $\mathbb{R}^d$ is \textit{$\alpha$-fat} for a number $\alpha \geq 1$ if $X$ is convex and there are two balls $B_\mathsf{in},B_\mathsf{out}$ in $\mathbb{R}^d$ such that $B_\mathsf{in} \subseteq X \subseteq B_\mathsf{out}$ and $\mathsf{rad}(B_\mathsf{out}) \leq \alpha \cdot \mathsf{rad}(B_\mathsf{in})$.
Clearly, balls are exactly the $1$-fat geometric objects.
We say $\mathcal{G}$ is a class of \textit{fat-object graphs} if there exist $\alpha \geq 1$ and $d \in \mathbb{N}$ such that every graph in $\mathcal{G}$ is the intersection graph of a set of $\alpha$-fat geometric objects in $\mathbb{R}^d$.
It was known that fat-object graphs with bounded clique size form a graph class with strongly sublinear separators \cite{de2020framework,dvovrak2021approximation,ErlebachJS05}.
Therefore, a class of fat-object graphs is of clique-dependent polynomial expansion.

The second example is intersection graphs of \textit{pseudo-disks}.
A set of geometric objects in the plane are called \textit{pseudo-disks} if each of them is homeomorphic to a disk and the boundaries of any two objects are either disjoint or intersect exactly twice.
A graph is a \textit{pseudo-disk graph} if it is the intersection graph of a set of pseudo-disks in the plane.
We show that pseudo-disk graphs with bounded clique size have strongly sublinear separators, and thus form a graph class of clique-dependent polynomial expansion.

\begin{lemma} \label{lem-pdiskcdpoly}
    The class of pseudo-disk graphs is of clique-dependent polynomial expansion.
\end{lemma}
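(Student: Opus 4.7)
The plan is to invoke Theorem~\ref{thm-polyexp=sep}, which characterizes polynomial expansion as the existence of strongly sublinear separators, and reduce the lemma to the following claim: for every fixed $k$, the class $\mathcal{G}_k'$ of pseudo-disk graphs $G$ with $\omega(G)\le k$ satisfies $s_{\mathcal{G}_k'}(n)=O_k(\sqrt{n})$. Restricting a pseudo-disk family to a sub-family yields a pseudo-disk representation of the corresponding induced subgraph, so $\mathcal{G}_k'$ is automatically hereditary and the clique bound is preserved; hence it suffices to produce, for every $G\in\mathcal{G}_k'$, a balanced vertex separator of size $O_k(\sqrt{n})$.

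The first step is to control the \emph{ply} of a pseudo-disk representation realizing $G$: if some point of the plane is covered by $p$ of the pseudo-disks, then those $p$ pseudo-disks pairwise intersect, so they induce a $K_p$ in $G$. This forces the ply to be at most $\omega(G)\le k$. The second step turns this into a planar-separator argument. The arrangement of the $n$ pseudo-disks has linear combinatorial complexity by the Kedem--Livne--Pach--Sharir bound, so the planarization of this arrangement is a planar graph $H$ with $O(n)$ vertices, edges, and faces, each face weighted by the set of pseudo-disks whose interiors contain it (a set of size at most $k$ by the ply bound). I would then apply the weighted Lipton--Tarjan planar separator theorem to $H$, where the weight of a face is proportional to the number of pseudo-disks containing it, obtaining a set $C$ of $O(\sqrt{n})$ planar vertices, edges, or faces separating the weighted faces in a balanced fashion. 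Lifting $C$ back to $V(G)$ by collecting all pseudo-disks whose boundary meets $C$ or whose interior touches a boundary piece of $C$ produces a vertex subset $S\subseteq V(G)$ of size $O(k\sqrt{n})=O_k(\sqrt{n})$; the ply bound ensures the lift blows up the size by at most a factor of $k$, and the weighted balance of the planar separator translates, again with only a $k$-factor distortion, to vertex-balance of $G-S$.

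The main obstacle I anticipate is the clean bookkeeping needed for the lifting step: I must argue that two pseudo-disks whose indices lie in different components of $H\setminus C$ (measured at the face level, with the weighting described above) cannot intersect in $G$ once $S$ has been removed, i.e., that $S$ genuinely separates $G$ rather than merely the planarization. The ply-versus-clique bound from the first step gives the right local control, but the global argument has to match the notion of ``balanced'' used in Theorem~\ref{thm-polyexp=sep} (components of size $\le 2n/3$) with the weighted balance coming out of the planar separator. Once that correspondence is established, $\mathcal{G}_k'$ has separators of size $O_k(\sqrt{n})$, and Theorem~\ref{thm-polyexp=sep} immediately upgrades this to polynomial expansion of $\mathcal{G}_k'$, proving the lemma.
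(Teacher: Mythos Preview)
Your proposal is essentially the same approach as the paper's: bound the ply by $\omega(G)\le k$, observe that the arrangement (face-adjacency) graph is planar with $O_k(n)$ vertices via the Kedem et al.\ bound, apply a planar separator, and lift it back to $G$ with a factor-$k$ blowup using the ply bound. The one place where your details diverge---and where the weighted-separator idea, as you stated it, would stumble---is the balance step: weighting each face by its depth (the number of pseudo-disks containing it) balances the quantity $\sum_F d(F)$, but this does not translate to balancing the \emph{number} of pseudo-disks on each side, since different pseudo-disks can contain very different numbers of faces. The paper sidesteps this by applying the \emph{unweighted} planar separator recursively until every component of $A(\mathcal{S})-X$ has at most $n/(2k)$ faces; then, since each face lies in at most $k$ pseudo-disks, each component can host at most $n/2$ pseudo-disks, giving the required vertex balance in $G$ directly. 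Your instinct that matching the two notions of balance is the delicate step is exactly right; the clean fix is to separate down to small pieces rather than to balance a weight.
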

\begin{proof}
Consider a set $\mathcal{S}$ of pseudo-disks in the plane and let $G$ be the intersection graph of $\mathcal{S}$.
Suppose $\omega(G) \leq p$.
For each $S \in \mathcal{S}$, we denote by $v_S \in V(G)$ its corresponding vertex.
The boundaries of the pseudo-disks in $\mathcal{S}$ subdivide the plane into connected regions called \textit{faces}.
The \textit{depth} of a face is the number of pseudo-disks in $\mathcal{S}$ containing it.
Two faces are \textit{adjacent} if their boundaries share a common arc.
The \textit{arrangement graph} $A(\mathcal{S})$ of $S$ is a planar graph whose vertices are the faces contained in at least one pseudo-disk in $\mathcal{S}$ and two vertices are connected by an edge if they are adjacent.
Observe that each $S \in \mathcal{S}$, the subgraph of $A(\mathcal{S})$ induced by the faces contained in $S$ is connected, because $S$ is connected.
It is well-known \cite{kedem1986union} that the number of faces of depth at most $d$ is bounded by $O(dn)$, where $n = |\mathcal{S}|$.
Since $\omega(G) \leq p$, all faces are of depth at most $p$ and thus the total number of faces is $O(pn)$, which implies that $A(\mathcal{S})$ has $O(pn)$ vertices.
By recursively applying the separator theorem for planar graphs, we can find a set $X$ of $f(p) \cdot \sqrt{n}$ vertices of $A(\mathcal{S})$ such that each connected component of $A(\mathcal{S})-X$ has at most $n/(2p)$ vertices.
Let $\mathcal{S}' \subseteq S$ be the subset consisting of pseudo-disks which contain a face corresponding to some $x \in X$.
Since the depth of each face is at most $p$, we have $|\mathcal{S}'| \leq p \cdot |X| = f'(p) \cdot \sqrt{n}$ where $f'(p) = pf(p)$.
We claim that $\mathcal{S}'$ is a balanced separator of $G$.
Let $C_1,\dots,C_r$ be the connected components of $A(\mathcal{S})-X$.
For each pseudo-disk $S \in \mathcal{S} \backslash \mathcal{S}'$, let $Y_S$ be the set of vertices of $A(\mathcal{S})$ whose corresponding faces are contained in $S$.
Observe that the vertices in $Y_S$ belong to the same connected component of $A(\mathcal{S})-X$, because $Y_S \cap X = \emptyset$ and $Y_S$ induces a connected subgraph of $A(\mathcal{S})$; we say $S$ \textit{belongs to} $C_i$ if the vertices in $Y_S$ belong to $C_i$.
Also notice that if $S,S' \in \mathcal{S} \backslash \mathcal{S}'$ belong to different connected components of $A(\mathcal{S})-X$, then there is no edge between $S$ and $S'$ in $G$, since $S$ and $S'$ do not intersect.
Therefore, the vertices in one connected component of $G - \mathcal{S}'$ must belong to the same $C_i$.
Since each $C_i$ has at most $n/(2p)$ vertices and the face corresponding to each vertex can be contained in at most $p$ pseudo-disks in $\mathcal{S}$, the number of vertices of $G - \mathcal{S}'$ belonging to $C_i$ is at most $n/2$.
As a result, each connected component of $G - \mathcal{S}'$ contains at most $n/2$ vertices and thus $\mathcal{S}'$ is a balanced separator of $G$.
Now we see every $n$-vertex pseudo-disk graph with maximum clique size at most $p$ has a balanced separator of size $f(p) \cdot \sqrt{n}$ for some function $f$.
Thus, the class of pseudo-disk graphs is of clique-dependent polynomial expansion.
\end{proof}

Now we show how to extend our results in the previous sections to graph classes of clique-dependent polynomial expansion.
The key is a clique-decomposition scheme.
For a graph $G$ and an integer $k \geq 1$, a \textit{$k$-clique decomposition} of $G$ is a partition of $V(G)$ into $V_0,V_1,\dots,V_r$ such that $\omega(G[V_0])<k$ and $G[V_i]$ is a $k$-clique for all $i \in [r]$.
Note that a $k$-clique decomposition always exists for any graph and any $k$.
Indeed, one can keep removing $k$-cliques from the graph until the maximum clique is of size smaller than $k$.

\begin{lemma} \label{lem-decompdeg}
Given a graph $G$ and an integer $k \geq 1$, one can compute a $k$-clique decomposition of $G$ in $f(d) \cdot n$ time for some function $f$, where $n = |V(G)|$ and $d$ is the degeneracy of $G$.
\end{lemma}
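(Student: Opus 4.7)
The plan is to exploit the degeneracy of $G$ by processing vertices along a degeneracy ordering and enumerating potential $k$-cliques via their smallest vertex under that ordering. First, I would compute a degeneracy ordering $\sigma = (v_1,\dots,v_n)$ of $G$ in $O(n+m)=O(dn)$ time by iteratively peeling off a minimum-degree vertex; this guarantees that for each $v_i$, the set $N^+(v_i)=\{v_j : j>i,\,(v_i,v_j)\in E(G)\}$ of forward neighbors has size at most $d$. If $k>d+1$, I would simply output $V_0=V(G)$ as the whole vertex set, since a $k$-clique in $G$ would necessarily yield a vertex of forward-degree at least $k-1>d$ in any vertex ordering; so assume $k\leq d+1$ from here on.

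Next, I would iterate $i=1,2,\dots,n$, maintaining a Boolean \emph{available} flag for every vertex (initially true). When processing $v_i$, I enumerate all $\binom{d}{k-1}$ subsets $S$ of $N^+(v_i)$; for each, I check in $O(k^2)$ time whether $\{v_i\}\cup S$ induces a $k$-clique in $G$ and whether all $k$ of its vertices are still available. If both tests pass, I add $\{v_i\}\cup S$ as a new part of the decomposition and mark its vertices unavailable. After the full sweep, $V_0$ is declared to be the set of vertices that remain available. To argue $\omega(G[V_0])<k$, suppose for contradiction some $k$-clique $C\subseteq V_0$ existed, and let $v_i$ be its smallest vertex under $\sigma$. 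Then $C\setminus\{v_i\}\subseteq N^+(v_i)$, so $C$ is among the subsets inspected at step $i$; since $C\subseteq V_0$, none of its vertices were ever marked unavailable, so they were all available when $C$ was considered. The algorithm would therefore have extracted some $k$-clique containing $v_i$ (either $C$ itself or an earlier-considered one), marking $v_i$ unavailable---contradicting $v_i\in V_0$. Each extracted part is a $k$-clique by construction, so the output is a valid $k$-clique decomposition.

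For the running time, computing the degeneracy ordering costs $O(dn)$, and at each $v_i$ the enumeration, clique-testing, and availability bookkeeping take $O(d^{k-1}k^2)$ time; summed over $v_i$ this is $O(d^{k-1}k^2\cdot n)$, which, under $k\leq d+1$, is bounded by $f(d)\cdot n$ for an appropriate $f$. The only subtle point, and hence the main obstacle, is the correctness argument: one might worry that extracting one clique at $v_i$ locks out a second $k$-clique also rooted at $v_i$, leaving it behind in $V_0$. The resolution is that any such second clique necessarily contains $v_i$ itself, and extracting at $v_i$ makes $v_i$ unavailable, so the missed clique automatically fails the $C\subseteq V_0$ requirement. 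Once this observation is in hand, the degeneracy ordering does essentially all the work, since every $k$-clique of $G$ is rooted at a unique smallest vertex and is therefore considered exactly once during the single sweep.
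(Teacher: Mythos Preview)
Your argument is correct and follows essentially the same approach as the paper: compute a degeneracy ordering, process vertices along it, and greedily extract $k$-cliques by searching within the bounded-size neighborhood guaranteed by the ordering. The only cosmetic difference is that the paper reverses the ordering so that each $v_i$ has at most $d$ \emph{backward} neighbors and maintains the invariant $\omega(G[V_0])<k$ incrementally, whereas you use forward neighbors and argue correctness post hoc; both are equally valid. One small point you gloss over is how to test adjacency within $\{v_i\}\cup S$ in $O(1)$ per pair---this is handled by first building the adjacency matrix of $G[N^+(v_i)]$ in $O(d^2)$ time at each step (using the forward-neighbor lists), which keeps the per-vertex cost at $f(d)$.
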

\begin{proof}
Using the classical algorithm by Matula and Beck \cite{matula1983smallest}, one can compute a degeneracy ordering of $G$ in $O(n+|E(G)|)$ time, i.e., $O(dn)$ time.
Let $v_1,\dots,v_n$ be the degeneracy ordering computed.
Define $E_i = \{(v_j,v_i) \in E(G): j<i\}$ for $i \in [n]$.
By the property of a degeneracy ordering, $|E_i| \leq d$ for all $i \in [n]$.
Furthermore, one can easily compute $E_1,\dots,E_n$ in $O(dn)$ time.
Next, we consider each vertex $v_i$ for $i$ from $1$ to $n$.
During this procedure, we maintain a $k$-clique decomposition of the graph $G[\{v_1,\dots,v_i\}]$.
Suppose $V_0,V_1,\dots,V_r$ is a $k$-clique decomposition of the graph $G[\{v_1,\dots,v_{i-1}\}]$.
In order to obtain a $k$-clique decomposition of $G[\{v_1,\dots,v_i\}]$, we consider the graph $G[V_0 \cup \{v_i\}]$.
If $\omega(G[V_0 \cup \{v_i\}])<k$, then $V_0 \cup \{v_i\},V_1,\dots,V_r$ is a $k$-clique decomposition of $G[\{v_1,\dots,v_i\}]$.
Otherwise, there exists a $k$-clique $K$ in $G[V_0 \cup \{v_i\}]$, which must contain $v_i$ since $\omega(G[V_0])<k$.
In this case, $V_0 \backslash V(K),V_1,\dots,V_r,K$ is a $k$-clique decomposition of $G[\{v_1,\dots,v_i\}]$, because $\omega(G[V_0 \backslash V(K)]) \leq \omega(G[V_0])<k$.
So it suffices to find the clique $K$ (or verify its non-existence).
Since we know $v_i \in V(K)$, the other vertices of $K$ are all in $N_G(v_i) \cap V_0$.
The problem now becomes finding a $(k-1)$-clique in $G[N_G(v_i) \cap V_0]$.
Note that the vertices in $N_G(v_i) \cap V_0$ are exactly those incident to the edges in $E_i$ and contained in $V_0$.
As $|E_i| \leq d$, by checking all endpoints of the edges in $E_i$, we can compute $N_G(v_i) \cap V_0$ in $O(d)$ time.
Indeed, we can determine whether a vertex is in $V_0$ or not in $O(1)$ time by maintaining a mark on each vertex in $V_0$.
After obtaining $N_G(v_i) \cap V_0$, we then retrieve the edges of $G[N_G(v_i) \cap V_0]$.
This can be done in $O(d^2)$ time since every edge of $G[N_G(v_i) \cap V_0]$ is contained in $E_j$ for some $v_j \in N_G(v_i) \cap V_0$.
Once we have the graph $G[N_G(v_i) \cap V_0]$ in hand, we can find a $(k-1)$-clique in $G[N_G(v_i) \cap V_0]$ or determine no such a clique in $2^{O(d)}$ time by brute-force, because $G[N_G(v_i) \cap V_0]$ has at most $d$ vertices.
In this way, we obtain a $k$-clique decomposition of $G[\{v_1,\dots,v_i\}]$ from the $k$-clique decomposition of $G[\{v_1,\dots,v_{i-1}\}]$ in $2^{O(d)}$ time.
After considering all vertices $v_1,\dots,v_n$, we finally obtain a $k$-clique decomposition of $G$, and the total time cost is $f(d) \cdot n$ for some function $f$.
\end{proof}

\begin{lemma} \label{lem-cliquedecomp}
Let $\mathcal{G}$ be a hereditary graph class of clique-dependent bounded degeneracy.
Given a graph $G \in \mathcal{G}$ and an integer $k \geq 1$, one can compute a $k$-clique decomposition of $G$ in $f(k) \cdot n \log n + O(m)$ time for some function $f$, where $n = |V(G)|$ and $m = |E(G)|$.
\end{lemma}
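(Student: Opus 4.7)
The plan is divide-and-conquer on the vertex set. I arbitrarily split $V(G)$ into halves $V', V''$ of sizes $\lceil n/2\rceil$ and $\lfloor n/2\rfloor$, recursively obtain $k$-clique decompositions of $G[V']$ and $G[V'']$ with residuals $V_0' \subseteq V'$ and $V_0'' \subseteq V''$, then compute a $k$-clique decomposition of $G[V_0' \cup V_0'']$; concatenating the three families of $k$-cliques yields a $k$-clique decomposition of $G$. The key structural fact is that whenever $G[V_0']$ and $G[V_0'']$ are both $k$-clique-free, $G[V_0' \cup V_0'']$ cannot contain a $2k$-clique, since such a clique would meet at least one side in at least $k$ vertices. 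Hence $G[V_0' \cup V_0'']$ lies in the sub-class $\mathcal{G}_{2k} \subseteq \mathcal{G}$ of $(2k)$-clique-free graphs, which by the clique-dependent bounded-degeneracy hypothesis has degeneracy at most some $d = d_{\mathcal{G}}(k)$. Lemma~\ref{lem-decompdeg} therefore handles the merge in time $g(k) \cdot |V_0' \cup V_0''|$ where $g(k) = f(d_{\mathcal{G}}(k))$.

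A straightforward implementation extracts $G[V_0' \cup V_0'']$ at every merge by scanning global adjacency lists, costing $O(m)$ per level of the recursion and $O(m \log n)$ in total, which gives the bound $f(k) \cdot n \log n + O(m \log n)$. To shave the $\log n$ factor from the edge term, I model the recursion as a balanced binary tree $T$ whose leaves are $V(G)$, and use the algorithm of Harel and Tarjan~\cite{harel1984fast} to preprocess $T$ in $O(n)$ time so that the LCA $\ell(u,v)$ of any two leaves can be retrieved in $O(1)$. Bucketing all $m$ edges by their LCA in $T$ then takes $O(n+m)$ time. In parallel, I maintain an incremental \emph{residual adjacency} structure: each vertex in the current global residual $R$ (initially $V(G)$) carries a doubly-linked list of its residual neighbors along \emph{already-revealed} edges, with cross-pointers between the two entries representing a single edge so that insertion and deletion of an edge cost $O(1)$.

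The algorithm performs a post-order traversal of $T$. At a merge node $t$, I first install into the adjacency structure every edge in the bucket of $t$ whose endpoints both still lie in $R$. The maintained invariant is that the subgraph represented by the adjacency structure, restricted to $R \cap (V' \cup V'') = V_0' \cup V_0''$, is exactly $G[V_0' \cup V_0'']$: edges internal to $V'$ or $V''$ were installed at strictly lower LCAs and have survived, crossing $V'$-$V''$ edges have just been installed, and edges leaving $V' \cup V''$ have not yet been revealed because their LCA lies strictly above $t$. I then invoke Lemma~\ref{lem-decompdeg} on $G[V_0' \cup V_0'']$ (the subgraph has at most $d \cdot |V_0' \cup V_0''|$ edges by bounded degeneracy, so it is available to the sub-routine in time linear in its vertex count), and whenever a vertex is consumed into a $k$-clique I splice it out of its neighbors' lists via the cross-pointers in $O(d)$ time. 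Summing, LCA preprocessing and bucketing cost $O(n+m)$; each edge is installed at most once and deleted at most once, contributing $O(m)$ total; vertex consumptions contribute $O(dn) = g(k)\cdot n$; and the merge invocations contribute $\sum_t g(k)\cdot |V_0' \cup V_0''|$, which telescopes to $g(k)\cdot n \log n$ since the residual sets across the nodes at any fixed depth of $T$ are disjoint subsets of $V(G)$. This yields the claimed $f(k) \cdot n \log n + O(m)$.

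The step I expect to be the main obstacle is certifying the adjacency-structure invariant through the interleaved edge installations (at LCAs) and vertex splice-outs (at consumptions): in particular, ruling out that a stale edge from $V'$ to a vertex outside $V' \cup V''$ leaks into the subgraph that is handed to Lemma~\ref{lem-decompdeg}. This is controlled by the LCA bucketing, which delays the insertion of each edge $(u,v)$ until precisely the first merge node whose underlying vertex set contains both $u$ and $v$; once this is formalized, the remaining time analysis is a routine aggregation.
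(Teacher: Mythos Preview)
Your proposal is correct and follows essentially the same approach as the paper: divide-and-conquer on the vertex set, observe that the union of the two residuals is $2k$-clique-free (hence of bounded degeneracy), apply Lemma~\ref{lem-decompdeg} at each merge, and bucket edges at their Harel--Tarjan LCA in the recursion tree so each edge is processed $O(1)$ times. The only cosmetic difference is that the paper has each recursive call explicitly return the residual induced subgraph $G[V_0]$ and rebuilds $G[V_0' \cup V_0'']$ from $G[V_0']$, $G[V_0'']$, and $E_t$, whereas you maintain one global incremental adjacency structure; both implementations realize the same accounting.
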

\begin{proof}
For $p \in \mathbb{N}$, we write $\mathcal{G}_p = \{G \in \mathcal{G}: \omega(G) \leq p\}$.
Since $\mathcal{G}$ is of clique-dependent bounded degeneracy, there exists a function $d: \mathbb{N} \rightarrow \mathbb{N}$ such that all graphs in $\mathcal{G}_p$ has degeneracy at most $d(p)$ for every $p \in \mathbb{N}$.
We compute a $k$-clique decomposition of a given graph $G \in \mathcal{G}$ using a simple divide-and-conquer algorithm.
First, we evenly partition the vertices of $G$ into two sets $V'$ and $V''$, each of which contains $n/2$ vertices.
Then we recursively compute a $k$-clique decomposition $V_0',V_1',\dots,V_{r'}'$ of $G[V']$ and a $k$-clique decomposition $V_0'',V_1'',\dots,V_{r''}''$ of $G[V'']$.
Finally, we compute a $k$-clique decomposition $V_0,V_1,\dots,V_r$ of $G[V_0' \cup V_0'']$ using the algorithm in Lemma~\ref{lem-decompdeg}.
Observe that $V_0,V_1,\dots,V_r,V_0',V_1',\dots,V_{r'}',V_1'',\dots,V_{r''}''$ is a $k$-clique decomposition of $G$.
Indeed, this is clearly a partition of $V(G)$ in which $\omega(G[V_0]) < k$ and all the other parts induce $k$-cliques.
Therefore, we can return it as our output.
To analyze the running time of our algorithm, note that $\omega(G[V_0' \cup V_0'']) < 2k$, because $\omega(G[V_0'])<k$ and $\omega(G[V_0''])<k$.
Also, $G[V_0' \cup V_0''] \in \mathcal{G}$ since $\mathcal{G}$ is hereditary.
It follows that $G[V_0' \cup V_0''] \in \mathcal{G}_{2k}$ and the degeneracy of $G[V_0' \cup V_0'']$ is at most $d(2k)$.
Therefore, applying the algorithm in Lemma~\ref{lem-decompdeg} takes $f_0(d(2k)) \cdot n$ time for some function $f_0$.
Besides, before the recursive calls of the algorithm on $G[V']$ and $G[V'']$, we need to construct these two graphs.
Also, we need to construct $G[V_0' \cup V_0'']$ before working on it.
If we do these graph constructions explicitly, it takes $O(m)$ time.
In this way, the running time of our algorithm satisfies the recurrence
\begin{equation*}
    T(n,m) = T(n/2,m') + T(n/2,m'') + f_0(d(2k)) \cdot n + O(m),
\end{equation*}
where $m'$ and $m''$ are the numbers of edges in $G[V']$ and $G[V'']$, respectively.
As $m'+m'' \leq m$, This recurrence solves to $T(n,m) = f(k) \cdot n \log n + O(m \log n)$ for some function $f$.

In fact, we can further improve the running time to $f(k) \cdot n \log n + O(m)$ as follows.
The recursion tree of the above algorithm is a binary tree $T$ of depth $O(\log n)$ whose leaves one-to-one correspond to the $n$ vertices in $G$.
Each node of $T$ corresponds to a recursive call of the algorithm.
For each node $t \in T$, let $G_t$ denote the subgraph of $G$ induced by the vertices corresponding to the leaves of the subtree of $T$ rooted at $t$.
The call at the node $t$ returns a $k$-clique decomposition of $G_t$.
In order to avoid the $O(m \log n)$ term in the above algorithm, we do a preprocessing as follows.
We first build the recursion tree $T$, which can be done in $O(n)$ time.
Next, we store every edge $e = (u,v) \in E(G)$ at the lowest common ancestor of the two leaves of $T$ corresponding to $u$ and $v$.
Using the classical algorithm by Harel and Tarjan \cite{harel1984fast}, the lowest common ancestor can be found in $O(1)$ time.
Thus, this step can be done in $O(m)$ time in total.
Let $E_t$ denote the set of edges stored at each node $t \in T$.
After this preprocessing, we apply the above recursive algorithm with some small modifications.
First, in the call at a node $t \in T$, we not only return a $k$-clique decomposition $V_0,V_1,\dots,V_k$ of $G_t$, but also return the induced subgraph $G[V_0]$.
Achieving this is not difficult, as the part $V_0$ is finally obtained by the algorithm of Lemma~\ref{lem-decompdeg} and that algorithm can actually also return the induced subgraph $G[V_0]$ in the same running time.
Second, when we recursively call the algorithm at the children of a node, we do not construct the corresponding graphs of the children beforehand.
This modification brings us an issue: when solving the problem at a node $t$, we do not have the graph $G_t$ in hand.
In this situation, we can still do the recursive calls at the children $t'$ and $t''$ of $t$ (as we do not need to construct the corresponding graphs).
However, after we obtain the $k$-clique decomposition $V_0',V_1',\dots,V_{r'}'$ of $G_{t'}$ and the $k$-clique decomposition $V_0'',V_1'',\dots,V_{r''}''$ of $G_{t''}$ returned by the recursive calls, we have to efficiently construct the graph $G[V_0' \cup V_0'']$ without knowing $G_t$ and then apply the algorithm of Lemma~\ref{lem-decompdeg} on $G[V_0' \cup V_0'']$.
By assumption, the recursive calls already return us the graphs $G[V_0']$ and $G[V_0'']$.
Observe that all edges of $G[V_0' \cup V_0'']$ except those in $G[V_0']$ and $G[V_0'']$ are in the set $E_t$.
Therefore, to construct $G[V_0' \cup V_0'']$ provided $G[V_0']$ and $G[V_0'']$, we only need to check every edge in $E_t$ and find those with one endpoint in $V_0'$ and the other endpoint in $V_0'$, which can be done in $O(|E_t|)$ time.
Now if we use $T(t)$ to denote the running time of the call at the node $t$, it satisfies the recurrence
\begin{equation*}
    T(t) = T(t') + T(t'') + f_0(d(2k)) \cdot |V(G_t)| + O(|E_t|).
\end{equation*}
Since $\sum_{t \in T} |E_t| = m$, the recurrence solves to $T(\mathsf{rt}) = f(k) \cdot n \log n + O(m)$ for some function $f$, where $\mathsf{rt} \in T$ is the root of $T$.
\end{proof}

Based on the clique-decomposition algorithm, we can extend our lossy kernels and approximation schemes to any classes of geometric intersection graphs that are of clique-dependent polynomial expansion, which include in particular fat-object graphs and pseudo-disk graphs.

\begin{restatable}{theorem}{kernelgeo} \label{thm-kernelgeo}
    Let $\mathcal{G}$ be a class of fat-object graphs or the class of pseudo-disk graphs.
    For any fixed (finite) $\mathcal{F}$ of connected graphs, \textsc{Subgraph Hitting} on $\mathcal{G}$ with forbidden list $\mathcal{F}$ admits a $(1+\varepsilon)$-approximation lossy kernel of size $f(\varepsilon) \cdot k$ for some function $f$.
    The kernelization algorithm runs in $g(\varepsilon) \cdot n \log n + O(m)$ time for some function $g$, where $n$ (resp., $m$) is the number of vertices (resp., edges) of the input graph $G \in \mathcal{G}$.
\end{restatable}
\begin{proof}
In this proof, all instances of \textsc{Subgraph Hitting} are with forbidden list $\mathcal{F}$.
For $p \in \mathbb{N}$, we write $\mathcal{G}_p = \{G \in \mathcal{G}: \omega(G) \leq p\}$.
Since $\mathcal{G}$ is of clique-dependent polynomial expansion, $\mathcal{G}_p$ is of polynomial expansion for every $p \in \mathbb{N}$.
Let $G \in \mathcal{G}$ be a \textsc{Subgraph Hitting} instance and $\varepsilon > 0$ be the approximation factor.
As before, set $\gamma = \max_{F \in \mathcal{F}} |V(F)|$.
Define $t = \frac{1+\varepsilon}{\varepsilon} \cdot (\gamma-1)$.
We first apply the algorithm of Lemma~\ref{lem-cliquedecomp} to compute a $t$-clique decomposition $V_0,V_1,\dots,V_r$ of $G$.
We have $\omega(G[V_0]) < t$.
It follows that $G[V_0] \in \mathcal{G}_t$, since $\mathcal{G}$ is hereditary.

We observe that if $S_0 \subseteq G[V_0]$ is a $c$-approximation solution for the instance $G[V_0]$, then $S = S_0 \cup (\bigcup_{i=1}^r V_i)$ is a $(1+\varepsilon)c$-approximation solution for the instance $G$.
Clearly, $S$ is a feasible solution for $G$.
Indeed, $G-S = G[V_0]-S_0$ and the latter contains no forbidden pattern in $\mathcal{F}$ as a subgraph since $S_0$ is a solution of $G[V_0]$.
It suffices to show that $|S| \leq (1+\varepsilon)c \cdot |S'|$ for any solution $S' \subseteq V(G)$.
First observe that $S' \cap V_0$ is a feasible solution of $G[V_0]$, for otherwise the graph $G[V_0] - S' \cap V_0$ contains a forbidden pattern in $\mathcal{F}$ as a subgraph and so does $G - S'$, which contradicts that $S'$ is a solution for $G$.
Therefore, $|S_0| \leq c \cdot |S' \cap V_0|$.
Now we consider the sets $V_1,\dots,V_r$, which induce $t$-cliques in $G$.
For each $i \in [r]$, we must have $|V_i \backslash S'| \leq \gamma-1$, for otherwise $G-S'$ contains a $\gamma$-clique and thus contains any forbidden pattern in $\mathcal{F}$.
It follows that $|S' \cap V_i| \geq t-(\gamma-1) = \frac{t}{1+\varepsilon}$ for each $i \in [r]$.
So we have
\begin{equation*}
    |S| = |S_0| + rt \leq c \cdot |S' \cap V_0| + (1+\varepsilon) \cdot \sum_{i=1}^r |S' \cap V_i| = (1+\varepsilon)c \cdot |S'|.
\end{equation*}

Now we see that given a $c$-approximation solution for $G[V_0]$, one can compute in linear time a $(1+\varepsilon)c$-approximation solution for $G$.
So we can reduce the instance $G$ to the instance $G[V_0]$.
As $G[V_0] \in \mathcal{G}_t$ and $\mathcal{G}_t$ is of polynomial expansion, we can further apply Theorem~\ref{thm-kernel} to reduce $G[V_0]$ to a linear-size lossy kernel.
The kernel size is $f(\varepsilon,t) \cdot k$ and the running time is $g(\varepsilon,t) \cdot n \log n + O(m)$.
Note that $t$ depends only on $\varepsilon$.
Thus, we have the bounds in the theorem.
\end{proof}

\begin{restatable}{theorem}{schemegeo} \label{thm-schemegeo}
    Let $\mathcal{G}$ be a class of fat-object graphs or the class of pseudo-disk graphs.
    \textsc{Subgraph Hitting} on $\mathcal{G}$ admits an approximation scheme with running time $f(\varepsilon,\mathcal{F}) \cdot n \log n + O(m)$ for some function $f$, where $n$ (resp., $m$) is the number of vertices (resp., edges) of the input graph $G \in \mathcal{G}$.
\end{restatable}
\begin{proof}
The proof is similar to that of Theorem~\ref{thm-kernelgeo}.
For $p \in \mathbb{N}$, we write $\mathcal{G}_p = \{G \in \mathcal{G}: \omega(G) \leq p\}$.
Since $\mathcal{G}$ is of clique-dependent polynomial expansion, $\mathcal{G}_p$ is of polynomial expansion for every $p \in \mathbb{N}$.
Let $(G,\mathcal{F})$ be a \textsc{Subgraph Hitting} instance where $G  \in \mathcal{G}$, and $\varepsilon > 0$ be the approximation factor.
As before, set $\gamma = \max_{F \in \mathcal{F}} |V(F)|$.
Define $t = \frac{1+\varepsilon}{\varepsilon} \cdot (\gamma-1)$.
We apply Lemma~\ref{lem-cliquedecomp} to compute a $t$-clique decomposition $V_0,V_1,\dots,V_r$ of $G$.
We have $\omega(G[V_0]) < t$ and $G[V_0] \in \mathcal{G}_t$.
Now apply Corollary~\ref{cor-lsearchsub} to compute $(1+\varepsilon)$-approximation $S_0 \subseteq V_0$ for $(G[V_0],\mathcal{F})$ in $f(\varepsilon,t) \cdot n$ time.
The same analysis as in the proof of Theorem~\ref{thm-schemegeo} shows that $S = S_0 \cup (\bigcup_{i=1}^r V_i)$ is a $(1+\varepsilon)$-approximation solution for $(G,\mathcal{F})$.
As $t$ depends only on $\varepsilon$ and $\mathcal{F}$, the total running time is $f(\varepsilon,\mathcal{F}) \cdot n$.
\end{proof}

Lemma~\ref{lem-cliquedecomp} also gives us a byproduct for \textsc{$k$-Subgraph Isomorphism}.
\begin{theorem} \label{thm-kiso}
Let $\mathcal{G}$ be a class of fat-object graphs or the class of pseudo-disk graphs.
\textsc{$k$-Subgraph Isomorphism} on $\mathcal{G}$ can be solved in $f(k) \cdot n \log n + O(m)$ time for some function $f$, where $n$ and $m$ are the numbers of vertices and edges in the input graph $G \in \mathcal{G}$, respectively.
\end{theorem}
\begin{proof}
For $p \in \mathbb{N}$, we write $\mathcal{G}_p = \{G \in \mathcal{G}: \omega(G) \leq p\}$.
Since $\mathcal{G}$ is of clique-dependent polynomial expansion, $\mathcal{G}_p$ is of polynomial expansion for every $p \in \mathbb{N}$.
Thus, by Theorem~\ref{thm:IsoBoundedExp}, for each $p \in \mathbb{N}$, there exists a function $f_p$ such that \textsc{$k$-Subgraph Isomorphism} on $\mathcal{G}_p$ can be solved in $f_p(k) \cdot n$ time.
Now consider a graph $G \in \mathcal{G}$, and let $H$ be the $k$-vertex graph we are looking for.
We first apply Lemma~\ref{lem-cliquedecomp} to compute a $k$-clique decomposition $V_0,V_1,\dots,V_r$ of $G$ in $f(k) \cdot n \log n + O(m)$ time.
If $r \geq 1$, then we find a $k$-clique that contains $H$ as a subgraph.
Otherwise, $r = 0$ and $G = G[V_0]$, which implies that $\omega(G) < k$, i.e., $G \in \mathcal{G}_k$.
In this case, we can solve \textsc{$k$-Subgraph Isomorphism} on $G$ in $f_k(k) \cdot n$ time.
Finally, the total running time is bounded by $f(k) \cdot n \log n + O(m)$ time for some function $f$.
\end{proof}

\subsection{String graphs}

String graphs are intersection graphs of arbitrary connected geometric objects in the plane.
Unfortunately, string graphs are not of clique-dependent polynomial expansion.
However, these graphs are ``biclique-dependent'' polynomial expansion: any subclass of string graphs that has bounded biclique-size is of polynomial expansion.

For a graph $G$ and an integer $k \geq 1$, a \textit{$k$-biclique decomposition} of $G$ is a partition of $V(G)$ into $V_0,V_1,\dots,V_r$ such that $\omega'(G[V_0])<k$ and $G[V_i]$ is a $(k,k)$-biclique for all $i \in [r]$.
We show that $k$-biclique decompositions can be computed efficiently on graph classes of biclique-dependent polynomial expansion.
\begin{lemma} \label{lem-bicliquedecomp}
Let $\mathcal{G}$ be a hereditary graph class of biclique-dependent polynomial expansion.
Given a graph $G \in \mathcal{G}$ and an integer $k$, one can compute a $k$-biclique decomposition of $G$ in $f(k) \cdot n^2$ time for some function $f$, where $n = |V(G)|$.
\end{lemma}
\begin{proof}
For $p \in \mathbb{N}$, we write $\mathcal{G}_p = \{G \in \mathcal{G}: \omega'(G) \leq p\}$.
Since $\mathcal{G}$ is of biclique-dependent polynomial expansion, each $\mathcal{G}_p$ is a graph class of polynomial expansion.
So \textsc{Subgraph Isomorphism} (for a fixed subgraph) can be solved in linear time on $\mathcal{G}_p$~\cite{dvovrak2013testing}.
In other words, there exists a function $f_p$ such that one can find a subgraph of $q$ vertices in a given $G \in \mathcal{G}_p$ in $f_p(q) \cdot n$ time.
For convenience, we denote this \textsc{Subgraph Isomorphism} algorithm by $\mathbf{A}_p$.

We compute a $k$-biclique decomposition of a given graph $G \in \mathcal{G}$ as follows.
Suppose $V(G) = \{v_1,\dots,v_n\}$.
We consider each $v_i$ iteratively for $i=1,\dots,n$ and maintain a $k$-biclique decomposition of the induced subgraph $G[\{v_1,\dots,v_i\}]$.
Let $V_0,V_1,\dots,V_r$ be a $k$-biclique decomposition of $G[\{v_1,\dots,v_{i-1}\}]$.
To compute a $k$-biclique decomposition of $G[\{v_1,\dots,v_i\}]$, we simply call the algorithm $\mathbf{A}_k$ to find a $(k,k)$-biclique in $G[V_0 \cup \{v_i\}]$.
Note that $\mathbf{A}_k$ can be applied to $G[V_0 \cup \{v_i\}]$, and it takes $f_k(2k) \cdot n$ time.
Indeed, we have $\omega'(G[V_0 \cup \{v_i\}]) \leq \omega'(G[V_0])+1 \leq k$, since $\omega'(G[V_0])<k$ by the definition of a $k$-biclique decomposition.
If $G[V_0 \cup \{v_i\}]$ does not contain a $(k,k)$-biclique, then $V_0 \cup \{v_i\},V_1,\dots,V_r$ is a $k$-biclique decomposition of $G[\{v_1,\dots,v_i\}]$.
Otherwise, let $V_{r+1}$ be (the vertex set of) a $(k,k)$-biclique in $G[V_0 \cup \{v_i\}]$.
Observe that $v_i \in V_{r+1}$, for otherwise $G[V_0]$ contains a $(k,k)$-biclique, which contradicts the fact $\omega'(G[V_0])<k$.
Therefore, $\omega'(G[(V_0 \cup \{v_i\}) \backslash V_{r+1}])<k$, as $(V_0 \cup \{v_i\}) \backslash V_{r+1} \subseteq V_0$.
In this case, $(V_0 \cup \{v_i\}) \backslash V_{r+1},V_1,\dots,V_r,V_{r+1}$ is a $k$-biclique decomposition of $G[\{v_1,\dots,v_i\}]$.
After the last vertex $v_n$ is considered, we obtain a $k$-biclique decomposition of $G$.
The overall time complexity of the algorithm is $f(k) \cdot n^2$ for some function $k$, since each iteration takes linear time.
\end{proof}

Based on the biclique-decomposition algorithm, we can extend our lossy kernels and approximation schemes to string graphs, when the forbidden  graphs contain a bipartite graph.

\begin{restatable}{theorem}{kernelstr} \label{thm-kernelstr}
    Let $\mathcal{G}$ be the class of string graphs.
    For any fixed (finite) set $\mathcal{F}$ of connected graphs among which at least one is bipartite, \textsc{Subgraph Hitting} on $\mathcal{G}$ with forbidden list $\mathcal{F}$ admits a $(2+\varepsilon)$-approximation lossy kernel of size $f(\varepsilon) \cdot k$ for some function $f$.
    The kernelization algorithm runs in $g(\varepsilon) \cdot n^2$ time for some function $g$, , where $n$ is the number of vertices of the input graph $G \in \mathcal{G}$.
\end{restatable}
\begin{proof}
The proof is almost the same as that of Theorem~\ref{thm-kernelgeo}.
For $p \in \mathbb{N}$, we write $\mathcal{G}_p = \{G \in \mathcal{G}: \omega'(G) \leq p\}$.
Since $\mathcal{G}$ is of biclique-dependent polynomial expansion, $\mathcal{G}_p$ is of polynomial expansion for every $p \in \mathbb{N}$.
Let $G \in \mathcal{G}$ be a \textsc{Subgraph Hitting} instance (with forbidden list $\mathcal{F}$) and $\varepsilon > 0$ be the approximation factor.
As before, set $\gamma = \max_{F \in \mathcal{F}} |V(F)|$.
By assumption, there exists a bipartite graph $F \in \mathcal{F}$.
Define $t = \frac{2+\varepsilon}{\varepsilon} \cdot (\gamma-1)$.
We first apply the algorithm of Lemma~\ref{lem-bicliquedecomp} to compute a $t$-biclique decomposition $V_0,V_1,\dots,V_r$ of $G$.
We have $\omega'(G[V_0]) < t$.
It follows that $G[V_0] \in \mathcal{G}_t$, since $\mathcal{G}$ is hereditary.

We observe that if $S_0 \subseteq G[V_0]$ is a $c$-approximation solution for the instance $G[V_0]$, then $S = S_0 \cup (\bigcup_{i=1}^r V_i)$ is a $(2+\varepsilon)c$-approximation solution for the instance $G$.
Clearly, $S$ is a feasible solution for $G$.
Indeed, $G-S = G[V_0]-S_0$ and the latter contains no forbidden pattern in $\mathcal{F}$ as a subgraph since $S_0$ is a solution of $G[V_0]$.
It suffices to show that $|S| \leq (2+\varepsilon)c \cdot |S'|$ for any solution $S' \subseteq V(G)$.
First observe that $S' \cap V_0$ is a feasible solution of $G[V_0]$, for otherwise the graph $G[V_0] - S' \cap V_0$ contains a forbidden pattern in $\mathcal{F}$ as a subgraph and so does $G - S'$, which contradicts that $S'$ is a solution for $G$.
Therefore, $|S_0| \leq c \cdot |S' \cap V_0|$.
Now we consider the sets $V_1,\dots,V_r$, each of which forms a $(t,t)$-biclique in $G$.
For each $i \in [r]$, $S'$ must intersect one side of $G[V_i]$ with at least $t-(\gamma-1)$ vertices, for otherwise $G-S'$ contains a $(\gamma,\gamma)$-biclique and thus contains the bipartite forbidden pattern $F$.
It follows that $|S' \cap V_i| \geq t-(\gamma-1) = \frac{2t}{2+\varepsilon}$ for each $i \in [r]$.
So we have
\begin{equation*}
    |S| = |S_0| + 2rt \leq c \cdot |S' \cap V_0| + (2+\varepsilon) \cdot \sum_{i=1}^r |S' \cap V_i| = (2+\varepsilon)c \cdot |S'|.
\end{equation*}

Now we see that given a $c$-approximation solution for $G[V_0]$, one can compute in linear time a $(2+\varepsilon)c$-approximation solution for $G$.
So we can reduce the instance $G$ to the instance $G[V_0]$.
As $G[V_0] \in \mathcal{G}_t$ and $\mathcal{G}_t$ is of polynomial expansion, we can further apply Theorem~\ref{thm-kernel} to reduce $G[V_0]$ to a linear-size lossy kernel.
The kernel size is $f(\varepsilon,t) \cdot k$ and the running time is $g(\varepsilon,t) \cdot n^2$.
Note that $t$ depends only on $\varepsilon$.
Thus, we have the bounds in the theorem.
\end{proof}

\begin{restatable}{theorem}{schemestr} \label{thm-schemestr}
    Let $\mathcal{G}$ be the class of string graphs.
    \textsc{Bipartite Subgraph Hitting} on $\mathcal{G}$ admits a $(2+\varepsilon)$-approximation algorithm with running time $f(\varepsilon,\mathcal{F}) \cdot n^2$ for some function $f$, where $n$ is the number of vertices of the input graph $G \in \mathcal{G}$.
\end{restatable}
\begin{proof}
The proof is similar to that of Theorem~\ref{thm-kernelstr}.
For $p \in \mathbb{N}$, we write $\mathcal{G}_p = \{G \in \mathcal{G}: \omega(G) \leq p\}$.
Since $\mathcal{G}$ is of biclique-dependent polynomial expansion, $\mathcal{G}_p$ is of polynomial expansion for every $p \in \mathbb{N}$.
Let $(G,\mathcal{F})$ be a \textsc{Bipartite Subgraph Hitting} instance where $G  \in \mathcal{G}$, and $\varepsilon > 0$ be the approximation factor.
As before, set $\gamma = \max_{F \in \mathcal{F}} |V(F)|$.
Define $t = \frac{2+\varepsilon}{\varepsilon} \cdot (\gamma-1)$.
We apply Lemma~\ref{lem-cliquedecomp} to compute a $t$-biclique decomposition $V_0,V_1,\dots,V_r$ of $G$.
We have $\omega(G[V_0]) < t$ and $G[V_0] \in \mathcal{G}_t$.
Now apply Corollary~\ref{cor-lsearchsub} to compute $(1+\varepsilon)$-approximation $S_0 \subseteq V_0$ for $(G[V_0],\mathcal{F})$ in $f(\varepsilon,t) \cdot n$ time.
The same analysis as in the proof of Theorem~\ref{thm-schemestr} shows that $S = S_0 \cup (\bigcup_{i=1}^r V_i)$ is a $(2+\varepsilon)$-approximation solution for $(G,\mathcal{F})$.
As $t$ depends only on $\varepsilon$ and $\mathcal{F}$, the total running time is $f(\varepsilon,\mathcal{F}) \cdot n^2$.
\end{proof}

\section{Hitting shallow minors}
In this section, we show that the \textsc{(Induced) Shallow-Minor Hitting} problem can be reduced to \textsc{(Induced) Subgraph Hitting}, and thus Theorem~\ref{thm-main} as well as all its applications can be extended to \textsc{(Induced) Shallow-Minor Hitting}.
\begin{lemma}
    Given a finite set $\mathcal{F}$ of graphs and an integer $d \geq 0$, one can compute another finite set $\mathcal{F}'$ of graphs such that a graph contains a (induced) $d$-shallow minor in $\mathcal{F}$ iff it contains a (induced) subgraph in $\mathcal{F}'$.
\end{lemma}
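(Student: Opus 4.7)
The plan is to let $\mathcal{F}'$ consist of all (isomorphism classes of) graphs of bounded size that contain some $F\in\mathcal{F}$ as a (induced) $d$-shallow minor, and then verify that this captures exactly the $d$-shallow-minor relation in any host graph $G$.

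First I would bound the size of a ``minimal witness'' for the $d$-shallow minor relation. Suppose $G$ contains some $F\in\mathcal{F}$ as a $d$-shallow minor via a model $\phi$. For each $v\in V(F)$ pick a center $r_v$ of $\phi(v)$ and let $T_v\subseteq \phi(v)$ be a BFS tree rooted at $r_v$ of depth at most $d$. For each edge $(u,v)\in E(F)$ fix a witnessing edge $e_{uv}=(p_{uv},q_{uv})$ of $G$ with $p_{uv}\in V(T_u)$ and $q_{uv}\in V(T_v)$. Prune each $T_v$ to the union of the root-to-$q_{uv}$ paths, taken over edges $(u,v)\in E(F)$ incident to $v$. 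The pruned tree $T_v'$ is connected, still rooted at $r_v$, still has radius at most $d$, and contains at most $1+d\cdot\deg_F(v)$ vertices. Hence $V':=\bigcup_{v\in V(F)}V(T_v')$ has at most $\beta_F:=|V(F)|+2d|E(F)|$ vertices. Set $\beta:=\max_{F\in\mathcal{F}}\beta_F$.

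Next I would define $\mathcal{F}'$ to be the collection of all graphs on at most $\beta$ vertices that contain some member of $\mathcal{F}$ as a (induced) $d$-shallow minor. Since $\beta$ depends only on $\mathcal{F}$ and $d$, this is a finite set; it is computable by enumerating all graphs on at most $\beta$ vertices and testing each by brute force.

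Finally I would verify the equivalence. For the subgraph version: if $G$ contains some $F\in\mathcal{F}$ as a $d$-shallow minor, the pruning above produces a subgraph $G'$ of $G$ on vertex set $V'$ with edges $\bigcup_v E(T_v')\cup\{e_{uv}:(u,v)\in E(F)\}$; this $G'$ itself contains $F$ as a $d$-shallow minor and has at most $\beta$ vertices, so up to isomorphism $G'\in\mathcal{F}'$. Conversely, if $G$ contains some $G'\in\mathcal{F}'$ as a subgraph, then any $d$-shallow minor model of $F$ inside $G'$ is a valid such model inside $G$. For the induced version the only adjustment is to take $G':=G[V']$ rather than a general subgraph, and to use the model $\phi'(v):=T_v'$: edges of $F$ are witnessed by the $e_{uv}$, which lie in $G[V']$ since both endpoints lie in $V'$; non-edges of $F$ induce no edge of $G$ between $\phi(u)\supseteq T_u'$ and $\phi(v)\supseteq T_v'$ by the induced-minor property, hence none in $G[V']$. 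The converse direction is immediate because $G'=G[V(G')]$ when $G'$ is an induced subgraph.

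The only mildly subtle point is the pruning step, where one must keep \emph{just enough} of each BFS tree to retain every inter-tree witness edge while still enforcing the radius bound; selecting the root-to-$q_{uv}$ paths accomplishes both simultaneously and gives the explicit vertex bound $1+d\cdot\deg_F(v)$ per tree, which is what makes $\mathcal{F}'$ finite.
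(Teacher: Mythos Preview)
Your proposal is correct and follows essentially the same approach as the paper: bound the size of a minimal witness for the $d$-shallow-minor relation, then let $\mathcal{F}'$ consist of all graphs of that bounded size containing some $F\in\mathcal{F}$ as the appropriate minor. The paper constructs its witness via paths between reference vertices $\xi_u,\xi_v$ for each edge $(u,v)\in E(F)$ (yielding the bound $(d+1)\delta^2$), whereas you prune BFS trees to root--to--endpoint paths (yielding the slightly sharper $|V(F)|+2d|E(F)|$); these constructions are interchangeable and the verification for both the subgraph and induced versions proceeds identically.
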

\begin{proof}
    Let $\delta = \max_{F \in \mathcal{F}} |V(F)|$.
    We define $\mathcal{F}'$ as the set of all graphs $F'$ such that $|V(F')| \leq (d+1) \delta^2$ and $F'$ contains $F$ as an induced $d$-shallow minor for some $F \in \mathcal{F}$.
    Clearly, if a graph $G$ contains some $F' \in \mathcal{F}'$ as a (induced) subgraph, then it contains some $F \in \mathcal{F}$ as a (induced) $d$-shallow minor.
    For the other direction, assume $G$ contains some $F \in \mathcal{F}$ as a (induced) $d$-shallow minor.
    For a vertex $v \in V(F)$, denote by $G_v$ the connected subgraph of $G$ that is contracted to $v$.
    In each $G_v$, we pick a reference vertex $\xi_v \in V(G_v)$.
    The diameter of $G_v$ is at most $d$.
    Thus, for each edge $(u,v) \in E(F)$, there exists a path $\pi_{u,v} = (v_0,v_1,\dots,v_r)$ in $G$ between $\xi_u$ and $\xi_v$ such that \textbf{(i)} $\pi_{u,v}$ contains at most $2(d+1)$ vertices and \textbf{(ii)} there exists $i \in [r]$ such that $v_0,\dots,v_{i-1} \in V(G_u)$ and $v_i,\dots,v_r \in V(G_v)$.
    Let $F_0'$ be the subgraph of $G$ consisting of the paths $\pi_{u,v}$ for all $(u,v) \in E(F)$.
    We now add to $F_0'$ some extra edges.
    Specifically, for each $v \in V(F)$, we add to $F_0'$ all edges of $G_v$ with both endpoints in $V(F_0')$.
    Also, for each $(u,v) \in E(F)$, we add to $F_0'$ all edges of $G$ with one endpoint in $F_0' \cap G_u$ and the other endpoint in $F_0' \cap G_v$.
    Let $F'$ denote the resulting new graph, which is still a subgraph of $G$.
    As $|E(F)| \leq \delta^2/2$, we have $|V(F')| = |V(F_0')| \leq (d+1) \delta^2$.
    Furthermore, $F'$ contains $F$ as an induced $d$-shallow minor.
    To see this, notice that $F' \cap G_v$ is connected for all $v \in V(F)$, simply because $F_0' \cap G_v$ is connected.
    Furthermore, there is an edge in $F'$ between $F' \cap G_u$ and $F' \cap G_v$ iff $(u,v) \in E(F)$.
    Thus, $F$ can be obtained from $F'$ by contracting each $F' \cap G_v$ to a single vertex.
    It follows that $F' \in \mathcal{F}'$.
    Clearly, $F'$ is a subgraph of $G$.
    If $F$ is an \textit{induced} minor of $G$, we claim that $F'$ is an induced subgraph of $G$.
    Consider two vertices $x,y \in V(F')$ such that $(x,y) \in E(G)$.
    If $x,y \in G_v$ for some $v \in V(F)$, then $(x,y) \in E(F')$ by our construction.
    If $x \in G_u$ and $y \in G_v$ where $u \neq v$, then $(u,v) \in E(F)$ because $F$ is an induced minor of $G$.
    Again by our construction, $(x,y) \in E(F')$.
\end{proof}

Recall that in \textsc{(Induced) Shallow-Minor Hitting}, we are given a graph $G$, a forbidden list $\mathcal{F}$, an integer $d \geq 0$, and the goal is to compute a minimum set $S \subseteq V(G)$ such that $G-S$ does not contain any graph in $\mathcal{F}$ as a $d$-shallow minor.
Consider an instance $(G,\mathcal{F},d)$ of \textsc{(Induced) Shallow-Minor Hitting}.
We can use the above lemma to compute a set $\mathcal{F}'$ of graphs, which only depends on $\mathcal{F}$ and $d$.
Now a subset $S \subseteq V(G)$ is a solution of the \textsc{(Induced) Shallow-Minor Hitting} instance $(G,\mathcal{F},d)$ iff it is a solution of the \textsc{(Induced) Subgraph Hitting} instance $(G,\mathcal{F}')$.
On the other hand, there is a trivial reduction from \textsc{(Induced) Subgraph Hitting} to \textsc{(Induced) Shallow-Minor Hitting}, because (induced) subgraphs are exactly (induced) 0-shallow minors.
As such, Theorem~\ref{thm-main} can be extended to \textsc{(Induced) Shallow-Minor Hitting}.
\begin{theorem}
Let $\mathcal{G}$ be any hereditary graph class of bounded expansion.
If \textsc{(Induced) Shallow-Minor Hitting} on $\mathcal{G}$ admits an approximation scheme with running time $f_0(\varepsilon,\mathcal{F},d,\Delta) \cdot n^c$ for some function $f_0$, then the same problem also admits an approximation scheme with running time $f(\varepsilon,\mathcal{F},d) \cdot n^c$ for some function $f$, where $n$ is the number of vertices in the input graph $G \in \mathcal{G}$ and $\Delta$ is the maximum degree of $G$.
\end{theorem}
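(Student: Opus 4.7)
The plan is to combine the reduction lemma stated immediately above the theorem (which packages shallow minors into forbidden subgraphs) with Theorem~\ref{thm-main}. The conceptual pivot is that \textsc{(Induced) Subgraph Hitting} is simply the $d=0$ specialization of \textsc{(Induced) Shallow-Minor Hitting}, so the hypothesis of the theorem also supplies a degree-sensitive scheme for ordinary subgraph hitting, which is precisely what Theorem~\ref{thm-main} consumes.

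First I would argue that the hypothesis delivers a degree-sensitive approximation scheme for \textsc{(Induced) Subgraph Hitting} on $\mathcal{G}$. Given any finite family $\mathcal{F}''$, invoking the assumed scheme on the instance $(G,\mathcal{F}'',0)$ returns a $(1+\varepsilon)$-approximation solution in time $f_0(\varepsilon,\mathcal{F}'',0,\Delta) \cdot n^c$. Treating $\Delta \mapsto f_0(\varepsilon,\mathcal{F}'',0,\Delta)$ as the required degree-sensitive function, Theorem~\ref{thm-main} promotes this into an approximation scheme $\mathbf{B}$ for \textsc{(Induced) Subgraph Hitting} on $\mathcal{G}$ with running time $g(\varepsilon,\mathcal{F}'') \cdot n^c$ for some function $g$ depending only on $\varepsilon$ and $\mathcal{F}''$.

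Next I would handle a general input $(G,\mathcal{F},d)$ for \textsc{(Induced) Shallow-Minor Hitting}. Apply the preceding lemma to construct the finite family $\mathcal{F}'$ (depending only on $\mathcal{F}$ and $d$, not on $G$) so that for every induced subgraph $G' \subseteq_\mathsf{in} G$, the graph $G'$ contains a (induced) $d$-shallow minor from $\mathcal{F}$ if and only if it contains a (induced) subgraph from $\mathcal{F}'$. Consequently, a subset $S \subseteq V(G)$ hits all forbidden shallow minors in $G-S$ exactly when it hits all forbidden subgraphs in $G-S$, so the two problems $(G,\mathcal{F},d)$ and $(G,\mathcal{F}')$ have identical feasible solutions and identical optima. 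Running $\mathbf{B}$ on $(G,\mathcal{F}')$ with parameter $\varepsilon$ therefore produces a $(1+\varepsilon)$-approximation for the original shallow-minor instance, in time $g(\varepsilon,\mathcal{F}') \cdot n^c$. Since $\mathcal{F}'$ is a function of $\mathcal{F}$ and $d$ alone, setting $f(\varepsilon,\mathcal{F},d) \coloneqq g(\varepsilon,\mathcal{F}'(\mathcal{F},d))$ yields the claimed running time $f(\varepsilon,\mathcal{F},d) \cdot n^c$.

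There is no real obstacle here, since the combinatorial content is already absorbed by the preceding lemma and by Theorem~\ref{thm-main}. The only subtlety to verify is that computing $\mathcal{F}'$ is an input-independent precomputation that can be folded into the function $f$, and that the equivalence between the two instances is preserved under vertex deletion, which is immediate because the lemma characterizes shallow-minor containment graph-by-graph rather than for a specific host.
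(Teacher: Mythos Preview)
Your proposal is correct and follows essentially the same argument as the paper: reduce $(G,\mathcal{F},d)$ to the \textsc{(Induced) Subgraph Hitting} instance $(G,\mathcal{F}')$ via the preceding lemma, observe that the theorem's hypothesis specialized to $d=0$ furnishes exactly the degree-sensitive scheme required by Theorem~\ref{thm-main}, and compose. The paper's own treatment is terser but the content is identical.
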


\noindent
Also, all of our algorithmic results derived from Theorem~\ref{thm-main} can be extended to \textsc{(Induced) Shallow-Minor Hitting}, with the same running time.

\section*{Acknowledgement}
The authors would like to thank anonymous reviewers for their detailed comments on an earlier version of this paper.
Specifically, the comments of one reviewer helped us substantially improve the running time of our approximation schemes.

\bibliographystyle{plainurl}
\bibliography{my_bib.bib}

\appendix

\section{Other Related Works}\label{sec-related}

We first remark that constant-approximation algorithms are known for various problems on bounded-expansion graphs, including various versions of \textsc{Independent Set} and \textsc{Dominating Set}~\cite{dvovrak2013constant} as well as all monotone maximization problems expressible in first-order logic~\cite{dvovrak2021approximation}. Next, we consider the generic \textsc{(Induced) Subgraph Hitting} problem and some (though not all) of its special cases that were studied independently in the literature. A thorough survey is beyond the scope of our work. Here, our goal is only to mention some of the known results. Further, we do not survey results concerning \textsc{Subgraph Isomorphism} since the literature about it is as rich as the literature on {\sc Subgraph Hitting}. Still, since it is highly relevant to our work, we note that on graphs of bounded expansion, testing first-order properties can be done in linear time~\cite{dvovrak2013testing}, which implies that \textsc{Subgraph Isomorphism} can be solved in linear time for any fixed $k$ (the size of the pattern).

The generic \textsc{(Induced) Subgraph Hitting} problem is a special case of (or, more precisely, trivially reducible to) {\sc $d$-Hitting Set}, and hence all results for {\sc $d$-Hitting Set} (e.g., approximation algorithm, FPT algorithms and kernels~\cite{williamson2011design,CyganFKLMPPS15}) directly transfer. For further discussion on the possibility of better approximation algorithms for \textsc{(Induced) Subgraph Hitting}, we refer to \cite{brustle2021approximation}.
On bounded-treewidth graphs, the problem was studied in \cite{cygan2017hitting,sau2021hitting} from the perspective of parameterized complexity. We remind that, on planar graphs, the classical work of Baker~\cite{Baker94} yields a PTAS, which can be extended to minor-free graphs~\cite{dawar2006approximation,dvovrak2020baker} and unit-disk graphs~\cite{FominLS18}. 
We remark that the generic edge-deletion version of the problem was also considered in the literature (e.g., see \cite{enright2018deleting,gaikwad2022further}). Further, there is a substantial body of works on special cases of the edge-deletion version dedicated to hitting specific graphs on only few vertices such as claws, paws and diamonds, particularly from the perspective of kernelization (see, e.g., \cite{li2022improved,cygan2017polynomial,DBLP:conf/iwpec/EibenL020,cai2015incompressibility,marx2022incompressibility} for a few examples).

\bigskip
\noindent\textsc{Vertex Cover}: The {\sc Vertex Cover} problem is widely considered to be the most extensively studied problem in parameterized complexity, and is often considered a testbed for the introduction of new notions and techniques~\cite{CyganFKLMPPS15}. It is long known that this problem can be approximated within ratio $2$~\cite{williamson2011design}, while ratio $(2-\epsilon)$ is impossible under the Unique Games Conjecture (UGC)~\cite{khot2008vertex}. More involved techniques yield slightly better ratios~\cite{bar1985local,monien1985ramsey,halperin2002improved,karakostas2009better}; for example, consider the ratio of $2-\Theta(\frac{1}{\sqrt{\log n}})$ by~\cite{karakostas2009better}. Under the assumption P$\neq$NP, the known lower bound is weaker~\cite{dinur2005hardness}. Constants better than $2$ are known for various graph classes, such as bounded degree graphs, where we know of matching upper~\cite{halperin2002improved} and lower~\cite{austrin2011inapproximability} bounds (under the UGC) for the approximation ratio, dense graphs~\cite{karpinski1996approximating}, and rectangle graphs~\cite{bar2011minimum}. Further, {\sc Vertex Cover} was already known to admit a PTAS on bounded expansion graphs~\cite{har2017approximation} and an EPTAS on disk graphs~\cite{ErlebachJS05,Leeuwen06,lokshtanov2023framework}.

From the perspective of parameterized complexity, there has been long races to achieve the best possible running times on general graphs and on bounded degree graphs, as well as when considering exact exponential-time algorithms (see~\cite{balasubramanian1998improved,
buss1993nondeterminism,chen2001vertex,
chandran2004refined,chen2001vertex,niedermeier1999upper,
DBLP:journals/tcs/ChenKX10,stege1999improved,xiao2017exact} for a few of the papers); we only briefly mention that the best known parameterized algorithm for general graphs achieves the bound $1.28^k\cdot n^{O(1)}$~\cite{DBLP:journals/tcs/ChenKX10}.  The problem is long known to admit a kernel of size $O(k^2)$~\cite{buss1993nondeterminism}, which is essentially optimal unless the polynomial hierarchy collapses~\cite{dell2014satisfiability}. Nevertheless, the number of vertices in the kernel can be reduced to $2k$~\cite{chen2001vertex} with slight improvements given in~\cite{SOLEIMANFALLAH2011892,DBLP:journals/dam/ChlebikC08}, but it is unlikely that it can be reduced to $(2-\epsilon)k$~\cite{DBLP:conf/stacs/ChenFKX05}. Further, Bidimesnionality theory for minor-free and unit-disk graphs~\cite{demaine2005subexponential,DBLP:conf/soda/FominLS12,FominLPSZ19} as well as map graphs~\cite{DBLP:conf/icalp/FominLP0Z19} yields subexponential FPT algorithms.

\bigskip
\noindent\textsc{(Induced) $P_k$-Hitting:} Clearly, {\sc Vertex Cover} is the special case of {\sc $P_k$-Hitting} (also known as {\sc $k$-Path Vertex Cover}) when $k=2$. Still, even for small values of $k$, it is harder. For example, for $k=3$, it is APX-complete on bipartite graphs~\cite{kumar2014approximation} (whereas {\sc Vertex Cover} is solvable in polynomial time), and for $k=4$, it is APX-complete on cubic bipartite graphs~\cite{devi2015computational}. Still, for $k=3$, it is solvable in polynomial-time on various graph classes~\cite{DBLP:journals/arscom/BoliacCL04,DBLP:journals/dam/BrauseK17,DBLP:journals/amc/BresarKKS19,DBLP:journals/mp/CameronH06,DBLP:journals/ipl/LozinR03,DBLP:journals/dam/OrlovichDFGW11} such as $P_5$-free graphs~\cite{DBLP:journals/dam/BrauseK17}. 
Lee~\cite{lee2017partitioning} gave an FPT $O(\log k)$-approximation algorithm where the parameter is $k$. His result was generalized and improved upon (in terms of running time) by Gupta et al.~\cite{gupta2019losing}. For $k=3$, a $2$-approximation algorithm was given in \cite{tu2011factor} (improving upon \cite{kardovs2011computing}).
Recently, for $k\geq 3$, Br{\"u}stle~\cite{brustle2021approximation} presented a ($k-\frac{1}{2}$)-approximation algorithm (even for a more general case of hitting trees). Other approximation algorithms for this problem can be found in \cite{2018moderately,zhang2020improved}. We note that the problem admits an EPTAS on disk graphs~\cite{lokshtanov2023framework} for $3\leq k\leq 5$, and a PTAS on ball graphs for any fixed $k$~\cite{DBLP:journals/ipl/ZhangLSNZ17}. 
Exact exponential-time algorithms for $k=3$ were developed in \cite{kardovs2011computing,2018moderately,chang20141,xiao2017exact}. Also, for various fixed values of $k\geq3$, the problem was intensively studied in parameterized complexity: parameterized algorithms were developed in \cite{DBLP:conf/mfcs/CervenyS19,DBLP:journals/corr/abs-2111-05896,DBLP:journals/corr/abs-1906-10523,DBLP:journals/dam/Tsur21,DBLP:journals/dam/TuJ16,DBLP:journals/corr/BaiTS16,DBLP:journals/jcss/ShachnaiZ17,DBLP:journals/jco/TuWYC17} and kernels were developed in \cite{DBLP:conf/tamc/XiaoK17,DBLP:journals/corr/abs-2107-12245,DBLP:journals/jcss/FellowsGMN11,DBLP:journals/jcss/Xiao17,DBLP:conf/tamc/XiaoK17}.
Further, the  Bidimesnionality theory for minor-free and unit-disk graphs~\cite{demaine2005subexponential,DBLP:conf/soda/FominLS12,FominLPSZ19} as well as map graphs~\cite{DBLP:conf/icalp/FominLP0Z19} yields subexponential FPT algorithms. 
For more information on {\sc $P_k$-Hitting}, we refer to the survey of~\cite{tu2022survey}.

It is known that {\sc Induced $P_3$-Hitting} is equivalent to the {\sc Cluster Vertex Deletion} problem (where the objective is to delete a minimum number of vertices to attain a {\em cluster graph}, which is a collection of cliques), which has been extensively studied in the literature. As such, it trivially has a $3$-approximation (polynomial-time) algorithm. The first non-trivial approximation algorithm  was a $\frac{5}{3}$-
approximation due to You et al.~\cite{you2017approximate}. Shortly afterward, Fiorini et al.~gave a $\frac{7}{3}$-approximation~\cite{fiorini2016improved}, and subsequently a $\frac{9}{4}$-approximation~\cite{fiorini2020improved}. Afterwards, a $2$-approximation was found~\cite{aprile2021tight}, which is easily seen to be optimal under the UGC (by a simple reduction from {\sc Vertex Cover}). The problem also received attention from the perspectives of FPT algorithms~\cite{boral2016fast,huffner2010fixed,tsur2021faster} and kernelization~\cite{DBLP:journals/talg/FominLLSTZ19}.  Further, it was studied on special graph classes such as $H$-free graphs~\cite{le2022complexity}.

\bigskip
\noindent\textsc{Degree Modulator:}  Notice that {\sc Degree Modulator}, where the goal is to delete a minimum number of vertices to reduce the maximum degree of the graph to be $k$, is equivalent to hitting stars of size $k+1$. When $k=0$, we derive {\sc Vertex Cover}, and when $k=1$, we derive {\sc $P_2$-Hitting}, both discussed earlier. So, here, we only consider the case where $k\geq 2$. First, we remark that the FPT $O(\log k)$-approximation algorithm by Gupta et al.~\cite{gupta2019losing} is also noted to apply to this problem. However, previously, a randomized (polynomial-time) $O(\log k)$-approximation algorithm was already known~\cite{ebelpreprint}. A matching inapproximability result was given in~\cite{DBLP:conf/esa/DemaineGKLLSVP19}. The problem is also known to admit an EPTAS on disk graphs~\cite{lokshtanov2023framework}.
Moreover, FPT algorithms, kernels and W[1]-hardness results for this problem (sometimes referred to by a different name) parameterized by $k$ plus (or only by) the solution size or a structural measure such as treewidth of the graph were given in~\cite{betzler2012bounded,ganian2021structural,DBLP:journals/dam/NishimuraRT05,DBLP:journals/jcss/FellowsGMN11,DBLP:journals/jco/MoserNS12,DBLP:conf/isaac/DessmarkJL93,DBLP:conf/soda/LokshtanovMRSZ21}.  We note that the complexity of the problem of hitting induced stars on $k+1$ vertices, also known as induced $k$-claws, was also previously studied (see, e.g.,~\cite{hsieh2021d}).

\bigskip
\noindent\textsc{(Induced) $C_k$-Hitting}: 
First, note that the \textsc{(Induced) $C_k$-Hitting} problem only makes sense for $k\geq 3$. The restriction of the case where $k=3$ (i.e., when we aim to hit triangles) to disk graphs admits a subexponential FPT algorithm (when parameterized by solution size)~\cite{lokshtanov2022subexponential} as well as an EPTAS~\cite{lokshtanov2023framework}. Further, the parameterized complexity of {\sc (Induced) $C_k$-Hitting} with respect to treewidth was studied in \cite{pilipczuk2011problems}, and a dichotomy concerning its membership in P was given in 
\cite{groshaus2009cycle}.

\bigskip
\noindent\textsc{$K_k$-Hitting} and \textsc{(Induced) Biclique Hitting:} Approximation and parameterized algorithms for \textsc{$K_k$-Hitting} on perfect graphs were given in~\cite{fiorini2018approximability}. Additionally, the parameterized complexity of \textsc{(Induced) Biclique Hitting} was studied in~\cite{goldmann2021parameterized}.
 
\bigskip
\noindent\textsc{Component Order Connectivity:} The objective of this problem is delete a minimum number of vertices in order to hit all connected graphs of size at most $k$. Clearly, {\sc Vertex Cover} is its special case when $k=2$. For \textsc{Component Order Connectivity}, Bidimesnionality theory for minor-free and unit-disk graphs~\cite{demaine2005subexponential,DBLP:conf/soda/FominLS12,FominLPSZ19} as well as map graphs~\cite{DBLP:conf/icalp/FominLP0Z19} yields subexponential FPT algorithms. Drange et al.~\cite{drange2016computational} provided an FPT and a kernel for~\textsc{Component Order Connectivity} parameterized by $k$ plus the solution size, and, in  addition, studied the complexity of the problem on special graph classes. Iproved $O(ks)$-vertex kernels were given by 
\cite{kumar20172lk,DBLP:journals/jcss/Xiao17a}, where $s$ is the solution size. Moreover, an FPT approximation scheme with respect to treewidth was given  in \cite{DBLP:conf/soda/LokshtanovMRSZ21}.
Additional results can be found in the survey~\cite{gross2013survey}.

\bigskip
\noindent\textsc{Treedepth Modulator:} The objective of this problem is to delete a minimum-sized $t$-treedepth modulator in the given graph, that is, a minimum number of vertices whose deletion from the given graph yields a graph of treedepth at most $t$. Here,  $t\in\mathbb{N}$ is a given fixed constant. The problem can be phrased as a subgraph hitting problem where all subgraphs are of size bounded by a function of $t$. The computation and utility of treedepth modulators were studied from the perspective of parameterized complexity (see, e.g., \cite{bougeret2019much,eiben2022lossy,gajarsky2017kernelization}).

\end{document}